\newcommand{\inv}{\leftarrow}
\title{Minimization I.I.D. Prophet Inequality via Extreme Value Theory: A Unified Approach}
\author{
Vasilis Livanos
\thanks{EPFL, Lausanne, Switzerland \& Archimedes AI, Athens, Greece {\tt (vasileios.livanos@epfl.ch)}. Work on this paper partially supported by NSF grant CCF-1750436.}
\and
Ruta Mehta
\thanks{Dept.\ of Computer Science, University of Illinois at Urbana-Champaign, IL 61801. {\tt rutameht@illinois.edu}}
}
\date{}
\begin{document}
\maketitle
\thispagestyle{empty}

\begin{abstract}
The \emph{I.I.D. Prophet Inequality} is a fundamental problem in optimal stopping theory where, given $n$ independent random variables $X_1,\dots,X_n$ drawn from a known distribution $\cD$, one has to decide at every step $i$ whether to stop and accept $X_i$ or discard it forever and continue. The goal is to maximize (or minimize) the selected value and compete against the all-knowing prophet. For the maximization setting, a tight constant-competitive guarantee of $\approx 0.745$ is well-known (Correa, Foncea, Hoeksma, Oosterwijk, Vredeveld, 2019), whereas the minimization setting is qualitatively different: the optimal constant is distribution-dependent and can be arbitrarily large (Livanos and Mehta, 2024).

In this paper, we provide a novel framework via the lens of \emph{Extreme Value Theory} to analyze optimal threshold algorithms. We show that the competitive ratio for the minimization setting has a \emph{closed form} described by a function $\Lambda$, which depends only on the \emph{extreme value index} $\gamma$; in particular, it corresponds to $\Lambda(\gamma)$ for $\gamma \leq 0$. Despite the contrast of the optimal guarantees for maximization and minimization, our framework turns out to be universal and is able to recover the results of (Kennedy and Kertz, 1991) for the maximization case as well. Surprisingly, the optimal competitive ratio for the maximization setting is given by the same function $\Lambda(\gamma)$, but for $\gamma \geq 0$. Along the way, we obtain several results on the algorithm and the prophet's objectives from the perspective of extreme value theory, which might be of independent interest.


We next study \emph{single-threshold} algorithms for the minimization setting. Again, using the extreme value theory, we generalize the results of (Livanos and Mehta, 2024) which hold only for special classes of distributions, and obtain poly-logarithmic in $n$ guarantees on the competitive ratio. Finally, we consider the $k$-multi-unit prophet inequality in the minimization setting and show that there exist constant-competitive single-threshold algorithms when $k \geq \log{n}$.

\end{abstract}

\medskip
\noindent
{\small \textbf{Keywords:} prophet inequality, extreme value theory, large market setting
}
\medskip
\medskip
\medskip
\noindent

\clearpage
\setcounter{page}{1}

\section{Introduction}\label{sec:introduction}

The {\em prophet inequality} is the following classical problem in optimal stopping theory, first introduced by Krengel, Sucheston and Garling in 1977 \cite{kren-such, kren-such2}:  
one is presented with \emph{take-it-or-leave-it} rewards $X_1, \dots, X_n$ in an online manner, where each $X_i$ is drawn from a known distribution $\cD_i$, independently from the other rewards, and can stop at any point and collect the last reward seen. The inequality ensures the existence of a \emph{stopping strategy} $S$ for any arrival order of the random variables, with expected reward at least half that of a \emph{prophet} who can see the realizations of all the $X_i$'s upfront and thus can always select the maximum, {\em i.e.,}  $\E\brk{S} \geq \f{1}{2} \E\brk{\max_i X_i}$. The \emph{competitive ratio} $\frac{\E\brk{S}}{\E\brk{\max_i X_i}}$ quantifies the loss that the algorithm incurs by observing the realizations sequentially instead of all at once. A simple example shows that the factor of $\f{1}{2}$ is tight in this general setting.

A special case of the prophet inequality that is of significant interest is the case where the random variables are \emph{independent and identically distributed (I.I.D.)}, i.e. $\cD_i = \cD$ for all $i$. For this setting, Hill and Kertz \cite{hill-kertz} showed the existence of a simple threshold strategy that is optimal. They showed it achieves a competitive ratio of $1 - \f{1}{e}$, as well as providing a corresponding upper bound of $\approx 0.745$. Subsequent work \cite{abolh} improved the analysis of the optimal algorithm and showed a lower bound of $0.738$, before Correa, Foncea, Hoeksma, Oosterwijk and Vredeveld \cite{correa-iid} obtained the tight factor of $0.745$.

These results, and their variations and generalizations, have found extensive applications. Starting with the work of Hajiaghayi, Kleinberg and Sandholm \cite{haji} and later Chawla, Hartline, Malec and Sivan \cite{ChawlaHMS}, the prophet inequality has been connected to the design of simple yet approximately optimal sequential posted-price mechanisms. Due to the significance of such problems in Bayesian mechanism design, the study of prophet inequalities has seen a tremendous surge in the last decade \cite{klein-wein,EhsaniHKS18,Dutting2,dutting-combinatorial,esf-prophsec,azar,correa-random,correa-iid,renato-iid,correa,abolh,CorreaPricing,ma-tight-k-prophet,willma-random,rs,chek-liv,better-tightness,sample-pi-1,sample-pi-2,sample-pi-3}.

Recently, Livanos and Mehta \cite{liv-mehta-cpi} initiated the study prophet inequalities for \emph{cost minimization}. They study the I.I.D. setting, which is similar to the classical prophet inequality with the only differences being that instead of trying to maximize the selected value, the algorithm aims to select a value as small as possible and compares against a prophet who can always select the minimum and obtain $\E\brk{\min_i X_i}$. Furthermore, in this setting, both the algorithm and the prophet are forced to select a realization -- in other words, the constraint is upwards-closed instead of downwards-closed -- and the competitive ratio is always at least $1$. \cite{liv-mehta-cpi} shows that even for $n = 2$ I.I.D. random variables drawn from a fixed distribution, the competitive ratio can be unbounded. This is in stark contrast with the maximization setting where a constant competitive ratio is known even when the random variables are drawn from different distributions. Livanos and Mehta then study the optimal strategy, which is given by a dynamic programming algorithm, and show that it achieves a (distribution-dependent) constant competitive ratio for a special class of distributions they call \emph{entire} distributions. They provide a closed form for this competitive ratio, but their results fail to generalize to all distributions. 

The known analysis framework and techniques for maximization and minimization settings differ significantly. On the other hand, the two settings seem dual-like, raising a natural question: 

\begin{center}
{\em Does there exist a unified framework through which we can obtain the optimal approximation ratios for both the minimization and the maximization setting, despite their apparent difference?}
\end{center}

In this paper, we answer the above question affirmatively. We design an analysis framework via the lens of \emph{Extreme Value Theory (EVT)} for the \emph{large market} case, where the distribution $\cD$ is independent of $n$, and we investigate the competitive ratio's behaviour as $n$ goes to infinity, for both the maximization and minimization settings. We call this the \emph{Asymptotic Competitive Ratio (ACR)}. Next, we discuss basic concepts from Extreme Value Theory, before continuing with a brief overview of our results and techniques.

\subsection{Extreme Value Theory}\label{sec:evt}

Our analysis assumes that $\cD$ belongs to a special class of distributions, for which maxima and minima of a sample converge in distribution. However, this assumption is quite mild since the set of such distributions is dense \cite{evt-in-general}. Furthermore, one cannot hope to obtain a unified analysis for a more general class of distributions, since the maxima and minima of other distributions do not even converge in distribution and thus even the prophet's objective defies a closed-form description. For an introduction to Extreme Value Theory see \cite{dehaan-ferreira-evt}.

Convergence of the maxima and minima of a distribution is captured by the celebrated Extreme Value Theorem, also known as the Fisher-Tippett-Gnedenko Theorem. Recall that for a distribution $\cD$ with cdf $F(x)$, the cdf of the maximum of $n$ samples is $F^n(x)$ and the cdf of the minimum of $n$ samples is $1 - \prn{1 - F(x)}^n$.

\begin{theorem}[Extreme Value Theorem (\cite{fisher-tippett-evt, gnedenko-evt})]\label{thm:gnedenko}
Let $X_1, \dots, X_n$ be a sequence of I.I.D. random variables with cumulative distribution function $F$. Suppose that there exist two sequences $a_n > 0, b_n \in \R$ such that the following limit converges to a non-degenerate distribution function
\[
\lim_{n \to \infty} {F^n(a_n x + b_n)} = G_\gamma(x).
\]
Then, $G_\gamma(x)$ must be of the form
\[
G_\gamma(x) = \begin{cases}
\exp\prn{-\prn{1+\gamma x}^{-\f{1}{\gamma}}}, & \text{if } \gamma \neq 0, \\
\exp\prn{-\exp\prn{-x}}, & \text{if } \gamma = 0,
\end{cases}
\]
for all $1+\gamma x > 0$.

Similarly, suppose that there exist two sequences $a'_n > 0, b'_n \in \R$ such that the following limit converges to a non-degenerate distribution function
\[
\lim_{n \to \infty} {\prn{1 - F(a_n x + b_n)}^n} = G^*_\gamma(x).
\]
Then $G^*_\gamma(x)$ must be of the form
\[
G^*_\gamma(x) = \begin{cases}
\exp\prn{-\prn{1-\gamma x}^{-\f{1}{\gamma}}}, & \text{if } \gamma \neq 0, \\
\exp\prn{-\exp\prn{x}}, & \text{if } \gamma = 0.
\end{cases}
\]
for all $1-\gamma x > 0$.
\end{theorem}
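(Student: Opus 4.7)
My approach follows the classical Fisher--Tippett--Gnedenko strategy in three steps: first, show that any non-degenerate limit $G_\gamma$ must be \emph{max-stable}; second, convert max-stability into a functional equation on $\phi(x) := -\log G_\gamma(x)$ and classify its solutions; third, deduce the minimum statement from the maximum one via the substitution $X_i \mapsto -X_i$.

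For the max-stability step, suppose $F^n(a_n x + b_n) \to G(x)$. Applying this along the subsequence $nk$ for any positive integer $k$ gives $F^{nk}(a_{nk} x + b_{nk}) \to G(x)$. On the other hand, $F^{nk}(a_n x + b_n) = \brk{F^n(a_n x + b_n)}^k \to G^k(x)$. Both limits are non-degenerate distribution functions, so Khintchine's convergence-of-types theorem produces constants $\alpha_k > 0$ and $\beta_k \in \R$ with $G(x) = G^k(\alpha_k x + \beta_k)$ for every $k \geq 1$.

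Taking logarithms, max-stability becomes $\phi(\alpha_k x + \beta_k) = \phi(x)/k$. Composing the transformations shows $\alpha_{jk} = \alpha_j \alpha_k$, and a standard interpolation argument (using monotonicity of $\phi$ and density of rationals in $(0, \infty)$) extends this to a continuous one-parameter family: there exist $\alpha(t) > 0, \beta(t) \in \R$ with $\phi(\alpha(t) x + \beta(t)) = \phi(x)/t$ for all $t > 0$, and with $\alpha$ continuous and multiplicative. Continuous multiplicative functions on $(0,\infty)$ are power functions, so $\alpha(t) = t^{-\gamma}$ for some $\gamma \in \R$. Substituting back and differentiating at $t = 1$ yields a first-order ODE for $\phi$; its solutions split into the regimes $\gamma \neq 0$ and $\gamma = 0$, giving $\phi(x) = (1 + \gamma x)^{-1/\gamma}$ on $\{1 + \gamma x > 0\}$ and $\phi(x) = e^{-x}$ respectively, up to an affine reparametrization of $x$ that can be absorbed into $(a_n, b_n)$. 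Exponentiating recovers the stated forms of $G_\gamma$.

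For the minimum case, observe that $\min_i X_i = -\max_i(-X_i)$, and $-X_i$ has cdf $\tilde{F}(y) = 1 - F(-y)$. Applying the maximum statement to the $-X_i$ with normalizing sequences $a'_n, b'_n$ and substituting $x \mapsto -x$ yields $G^*_\gamma(x) = G_\gamma(-x)$, which after a sign change in $\gamma$ produces $\exp\prn{-(1-\gamma x)^{-1/\gamma}}$ and $\exp(-e^x)$. The main obstacle throughout is the passage from the discrete max-stability indexed by integers $k \geq 1$ to a continuous one-parameter family: one must verify that $\alpha(t)$ does not jump, handle the $\gamma = 0$ case separately (where $\alpha \equiv 1$ and the nontrivial information lives entirely in $\beta(t)$), and treat the cases of bounded versus unbounded support of $G$ symmetrically so that the three classical max-domains (Fr\'echet for $\gamma > 0$, Gumbel for $\gamma = 0$, reverse Weibull for $\gamma < 0$) unify cleanly into the single-parameter expression stated.
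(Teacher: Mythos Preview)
The paper does not prove this theorem at all: it is stated as the classical Fisher--Tippett--Gnedenko theorem with a citation to \cite{fisher-tippett-evt, gnedenko-evt} and is used throughout as a black box. So there is no ``paper's own proof'' to compare against.

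Your proposal is the standard textbook argument (max-stability via convergence of types, then classification of max-stable laws through the functional equation for $\phi = -\log G$, then duality for minima). As a self-contained proof sketch it is essentially correct; the only points I would flag are minor. First, the passage from integer $k$ to continuous $t$ deserves a bit more care than ``a standard interpolation argument'': one typically defines $\alpha(t), \beta(t)$ via the convergence-of-types theorem applied to $F^{\lfloor nt \rfloor}$ and then argues continuity from monotonicity of $G$, rather than interpolating rationals directly. Second, in the minima reduction you write $G^*_\gamma(x) = G_\gamma(-x)$ ``after a sign change in $\gamma$''; be careful here, since the paper's parametrization of $G^*_\gamma$ uses the \emph{same} $\gamma$ as for the corresponding maxima class (compare the two displayed formulas in the statement), so the reflection $x \mapsto -x$ alone already gives the stated form without reindexing $\gamma$. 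Neither of these is a real gap, just places where the write-up could be tightened.
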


In the theorem above, $\gamma$ is called the \emph{extreme value index}, and it partitions the space of all distributions of the theorem into equivalence classes based on which $G_\gamma$ their maxima and minima converge to.
\begin{itemize}
\item For $\gamma > 0$, we obtain the \emph{Fr\'{e}chet class} of distributions.
\item For $\gamma = 0$, we obtain the \emph{Gumbel class} of distributions.
\item For $\gamma < 0$, we obtain the \emph{Reverse Weibull class} of distributions.
\end{itemize}

One can think of the Extreme Value Theorem as the analogue of the Central Limit Theorem (CLT) for maxima and minima instead of averages; in the same way that the CLT ensures that the (properly scaled) average of $n$ random variables drawn independently from the same distribution $\cD$ converges in distribution to a standard Gaussian distribution, as long as their variance is finite, the Fisher-Tippett-Gnedenko Theorem ensures that, if the (properly scaled) maxima and minima of $n$ random variables drawn independently from $\cD$ converge in distribution, then they can only converge to a distribution of the form of $G_\gamma$ for maxima $G^*_\gamma$ for minima.

For every distribution $\cD$ of Theorem~\ref{thm:gnedenko}, we can associate a single extreme value index $\gamma$. We then say that $\cD$ belongs in the \emph{domain of attraction} of $D_\gamma$.

\begin{definition}[Domain of Attraction]\label{def:domain-of-attraction}
When considering the I.I.D. Max-Prophet Inequality, we say that a distribution $\cD$ with cdf $F$ belongs in the domain of attraction of $D_\gamma$, indicated as $\cD \in D_\gamma$, if there exist two sequences $a_n > 0, b_n \in \R$ such that the following limit converges to a non-degenerate distribution function
\[
\lim_{n \to \infty} {F^n(a_n x + b_n)} = G_\gamma(x).
\]
Similarly, when considering the I.I.D. Min-Prophet Inequality, we say that a distribution $\cD$ with cdf $F$ belongs in the domain of attraction of $D_\gamma$, indicated as $\cD \in D_\gamma$, if there exist two sequences $a'_n > 0, b'_n \in \R$ such that the following limit converges to a non-degenerate distribution function
\[
\lim_{n \to \infty} {\prn{1 - F(a'_n x + b'_n)}^n} = G^*_\gamma(x).
\]
\end{definition}

For simplicity, we usually write $F \in D_\gamma$ to indicate that the distribution $\cD$ which has cdf $F$ belongs in the domain of attraction of $D_\gamma$, and we say that $\cD$ satisfies or follows the EVT. Whether we refer to the I.I.D. Max-Prophet Inequality or the I.I.D. Min-Prophet Inequality will be clear from context.

To assist the reader, we provide an example of a standard distribution and its domains of attraction for maxima and minima.
\begin{example}
Let $F_U$ denote the uniform distribution, supported on $[0,1]$. Clearly, $F_U(x) = x$, for all $x \in [0,1]$. Let $a_n = 1/n$ and $b_n = 1 - 1/n$. Notice that
\[
\lim_{n \to \infty} F^n_U(a_n x + b_n) = \lim_{n \to \infty}  \prn{1 + \frac{x-1}{n}}^n = e^{x - 1} = G_{-1}(x).
\]
Thus, the uniform distribution follows the EVT for maxima with $\gamma = -1$. Similarly, let $a'_n = 1/n$ and $b'_n = 1/n$. Then,
\[
\lim_{n \to \infty} {\prn{1 - F_U(a'_n x + b'_n)}^n} = \lim_{n \to \infty} {\prn{1 - \frac{x+1}{n}}^n} = e^{-(1+x)} = G^*_{-1}(x).
\]
Thus, the uniform distribution also follows the EVT for minima with $\gamma = -1$\footnote{We note that, in general, the extreme value index $\gamma$ for maxima and minima need not be the same; for example, the exponential distribution with cdf $F(x) = 1 - e^{-x}$ follows the EVT for maxima with $\gamma = 0$ but for minima with $\gamma = -1$.}.
\end{example}

Before we proceed, we remark on our assumption of $\cD$ being a continuous distribution. The class of distributions that obey the Extreme Value Theorem contains both continuous and discrete distributions\footnote{A couple of examples of discrete distributions that obey the EVT are the discrete analogue of the Pareto distribution, which belongs to the domain of
attraction of the Fr\'{e}chet distribution for maxima, as well as the discrete analogue of the log-normal distribution, which belongs to the domain of attraction of the Gumbel distribution \cite{first-course-order-statistics}.}. However, several discrete distributions of significant interest, such as the Poisson or the geometric, fail to obey the EVT. To capture these, Shimura \cite{shimura-discrete-evt} studied the class of distributions that arise by discretizing a continuous distribution. He considered a sequence $\set{X_n}_{n \in \N}$ of random variables drawn I.I.D. from a continuous $\cD$ that obeys the EVT and is in the domain of attraction of the Gumbel or the Reverse Weibull distribution (i.e. for $\gamma \leq 0$). He then gave a necessary and sufficient condition for the discretized sequence $Y_n = \floor{X_n}$ to obey the EVT. Via this approach, he showed that the EVT can be extended to discrete distributions and interesting discrete distributions such as the Poisson or the geometric can be recovered as discretizations of continuous distributions obeying the EVT. Later, Hitz, Davis and Samorodnitsky \cite{discrete-evt-frechet} tackled the Fr\'{e}chet class of distributions, giving an alternative limiting (parametric) distribution for the discrete case: the \emph{Zipf-Mandelbrot distribution} (\textsc{ZMD}) \cite{mandelbrot-distribution}. They showed that, for every discrete distribution $\cD$ in the domain of attraction of the ZMD with parameter $\gamma / (1 + \gamma)$ where $\gamma \geq 0$, every continuous distribution $\cD'$ that agrees with $\cD$ on its discrete support has to belong to the domain of attraction of $D_\gamma$, i.e. $\cD'$ satisfies the EVT with parameter $\gamma$. For the remainder of our paper, we assume that $\cD$ is a continuous distribution. However, by the discussion above, our results extend to discrete distributions that satisfy the EVT (e.g. the discrete analogue of the Pareto distribution) or are discretizations of continuous distributions that satisfy the EVT (e.g. the Poisson or geometric distributions); we refer the reader to \cite{shimura-discrete-evt,discrete-evt-frechet} for further information on the extensions of the EVT for discrete distributions.

\subsection{Our Contributions and Techniques}\label{sec:results}

We study the asymptotic competitive ratio of the I.I.D. prophet inequality setting for both the optimal (dynamic programming) algorithm and the optimal single-threshold algorithm. We make three main contributions: (i) we obtain a {\em unified analysis} of the ACR for the optimal strategy, for both maximization and minimization settings, (ii) we study single-threshold algorithms for the minimization setting, showing that the optimal ACR of single-threshold algorithms is poly-logarithmic in $n$, and (iii) we study the $k$-multi-unit minimization prophet inequality, in which the algorithm has to select at least $k$ values, and show that there exist constant-competitive single-threshold algorithms when $k \geq \log{n}$.

Importantly, we note that Kennedy and Kertz \cite{kennedy-kertz} first investigated the ACR of the optimal strategy for the maximization I.I.D. prophet inequality for distributions that follow the EVT. They showed that for any distribution $F \in D_\gamma$, the ACR only depends on $\gamma$, and gave a closed form. Later, Correa, Pizarro and Verdugo \cite{jose-dana-victor-evt} studied the ACR of single-threshold algorithms for distributions that follow EVT. They also showed that the ACR of single-threshold algorithms depends only on $\gamma$ and gave a closed form for it. However, the literature is scarce in results for the minimization I.I.D. prophet inequality. Surprisingly, our first result shows that although the minimization and maximization problems are qualitatively poles apart, they can be connected through a unified analysis via extreme value theory. It provides an elegant closed-form characterization of the ACR for minimization and, at the same time, recovers the \cite{kennedy-kertz} result for maximization.

\paragraph{Optimal Threshold Algorithm.}
Our main contribution is to show that the behaviour of the optimal ACR for every distribution that satisfies the EVT theorem is governed by a unique function $\Lambda$ for both the maximization and minimization settings, which depends only on the \emph{extreme value index}. Interestingly, the positive domain of $\Lambda$ corresponds to the ACR for the maximization setting -- and is equal to the closed form of Kennedy and Kertz \cite{kennedy-kertz} -- and the negative domain of $\Lambda$ corresponds to the ACR for the minimization setting. Our main technical ingredient is a unified analysis of both settings via the lens of extreme value theory and the theory of regularly varying functions, through which we recover, simplify and generalize the results of \cite{kennedy-kertz} for the maximization case, which they obtained via different techniques. Our analysis also provides several technical results on the algorithm's and prophet's objectives, which might find further use.

Let $\Gamma(x)$ denote the Gamma function, $G_M(n)$ (resp. $G_m(n)$) denote the expected value of the optimal threshold algorithm for the I.I.D. Max-Prophet Inequality (resp. I.I.D. Min-Prophet Inequality) setting with $n$ random variables. Furthermore, let $\lambda_M = \lim_{n\to +\infty} \frac{G_M(n)}{\E\brk{\max_{i = 1}^n X_i}} \Bigl($resp. $\lambda_m = \lim_{n\to +\infty} \frac{G_m(n)}{\E\brk{\min_{i = 1}^n X_i}}\Bigr)$ denote the asymptotic competitive ratio in the I.I.D. Max-Prophet Inequality (resp. I.I.D. Min-Prophet Inequality) setting. For formal definitions and more information, see Section~\ref{sec:preliminaries}. What follows is our main theorem.

\begin{theorem}\label{thm:unified-acr}
Let $F \in D_\gamma$. Then,
\begin{itemize}
    \item for the I.I.D. Max-Prophet Inequality and every $\gamma \in \R$, the asymptotic competitive ratio of the optimal threshold policy as $n \to \infty$ is
    \[
    \lambda_M= \min\set{\frac{\prn{1 - \gamma}^{-\gamma}}{\Gamma\prn{1 - \gamma}}, 1}.
    \]
    \item for the I.I.D. Min-Prophet Inequality, we have that $\gamma \leq 0$, and the asymptotic competitive ratio of the optimal threshold policy as $n \to \infty$ is
    \[
    \lambda_m = \max\set{\frac{\prn{1 - \gamma}^{-\gamma}}{\Gamma\prn{1 - \gamma}}, 1}.
    \]
\end{itemize}
\end{theorem}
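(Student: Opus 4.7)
The plan is to give a unified analysis of both settings via a single backward dynamic-programming recursion, whose asymptotic behavior is extracted using EVT together with the theory of regularly varying functions. The two settings are essentially dual: the expected value of the optimal threshold algorithm for the maximization problem satisfies $v_n = v_{n-1} + \int_{v_{n-1}}^{x^*}\prn{1-F(x)}\,dx$, where $x^*$ is the right endpoint of $\text{supp}(F)$ and $v_0 = x_*$; the minimization analogue is $w_n = w_{n-1} - \int_{x_*}^{w_{n-1}}F(x)\,dx$ with $w_0 = x^*$. In both cases $G_M(n) = v_n$ and $G_m(n) = w_n$. The central idea is to feed the EVT asymptotics $n\prn{1-F(a_n x + b_n)} \to \prn{1+\gamma x}^{-1/\gamma}$ (and its symmetric lower-tail counterpart) into this recursion and read off the leading-order behavior.

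\paragraph{Fixed-point ansatz via regular variation.}
I would reformulate the recursion in terms of the tail-quantile function $U(t) = F^\inv\prn{1 - 1/t}$, which, for $F \in D_\gamma$, is of extended regular variation of index $\gamma$. Via the change of variables $x = U(s)$, using the identity $1 - F(U(s)) = 1/s$, the RHS of the max recursion becomes an integral of $U'(s)/s$, to which Karamata's theorem applies directly. Positing the ansatz $v_n \approx U(\beta n)$ for some $\beta > 0$, the two sides of the recursion both scale like $U(\beta n)/n$, and balancing their leading coefficients (for $\gamma < 1$) forces $\beta = 1/(1-\gamma)$. An analogous analysis of the minimization recursion, using the lower-tail quantile $U_m(t) = F^\inv(1/t)$, yields $w_n \approx U_m(\beta n)$ with the same $\beta = 1/(1-\gamma)$, which is ultimately why the same closed form governs both settings.

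\paragraph{Computing the ratio and boundary regimes.}
Using uniform integrability of the normalized maximum, which holds precisely when $\gamma < 1$, one obtains $\E\brk{\max_i X_i} \sim a_n\Gamma(1-\gamma) + b_n$, while the ansatz gives $v_n \sim a_n(1-\gamma)^{-\gamma} + b_n$ (since $U(n/(1-\gamma))/U(n) \to (1-\gamma)^{-\gamma}$ by $\gamma$-regular variation). Taking the ratio yields $\lambda_M = (1-\gamma)^{-\gamma}/\Gamma(1-\gamma)$. The parallel computation for the min recursion produces the same functional form. The $\min\set{\cdot,1}$ in the max formula handles the two boundary regimes where the closed form exceeds $1$: for $\gamma \geq 1$ the prophet's expected maximum is infinite and $\lambda_M = 1$ trivially, while for $\gamma \leq 0$ the support is bounded above and both the algorithm and the prophet saturate at $x^*$ so that $\lambda_M = 1$. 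The $\max\set{\cdot,1}$ in the min formula plays the symmetric role on the lower tail.

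\paragraph{Main obstacle.}
The most delicate step is rigorously upgrading the formal ansatz $v_n \approx U(\beta n)$ to a genuine asymptotic identity, rather than a heuristic balance of scales. This requires uniform Karamata-type (Potter) bounds on $U$ and its derivative, combined with a stability or contraction argument ensuring that the discrete recursion tracks its continuous fixed point; one must additionally control the lower-order $o(\cdot)$ terms that are harmless at the level of $v_n$ but nontrivial inside the tail integral. A secondary technical hurdle is uniform integrability of $\prn{\max_i X_i - b_n}/a_n$, needed to convert distributional EVT convergence into convergence of means, and which breaks down for $\gamma \geq 1$, motivating the $\min\set{\cdot,1}$. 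Finally, to make the framework truly ``unified'' one must verify that the lower-tail EVT machinery mirrors the upper-tail one across all three domains of attraction, with the appropriate sign conventions, so that the identical closed form $(1-\gamma)^{-\gamma}/\Gamma(1-\gamma)$ emerges on both sides.
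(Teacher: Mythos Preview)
Your proposal is essentially the paper's approach: the same backward DP recursion, the same reformulation through the tail-quantile function (the paper routes this through the cumulative hazard rate, but $H^{\inv}(-\log x) = F^{\inv}(1-x) = U(1/x)$, so it is the same object), the same application of Karamata's theorem to the tail integral, the same EVT computation of the prophet's value as $\Gamma(1-\gamma)\,U(n)$, and the same case split over the sign of $\gamma$ to produce the $\min/\max$ with $1$.

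The one place your framing diverges slightly is the fixed-point step. You posit the ansatz $v_n \approx U(\beta n)$ and propose to pin down $\beta = 1/(1-\gamma)$ by balancing and then invoke a stability/contraction argument; the paper instead works directly with $\lambda(n) = G(n)/\mu$, uses an a priori bound on $\lambda(n)$ (from the $\tfrac12$-prophet inequality for max and the poly-log single-threshold bound for min) to conclude $\lambda(n)-\lambda(n-1)\to 0$, and reads off the fixed-point equation $e^{-H(G(n-1))} \approx (1-\gamma)/n$, i.e.\ $G(n-1)\approx F^{\inv}(1-(1-\gamma)/n)$, which is exactly your $U(n/(1-\gamma))$. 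Your ``main obstacle'' paragraph correctly identifies this step as the delicate one; the paper's resolution via the external competitive-ratio bound is one clean way to supply the stability you anticipate needing.
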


We briefly discuss the unique function that characterizes the ACR. Let $\Lambda(\gamma) \triangleq \frac{\prn{1 - \gamma}^{-\gamma}}{\Gamma(1 - \gamma)}$. $\Lambda$ can be seen in Figure~\ref{fig:acr}.
\begin{figure}[ht]
        \centering
        \begin{minipage}{\textwidth}
            \centering
            \includegraphics[width=0.3\linewidth]{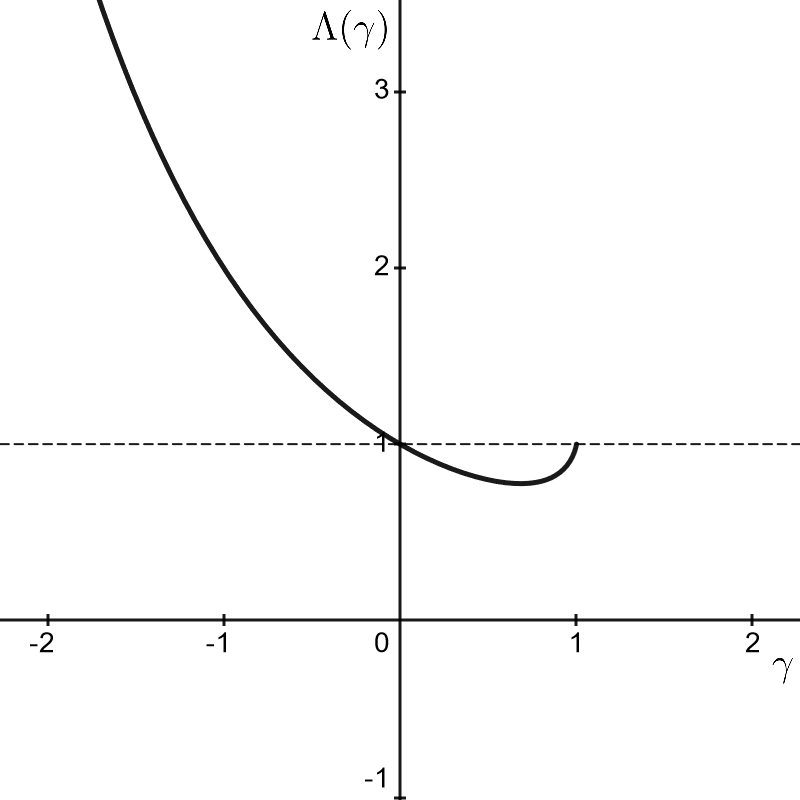}
            \caption{$\Lambda(\gamma)$}
            \label{fig:acr}
        \end{minipage}%
\end{figure}

Our main theorem essentially states that the interesting cases of the ACR for the I.I.D. Max-Prophet Inequality and the I.I.D. Min-Prophet Inequality, in which it is not equal to $1$, are captured by the positive and negative domains of $\Lambda$, respectively. i.e. $\lambda_M = \Lambda(\gamma)$ for $\gamma \geq 0$ in the I.I.D. Max-Prophet Inequality and $\lambda_m = \Lambda(\gamma)$ for $\gamma \leq 0$ in the I.I.D. Min-Prophet Inequality.

In the I.I.D. Max-Prophet Inequality case, we have that $\Lambda(\gamma)$ for $\gamma \geq 0$ is always at least $0.776$. Thus, as a corollary, we obtain that the asymptotic competitive ratio of the I.I.D. Max-Prophet Inequality is at least $\approx 0.776$, which is slightly larger than the factor of $\approx 0.745$ obtained by \cite{correa-iid} (where the distribution can depend on $n$). To understand how $\Lambda(\gamma)$ grows as $\gamma \to -\infty$ in the I.I.D. Min-Prophet Inequality case, consider Stirling's approximation for the Gamma function,
$
\Gamma(z) \approx \frac{\sqrt{2 \pi}}{z} \prn{\frac{z}{e}}^z.
$
Replacing this in the expression of $\Lambda(\gamma)$, we have
\[
\Lambda(\gamma) = \frac{\prn{1-\gamma}^{-\gamma}}{\Gamma\prn{1-\gamma}} \approx 
\frac{\prn{1-\gamma}^{-\gamma}}{\frac{\sqrt{2 \pi}}{1 - \gamma} \prn{\frac{1-\gamma}{e}}^{1-\gamma}} = \bigTh{e^{-\gamma}}.
\]
Thus, the dependence of $\Lambda$ on $\gamma$ is (approximately) exponential, for $\gamma \to -\infty$.

\paragraph{Techniques.}
Our approach to show Theorem~\ref{thm:unified-acr} relies on three steps. For simplicity, assume below that $\gamma > 0$ for the Max-Prophet Inequality and $\gamma < 0$ for the Min-Prophet Inequality; for all other values of $\gamma$, similar results can be obtained. We use $\approx$ to denote asymptotic equality -- see Section~\ref{sec:preliminaries} for more details.
\begin{enumerate}
    \item[(i)] First, we show that the prophet's value, i.e. the expected maximum or minimum, can be (approximately) expressed as the $n$-th quantile of the distribution (Lemma~\ref{lem:mu-approx}), with a multiplicative factor that depends on $\gamma$; specifically:
    \begin{equation}\label{eq:intro-1}
    \E\brk{\max_{i = 1}^n X_i} \approx \Gamma(1 - \gamma) F^{\inv}\prn{1 - \frac{1}{n}} \text{ and } \E\brk{\min_{i = 1}^n X_i} \approx \Gamma(1 - \gamma) F^{\inv}\prn{\frac{1}{n}}.
    \end{equation}

    \item[(ii)] Next, we aim to express the expected value of the optimal threshold algorithm as a quantile of the distribution (proof of Theorem~\ref{thm:unified-acr}), in order to relate it to the expression above:
    \begin{equation}\label{eq:intro-2}
    G_M(n) \approx F^{\inv}\prn{1 - \frac{1-\gamma}{n}} \text{ and } G_m(n) \approx F^{\inv}\prn{\frac{1-\gamma}{n}}.
    \end{equation}

    \item[(iii)] Finally, we relate the two expressions above (Lemma~\ref{lem:mult-quantile-approx}), by characterizing exactly how two different quantiles of a distribution that follows EVT are related:
    \begin{equation}\label{eq:intro-3}
    F^{\inv}\prn{1 - \frac{c}{n}} \approx c^{-\gamma} F^{\inv}\prn{1 - \frac{1}{n}}, \text{ for maxima and } F^{\inv}\prn{\frac{c}{n}} \approx c^{-\gamma} F^{\inv}\prn{\frac{1}{n}}, \text{ for minima}.
    \end{equation}
\end{enumerate}
The approximate nature of the expressions above holds with equality as $n \to \infty$, a fact that we use to simplify our analysis. To show these simplified expressions, we make use of the \emph{hazard rate} of the distribution; specifically, we rewrite the expected value of the optimal threshold algorithm via the antiderivative of the hazard rate -- for more information on the hazard rate and its formal definition see Section~\ref{sec:preliminaries}. To analyze this new expression, we make heavy use of the theory of \emph{regularly varying functions}, originally developed by Karamata. Regularly varying functions are, roughly speaking, functions that, asymptotically, behave like power functions. We define regularly varying functions in Section~\ref{sec:preliminaries} and provide the basic information that is useful in our proofs -- for more information on the topic of regular variation see \cite{resnick-rv} and \cite{bingham-rv}.

To show (i), we make use of standard facts about the expected value of the extreme value distributions $G_\gamma$ and $G^*_\gamma$, in conjunction with an analysis of the multiplicative scaling factor of $a_n$ from Theorem~\ref{thm:gnedenko} via the lens of regular variation. The main part of our proof, and the most technically involved, is in showing (ii). To do this, we make use of one of the most useful results for regularly-varying functions, namely Karamata's theorem (Lemma~\ref{lem:karamatas}). This theorem provides a simple expression for a complicated integral via the use of EVT; as it turns out, one can rewrite the optimal algorithm's objective in the form of this integral, and then use Karamata's theorem to obtain \eqref{eq:intro-2}. One noteworthy by-product of our analysis is an expression describing the ratio of the expected values of the maximum of $n$ and $m$ samples from the same distribution (Corollary~\ref{cor:mu-ratio-approx-gen}), which may be of independent interest. Finally, (iii) is a straightforward corollary of the fact that $F^{\inv}$ is a regularly varying function.

The choice of domain for $\cD$ in our analysis is done carefully and without loss of generality. Let $\supp(\cD) = [x_*, x^*)$, where $x_* \geq 0$ and $x^* \leq +\infty$. Notice that if $x_* \neq 0$, then in the I.I.D. Min-Prophet Inequality, we can obtain a $(1+\eps)$-competitive ratio, for any $\eps$ and any distribution, by setting a threshold equal to $x_* + \eps$. As $n \to \infty$, the probability that there exists a realization below $x_*+\eps$ goes to $1$. For this reason, we assume that $\supp(\cD) = [0, x^*)$ without loss of generality. In the I.I.D. Max-Prophet Inequality, for distributions with bounded domain, if $x^* < +\infty$, we can obtain a $(1-\eps)$-competitive ratio, for any $\eps$ and any distribution, by setting a threshold equal to $x^*-\eps$. Again, as $n \to \infty$, the probability that there exists a realization above $x^*-\eps$ goes to $1$. This fact will show up in our analysis. Also, notice that the Extreme Value Theorem for minima shows that $\gamma > 0$ implies that the left-most endpoint of the support of $\cD$ has to be $-\infty$. Since these distributions are not studied in the prophet inequality setting, for the I.I.D. Min-Prophet Inequality it must be that $\gamma \leq 0$.

\paragraph{Monotone Hazard Rate (MHR) Distributions.} Distributions with monotonically increasing hazard rate have been extensively studied in the mechanism design literature due to their sought after properties and applications (e.g., see \cite{giannakopoulos-mhr, babaioff-pricing, bhattacharya-pricing, cai-pricing, daskalakis-mhr, revenue-sample-mhr, giannakopoulos-markets, hartline-simple}). These are known as {\em monotone hazard rate (MHR)} (also known as {\em increasing failure rate (IFR)}) distributions. For our third result, we show that for the special case of MHR distributions, $\gamma \leq 0$ for maxima and $\gamma \geq -1$ for minima. Due to the former and Theorem~\ref{thm:unified-acr}, we recover the result of \cite{kesselheim-mhr-ppm} which states that the competitive ratio of the I.I.D. Max-Prophet Inequality goes to $1$ as $n$ goes to infinity and the distribution is MHR. Due to the latter and Theorem~\ref{thm:unified-acr}, we recover the result of \cite{liv-mehta-cpi} which states that the competitive ratio of the I.I.D. Min-Prophet Inequality is at most $2$ as $n$ goes to infinity and the distribution is MHR. Therefore, our analysis unifies and generalizes both prior results.

\begin{theorem}\label{thm:mhr-acr}
Let $\cD$ be an MHR distribution in the domain of attraction of $D_\gamma$. Then, 
for the I.I.D. Max-Prophet Inequality we have $\gamma \leq 0$ and for the I.I.D. Min-Prophet Inequality we have $\gamma \geq -1$.
\end{theorem}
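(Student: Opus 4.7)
The plan is to handle each bound by contradiction, using tail (or boundary) characterizations of the excluded domains of attraction together with what monotone hazard rate says about the density.

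For the maximum case ($\gamma \leq 0$), suppose for contradiction that $F \in D_\gamma$ with $\gamma > 0$, so $F$ lies in the Fr\'echet class. By the standard Karamata characterization of this class, $1 - F$ is regularly varying at $+\infty$ with index $-1/\gamma < 0$, hence decays only polynomially. On the other hand, MHR means the hazard rate $h(x) = f(x)/(1-F(x))$ is non-decreasing, and since $\cD$ is non-degenerate we may fix some $x_0$ with $h(x_0) > 0$. The identity $1-F(x) = (1-F(x_0))\exp\!\left(-\int_{x_0}^x h(t)\,dt\right)$ combined with $h(t) \geq h(x_0) > 0$ for $t \geq x_0$ yields the exponential upper bound $1-F(x) \leq (1-F(x_0))\,e^{-h(x_0)(x-x_0)}$ for all $x \geq x_0$. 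Exponential decay is incompatible with polynomial decay, so $\gamma \leq 0$.

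For the minimum case ($\gamma \geq -1$), the fact that $\gamma \leq 0$ was already observed in the Techniques discussion preceding the theorem: $\gamma > 0$ for minima would force the left endpoint of $\supp(\cD)$ to be $-\infty$, contradicting $\supp(\cD) \subseteq [0, x^*)$. For the lower bound, suppose for contradiction that $\gamma < -1$. The standard characterization of the Weibull-for-minima class gives that $F$ is regularly varying at $0^+$ with index $\alpha := -1/\gamma \in (0, 1)$, i.e., $F(x) = x^\alpha L(1/x)$ for some slowly varying $L$. The monotone density theorem, applied to the nondecreasing regularly varying $F$, then implies that $f = F'$ is regularly varying at $0^+$ with index $\alpha - 1 < 0$, so $f(x) \to \infty$ as $x \to 0^+$. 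Since $1 - F(x) \to 1$, we obtain $h(x) \to \infty$ as $x \to 0^+$, contradicting the fact that $h$ is non-decreasing (hence bounded on $(0, x_0]$ by $h(x_0)$ for any fixed interior $x_0$). Therefore $\gamma \geq -1$.

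The main technical content of the plan is invoking the Karamata-style tail/boundary characterizations of the Fr\'echet and Weibull domains together with the monotone density theorem; these are standard pieces of the regularly-varying-function toolkit that the paper already adopts, but are not written out in the excerpt. Once those are in hand, both implications reduce to short concrete arguments: exponential versus polynomial decay in the max case, and density blow-up near $0$ versus a non-decreasing hazard rate in the min case.
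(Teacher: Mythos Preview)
Your max argument is correct and is essentially the dual of the paper's: the paper bounds the quantile $U_M(1/x)=F^{\inv}(1-1/x)=O(\log x)$ from MHR and contradicts $U_M(1/\cdot)\in RV_\gamma$, whereas you bound the tail $1-F$ exponentially and contradict $1-F\in RV_{-1/\gamma}$. Same obstruction, viewed from the CDF side instead of the quantile side.

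For the min case the idea is right but the appeal to the monotone density theorem is not clean: that theorem requires the \emph{density} $f$ to be ultimately monotone near $0$, not $F$ (a CDF is always nondecreasing, so that hypothesis is vacuous), and MHR alone does not force $f$ to be monotone near the left endpoint. You can bypass the theorem entirely. Since $h$ is nondecreasing, $h(x)\le h(x_0)=:C$ on $(0,x_0]$, hence $f(x)=h(x)\bigl(1-F(x)\bigr)\le C$ there, and so $F(x)=\int_0^x f\le Cx$. But $F\in RV_\alpha$ at $0$ with $\alpha\in(0,1)$ gives, via Potter bounds with any $\varepsilon<1-\alpha$, that $F(x)/x\to\infty$ as $x\to 0^+$, contradicting $F(x)\le Cx$. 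After this repair your argument is again the dual of the paper's, which instead bounds $U_m(x)=F^{\inv}(x)=O(-\log(1-x))$ near $0$ and shows this is incompatible with $U_m\in RV_{-\gamma}$ for $-\gamma>1$. So the two proofs differ only in whether one works on the CDF/density or on the quantile function; your route is arguably a bit more direct once the monotone density detour is removed.
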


\paragraph{Single-Threshold Algorithms.}
For the rewards maximization setting, the competitive ratio of $\nicefrac{1}{2}$ in the classical prophet inequality is achievable through simple single-threshold algorithms~\cite{sam-cahn, klein-wein} of the form ``accept the first $X_i \geq \tau$ for some threshold $\tau$'', and is known to be tight. Furthermore, there exist simple online algorithms that achieve constant-factor approximations even for general multi-dimensional settings with complicated constraints (e.g. matroids\footnote{A matroid is a non-empty downward-closed feasibility constraint where if $A, B \subseteq N$ are both feasible and $|A| < |B|$, there exists an element $e \in B \setminus A$ such that $A \cup \{e\}$ is also feasible. The feasible sets of a matroid are called \emph{independent sets}.}, matchings, and more) \cite{klein-wein, Alaei14, willma-adversarial, gravin-bipartite, ezra}. For the cost minimization setting, \cite{liv-mehta-cpi} shows a qualitatively different result. Specifically, the authors obtain tight competitive ratios, for the special case of entire distributions, that are poly-logarithmic in $n$.

We generalize the results of \cite{liv-mehta-cpi} and obtain tight poly-logarithmic competitive ratios via single-threshold algorithms for distributions $F \in D_\gamma$ for some $\gamma$. We note that entire distributions form a special case as every entire distribution is in the domain of attraction of some $\gamma$. As a corollary of our analysis, we show that the class of distributions for which single-threshold algorithms achieve a constant competitive ratio in the I.I.D. Min-Prophet Inequality is exactly the distributions $F \in D_0$.

\begin{theorem}\label{thm:single-threshold-evt}
For the I.I.D. Min-Prophet Inequality and any distribution $F \in D_\gamma$, where $\gamma < 0$, there exists a single-threshold algorithm that achieves a competitive ratio of $\bigO{\prn{\log{n}}^{-\gamma}}$. Furthermore, this factor is tight; no single-threshold algorithm can achieve a $\ltlo{\prn{\log{n}}^{-\gamma}}$ competitive ratio for all instances.
\end{theorem}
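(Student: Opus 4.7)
Since $F \in D_\gamma$ with $\gamma < 0$, the quantile function $F^{\inv}$ is regularly varying at $0$ with index $-\gamma > 0$, the prophet's value satisfies $\E\brk{\min_i X_i} \approx \Gamma(1-\gamma) F^{\inv}\prn{\f{1}{n}}$ by~\eqref{eq:intro-1}, and quantiles scale as $F^{\inv}\prn{\f{c}{n}} \approx c^{-\gamma} F^{\inv}\prn{\f{1}{n}}$ by~\eqref{eq:intro-3}. For any single-threshold policy with threshold $\tau$, which accepts the first $X_i \leq \tau$ and otherwise is forced to take $X_n$, we have, with $\mu := \E\brk{X}$,
\[
\E\brk{\mathrm{ALG}(\tau)} \;\leq\; \tau\cdot\prn{1-(1-F(\tau))^{n-1}} + \mu\,(1-F(\tau))^{n-1}.
\]
The plan is to pick $\tau$ so that the first term supplies the $(\log n)^{-\gamma}$ overhead while the tail-failure mass vanishes, and then to exhibit a hard distribution against which no single-threshold policy can do asymptotically better.

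For the upper bound I would set $\tau := F^{\inv}\prn{\f{\alpha\log n}{n}}$ with $\alpha := -\gamma + 1$. A slowly-varying extension of~\eqref{eq:intro-3} (invoking the uniform convergence theorem of Karamata, valid because $\alpha\log n$ grows more slowly than any power of $n$) gives $\tau \approx (\alpha\log n)^{-\gamma} F^{\inv}\prn{\f{1}{n}}$, while $(1-F(\tau))^{n-1} \leq n^{-\alpha(1-o(1))}$. Dividing by $\E\brk{\min_i X_i} \approx \Gamma(1-\gamma) F^{\inv}\prn{\f{1}{n}}$, the first contribution becomes $\alpha^{-\gamma}(\log n)^{-\gamma}/\Gamma(1-\gamma) = \bigO{(\log n)^{-\gamma}}$; the tail contribution is $\mu n^{-\alpha}/\brk{\Gamma(1-\gamma) F^{\inv}\prn{\f{1}{n}}}$, and since $1/F^{\inv}(1/n)$ is regularly varying in $n$ with index $\gamma$ (so grows like $n^{|\gamma|+o(1)}$), the choice $\alpha = |\gamma|+1$ makes this $\bigO{n^{-1+o(1)}} = o(1)$. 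Hence $\E\brk{\mathrm{ALG}}/\E\brk{\min_i X_i} = \bigO{(\log n)^{-\gamma}}$.

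For matching tightness I would exhibit a single hard distribution in $D_\gamma$, namely the power law $F(x) = x^{1/|\gamma|}$ on $[0,1]$, for which $F^{\inv}(u) = u^{-\gamma}$, $\mu = 1/(1-\gamma)$, $\E\brk{X\mid X\leq\tau} = \tau/(1-\gamma)$, and $\E\brk{\min_i X_i} = \bigTh{n^\gamma}$. For any $\tau \in (0,1)$, write $F(\tau) = \beta\log n/n$; conditioning on the first acceptance yields
\[
\E\brk{\mathrm{ALG}(\tau)} \;=\; \prn{1-(1-F(\tau))^{n-1}}\cdot\f{\tau}{1-\gamma} + (1-F(\tau))^{n-1}\cdot\f{\mu}{1}.
\]
A short case split on $\beta$ then produces the lower bound: when $\beta \geq |\gamma|$ the first summand dominates and the ratio to $\E\brk{\min_i X_i}$ is $\Theta(\beta^{-\gamma}(\log n)^{-\gamma}) = \Omega((\log n)^{-\gamma})$; when $\beta < |\gamma|$ the failure mass $n^{-\beta}$ exceeds $n^\gamma$, making the ratio at least $\Omega(n^{|\gamma|-\beta})$, which dominates $(\log n)^{-\gamma}$ by comparing logarithms. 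Thresholds outside $[0,1]$ force the algorithm to pay the constant $\mu$, giving a polynomial-in-$n$ ratio, which is even worse.

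The main obstacle is propagating~\eqref{eq:intro-3} from constants $c$ to slowly growing $c = \alpha\log n$: a pointwise use of regular variation is not enough, and one must appeal to the uniform version (Karamata's representation of slowly varying functions) to control the quantile $\tau$ and the tail factor $(1-F(\tau))^{n-1}$ simultaneously on the $\log n/n$ scale. Once the quantile-scaling is established at this scale, both the algorithm's analysis and the case-splits for the hard distribution reduce to elementary asymptotic estimates that meet cleanly at the rate $(\log n)^{-\gamma}$, simultaneously recovering and generalizing the entire-distribution bounds of~\cite{liv-mehta-cpi}.
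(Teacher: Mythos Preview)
Your proposal is essentially correct and takes a somewhat different, more elementary route than the paper. For the upper bound, the paper computes $\E[X\mid X\le T]$ exactly via Karamata's theorem (getting $T/(1-\gamma)$ asymptotically) and picks $T=F^{\inv}\bigl(-\gamma\log(n/\log n)/n\bigr)$ so as to balance the two terms precisely; you instead use the crude bound $\E[X\mid X\le\tau]\le\tau$ and choose $\tau=F^{\inv}\bigl((|\gamma|+1)\log n/n\bigr)$ so that the failure term is polynomially small. This loses only a constant factor and dispenses with one invocation of Karamata. For the lower bound, the paper uses the Weibull-type law $F(x)=1-e^{-x^{-1/\gamma}}$ together with a contradiction argument on the two competing constraints; your power law $F(x)=x^{1/|\gamma|}$ on $[0,1]$ has closed-form order statistics and admits a direct case analysis, which is arguably more transparent.

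Two points deserve care. First, your case split at $\beta=|\gamma|$ is not quite airtight: if $\beta_n\uparrow|\gamma|$ (e.g.\ $\beta_n=|\gamma|-1/\log n$) then $n^{|\gamma|-\beta_n}=O(1)$, so the second term alone does \emph{not} dominate $(\log n)^{-\gamma}$. The repair is immediate---in that regime $\beta_n$ is bounded away from $0$, so the first term already gives $\Omega((\log n)^{-\gamma})$---but the split should be placed strictly below $|\gamma|$ or both terms should be used together. Second, you are right that extending~\eqref{eq:intro-3} from constant $c$ to $c=\Theta(\log n)$ needs the Karamata representation (Potter bounds); this yields $F^{\inv}(c_n/n)/F^{\inv}(1/n)=c_n^{-\gamma}\cdot e^{o(\log\log n)}$, hence strictly $O((\log n)^{-\gamma+o(1)})$ rather than $O((\log n)^{-\gamma})$ for fully general slowly varying parts. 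The paper's own proof applies Lemma~\ref{lem:mult-quantile-approx} with $c=g(n,\gamma)$ growing in $n$ and glosses over the same issue, so your treatment is no less rigorous than the original on this point.
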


\paragraph{Multi-Unit Prophet Inequality.} Since the guarantees of the above theorem are poly-logarithmic in $n$, a natural question is whether single-threshold algorithms can achieve constant competitive ratios in any minimization setting. We answer this positively, by considering the $k$-multi-unit prophet inequality, in which we are presented with $n$ random variables drawn I.I.D. from a known distribution following EVT and have to select \emph{at least} $k$ values, with the goal of minimizing the sum of selected values. A single-threshold algorithm for such a setting selects the first $k$ realizations below a threshold $T$. If the number $r$ of such realizations is smaller than $k$, it then is forced to select the last $k-r$ realizations independent of their value, to guarantee feasibility. We show that, when $k \geq \log{n}$, the competitive ratio of single-threshold algorithms undergoes a phase transition and, from poly-logarithmic in $n$, becomes constant.

\begin{theorem}\label{thm:single-threshold-evt-multi}
For every instance of the $k$-Multi-Unit I.I.D. Min-Prophet Inequality with distribution $F \in D_\gamma$, if $k \geq \log{n}$, there exists a single-threshold algorithm that achieves a constant competitive ratio.
\end{theorem}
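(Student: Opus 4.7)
The plan is to analyze the single-threshold algorithm with threshold $T = F^{\inv}\prn{\alpha k/n}$, where $\alpha > 1$ is a constant to be chosen sufficiently large depending on $\gamma$. Let $N = |\set{i : X_i \leq T}|$ and let $E = \set{N \geq k}$ denote the good event. Since $N \sim \text{Bin}(n, \alpha k/n)$ has mean $\alpha k$, a multiplicative Chernoff bound gives $\Pr\brk{E^c} \leq e^{-c_\alpha k}$ with $c_\alpha \to \infty$ as $\alpha \to \infty$. On $E$, the algorithm picks exactly $k$ values each in $[0,T]$, so $\text{ALG} \leq kT$ deterministically.

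To lower bound the prophet, I use $\text{PROPH} = \sum_{i=1}^k X_{(i)} \geq \tfrac{k}{2} X_{(\lceil k/2 \rceil)}$. An upper-tail Chernoff for $\text{Bin}(n, k/(4n))$ shows $\Pr\brk{X_{(\lceil k/2 \rceil)} \geq F^{\inv}(k/(4n))} \geq 1 - e^{-\Omega(k)} = 1 - o(1)$, since $k \geq \log n$. The regular variation of $F^{\inv}$ at $0$ with index $-\gamma$ from \eqref{eq:intro-3} gives $F^{\inv}(k/(4n)) = \Theta\prn{F^{\inv}(k/n)}$, so $\E\brk{\text{PROPH}} \geq c_1 \, k \, F^{\inv}(k/n)$ for a constant $c_1 > 0$ depending only on $\gamma$.

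For $\E\brk{\text{ALG}}$, the good-event contribution is at most $k T = k F^{\inv}(\alpha k/n) \leq C_\alpha \, k F^{\inv}(k/n)$ by the same regular variation, which is $O\prn{\E\brk{\text{PROPH}}}$. For the bad event, I use the deterministic bound $\text{ALG} \leq \sum_{i=1}^n X_i$; by exchangeability of the $X_i$ in $N$, conditioning on $X_1 \leq T$ or $X_1 > T$ and applying Chernoff to the binomial formed by the remaining $n-1$ samples gives $\E\brk{X_1 \cdot \mathbf{1}_{E^c}} \leq \E\brk{X_1} e^{-c_\alpha k}$, and therefore $\E\brk{\text{ALG} \cdot \mathbf{1}_{E^c}} \leq n \E\brk{X_1} e^{-c_\alpha k}$.

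The main obstacle is showing this bad-event contribution is dominated by $\E\brk{\text{PROPH}}$; the bound $\text{ALG} \leq \sum_i X_i$ is extremely loose, but the hypothesis $k \geq \log n$ converts $e^{-c_\alpha k}$ into $n^{-c_\alpha}$, a polynomial saving. Writing $F^{\inv}(t) = t^{-\gamma} L(1/t)$ with $L$ slowly varying at infinity and applying Potter's bounds, $L(n/k) \geq (n/k)^{-\epsilon}$ for any $\epsilon > 0$ and $n$ large, so $k F^{\inv}(k/n) \geq k^{1-\gamma+\epsilon} n^{\gamma - \epsilon}$. Hence the ratio of the bad-event contribution to $\E\brk{\text{PROPH}}$ is $O\prn{n^{1 - c_\alpha - \gamma + \epsilon}}$, which is $o(1)$ once $c_\alpha > 1 - \gamma$. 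Taking $\alpha$ large enough makes this hold for any fixed $\gamma \leq 0$, giving $\E\brk{\text{ALG}} \leq O(1) \cdot \E\brk{\text{PROPH}}$ with the constant depending only on $\gamma$.
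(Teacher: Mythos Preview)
Your argument is correct and takes a genuinely different, more elementary route than the paper. The paper fixes $k=\log n$ (invoking a monotonicity-in-$k$ claim for larger $k$), sets $T=F^{\inv}(c\log n/n)$, approximates the count $|S|$ by a Poisson$(c\log n)$ variable, and expresses the competitive ratio via the regularized incomplete Gamma function $\Gamma(k+1,c\log n)/\Gamma(k+1)$; the key step is then an asymptotic expansion of $\Gamma(k+1,c\log n)$ together with Stirling to show the ``bad-event'' term vanishes when $c-\log c\ge 1-\gamma$. Your approach replaces all of this with direct multiplicative Chernoff bounds, a coarse lower bound on the prophet through the $\lceil k/2\rceil$-th order statistic, and the crude deterministic upper bound $\text{ALG}\le\sum_i X_i$ on the bad event, salvaged by the hypothesis $k\ge\log n$ turning $e^{-c_\alpha k}$ into $n^{-c_\alpha}$. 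The conditioning argument decoupling $X_1$ from $\mathbf 1_{E^c}$ via the remaining $n-1$ samples is clean and correct.

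What each buys: the paper's analysis is sharper in that it tracks constants more explicitly (both terms are expressed through $\Gamma(2-\gamma)$ and related quantities) and uses the exact structure of the forced selections rather than the blunt $\sum_i X_i$ bound. Your argument is shorter, avoids the incomplete-Gamma and Stirling machinery, and works uniformly for all $k$ in the range $[\log n,\,n/\alpha)$ rather than only $k=\log n$; the regular-variation steps you invoke require $k/n\to 0$, so the regime $k=\Theta(n)$ still needs a separate (trivial) remark. Both proofs silently assume $\E[X_1]<\infty$ so that the forced last picks have finite expected cost.
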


\subsection{Related Work}\label{sec:related-work}

Our work is most closely related to the long line of work that considers the case of I.I.D. random variables drawn from a known distribution, which dates back to Hill and Kertz \cite{hill-kertz}. As stated previously, Kertz \cite{kertz} showed that the competitive ratio in the I.I.D. case approaches $\approx 0.745$ as $n$ goes to infinity and conjectured its tightness. A simpler proof of this can be found in \cite{saint-mont}. The bound of $\approx 0.745$ was shown to be tight by Correa, Foncea, Hoeksma, Oosterwijk and Vredeveld \cite{correa-iid}. The proofs of both the upper and lower bounds were recently simplified, by \cite{better-tightness} and \cite{renato-iid} respectively. Extreme Value Theory was first used to analyze prophet inequalities by Kennedy and Kertz \cite{kennedy-kertz}, who gave the closed form of the asymptotic competitive ratio for the maximization setting. Later, Correa, Pizarro and Verdugo \cite{jose-dana-victor-evt} analyzes single-threshold algorithms for the I.I.D. Max-Prophet Inequality. For a comprehensive review of the I.I.D. setting see the survey by Correa, Foncea, Hoeksma, Oosterwijk and Vredeveld \cite{correa-survey}.

One can consider a more general setting in which the distributions can differ but the arrival order of the random variables is chosen uniformly at random. This setting, called the \emph{prophet secretary}, was introduced by Esfandiari, Hajiaghayi, Liaghat and Monemizadeh \cite{esf-prophsec}, where they gave an algorithm that achieves a $\prn{1 - \f{1}{e}}$-competitive ratio and showed that no algorithm can achieve a factor better than $0.75$. Ehsani, Hajiaghayi, Kesselheim and Singla \cite{EhsaniHKS18} extend this result to matroid constraints. The factor of $1 - \f{1}{e}$ was recently beaten \cite{azar, correa-random}. The best currently known ratio is $0.672$, obtained by Harb \cite{farouk-poisson-random}. The upper bound was also improved, first by \cite{correa-random} to $0.732$ and recently by \cite{bubna-chiplunkar} to $0.7254$. The best known upper bound stands at $0.7235$ by \cite{raimundo-random}.

Another variant of particular interest is the case where the algorithm is allowed to choose the arrival order, called the \emph{free order} or \emph{order selection} setting. Settling the competitive ratio of this setting is a significant open problem in the area of prophet inequalities. Early progress was made by \cite{beyhaghi-free-order,peng-tang}. Currently, the best known factor is $0.7258$, obtained by \cite{bubna-chiplunkar}. It is still open whether one can achieve the tight $\approx 0.745$ factor of the I.I.D. prophet inequality even in the free order setting.

The $\f{1}{2}$-competitive factor guaranteed by the classical prophet inequality for adversarial arrival order has been shown to hold for more general classes of downwards-closed constraints, all the way up to matroids \cite{klein-wein}. For the special case of $k$-uniform matroids, where one can select up to $k$ values, Alaei \cite{Alaei14} showed a $\prn{1 - \frac{1}{\sqrt{k+3}}}$-competitive ratio. This was recently improved for small $k$ via the use of a static threshold \cite{suchi-small-k} and later made tight for all $k$ \cite{willma-adversarial}. For the same setting and a random arrival order, Arnosti and Ma \cite{willma-random} recently gave a surprising and quite beautiful single-threshold algorithm that achieves the best competitive ratio of $1 - e^{-k} \frac{k^k}{k!}$. Ezra, Feldman, Gravin and Tang \cite{ezra} showed a $0.337$-prophet inequality for matching constraints. Rubinstein \cite{rubin} considered general downwards-closed feasibility constraints and obtained logarithmic approximations. The standard setting has been extended to combinatorial valuation functions \cite{rs,chek-liv}, where one can obtain a constant competitive ratio when maximizing a submodular\footnote{A set function $f : 2^N \to \R$ is called \emph{submodular} if for any sets $A, B \subseteq N$ we have $f(A) + f(B) \geq f(A \cup B) + f(A \cap B)$.} function but a logarithmic hardness is known for subadditive\footnote{A set function $f : 2^N \to \R$ is called \emph{subadditive} if for any sets $A, B \subseteq N$ we have $f(A) + f(B) \geq f(A \cup B)$.} functions \cite{rs}. Recently, \cite{suchi-non-adaptive-graphic-matroid} studied non-adaptive threshold algorithms for matroid constraints and gave the first constant-factor competitive algorithm for graphic matroids. More general feasibility constraints have also been studied in the random arrival order case, i.e. for matroids \cite{aw-random} and matchings \cite{brubach, tristan}. Qin, Rajagopal, Vardi and Wierman \cite{convex-pi,convex-pi-2} study the related problem of \emph{convex prophet inequalities}, in which, instead of costs, one sequentially observes random \emph{cost functions} and needs to assign a total mass of $1$ across all functions in an online manner. The model of \cite{liv-mehta-cpi} is a special case of their model, for constant cost functions and identical distributions.

\paragraph{Applications.} Prophet inequalities have been successfully applied to provide guarantees on the social welfare and revenue of simple yet approximately optimal auctions. This was in part motivated by the intractability of optimal mechanisms for selling items \cite{pricing-lotteries, hart-nisan-menu, daskalakis-intractability-1, daskalakis-intractability-2}. Hajiaghayi, Kleinberg and Sandholm \cite{haji}, and later Chawla, Hartline, Malec and Sivan \cite{ChawlaHMS}, were the pioneers in the use of prophet inequalities to analyze (sequential) posted price mechanisms for selling items. Their results led to a significant effort to understand how the expected revenue of an optimal posted price mechanism compares to that of the optimal auction \cite{chawla07, qiqi, BlumHolen08, Adamczyk17, Alaei14, feldman-combinatorial, Dutting2, kesselheim-mhr-ppm, Dobzinski-Comb-Auc, Dobzinski-Comb-Auct-Impos, AssadiKS, assad-singla, dutting-combinatorial}. This series of work finally culminated in a beautiful result by Correa and Cristi \cite{correa-cristi}, where they gave the first constant-factor guarantees for combinatorial settings and subadditive valuations.

In a surprising result, Correa, Foncea, Pizarro and Verdugo \cite{CorreaPricing} showed that the reverse direction also holds, establishing an equivalence between finding stopping rules in an optimal stopping problem and designing optimal posted price mechanisms -- for more information on these applications see a survey by Lucier \cite{lucier-survey}. Recently, \cite{trading-proph} initiated the study of buy-and-sell prophet inequalities, named {\em trading prophets}, where they show constant factor guarantees even with single-threshold algorithms.

Cost minimization is equally relevant in practice as the rewards maximization setting -- there are several scenarios in which a procuring agency needs to procure a good or service to guarantee some fixed supply while minimizing cost. In electricity markets, for example, entities that produce power enter long-term contracts that obligate them to serve consumers, regardless of their own power generation capacity. Therefore, they must estimate how much power they can produce themselves, and cover the difference in demand by procuring power from third-party producers \cite{electricity-markets-1,electricity-markets-2}. Other examples include natural provisioning resources in cloud computing \cite{cloud-computing-application} and gas supply markets \cite{gas-supply-market}.

\paragraph{Organization.} Section~\ref{sec:preliminaries} contains all relevant definitions and technical background needed for our results. Section~\ref{sec:unified} characterizes the optimal (multiple) threshold algorithm and contains our unified analysis for both the I.I.D. Max-Prophet Inequality and the I.I.D. Min-Prophet Inequality. In Section~\ref{sec:single-threshold}, we show our results for single-threshold algorithms. Finally, we offer some concluding remarks in Section~\ref{sec:conclusion}.

\section{Preliminaries}\label{sec:preliminaries}

\paragraph{Models}
Let $X_1, \dots, X_n$ denote random variables drawn independently from a known distribution $\cD$ supported on $[0, x^*)$, where $x^* \leq +\infty$. In the prophet inequality setting, we are presented with the realizations of $X_i$ sequentially and at each step $i$ we must make an immediate and irrevocable decision to accept or reject $X_i$. The process ends when we accept a realization and, once rejected, a realization cannot be obtained in the future. The benchmark is an all-knowing prophet who can see all realizations in advance and always select the optimal. In the I.I.D. Max-Prophet Inequality, the goal is to maximize the selected realization and the prophet's objective is $\E[\max_i X_i]$, whereas in the I.I.D. Min-Prophet Inequality, the goal is to minimize the selected realization and the prophet's objective is $\E[\max_i X_i]$ (but we are forced to select at least one realization). We also consider the $k$-Multi-Unit I.I.D. Min-Prophet Inequality, in which we are forced to select \emph{at least} $k$ realizations, with the goal of minimizing the sum of selected realizations. The prophet's objective now becomes the expected value of the $k$ smallest random variables.

\paragraph{Order Statistics}
Suppose we reorder the random variables such that $X_{(1)} \leq \dots \leq X_{(n)}$. Then, $X_{(i)}$ is called the $i$-th \emph{order statistic} of $\cD$. We denote the expected value of the $i$-th order statistic of $n$ samples from $\cD$ by $\mu_{i:n} = \E_{\cD}[X_{(i)}]$. Of special interest to us are the expectation of the \emph{last} and \emph{first} order statistic, i.e. the largest $\mu_{n:n}$ and smallest $\mu_{1:n}$ expected values, since they capture the prophet's objective in the I.I.D. Max-Prophet Inequality and I.I.D. Min-Prophet Inequality, respectively.

\paragraph{Left Continuous Inverse}
Let $F : [0, +\infty) \to [0, 1]$, where $F(x) = \Pr_{X \sim \cD}\brk{X \leq x}$, and $f : [0, +\infty) \rightarrow [0,1]$ denote the \emph{Cumulative Distribution Function (CDF)} and \emph{Probability Density Function (PDF)} of $\cD$, respectively.
\begin{definition}[Left Continuous Inverse]\label{def:ch2-inverse}
Let $F$ be a non-decreasing function on $\R$. The \emph{(left continuous) inverse} of $F$ is defined as
\[
F^{\inv}(y) = \inf\set{x \midd F(x) \geq y}.
\]
\end{definition}
The above definition works with the convention that the infimum of an empty set is $+\infty$. For more information on the left continuous inverse see \cite{resnick-rv} (Section $0.2$). In particular, if $F$ denotes the CDF of a distribution $\cD$, then $F^{\inv}(y)$ denotes the quantile function of $\cD$, i.e. $F^{\inv}(y)$ is the smallest value $\tau$ for which $\Pr_{X \sim \cD}[X \leq \tau] \geq y$.

\paragraph{Asymptotic Competitive Ratio}
We use $G_M(n)$ and $G_m(n)$ to denote the expected value of the optimal threshold policy for the I.I.D. Max-Prophet Inequality and the I.I.D. Min-Prophet Inequality settings with $n$ random variables, respectively. To avoid confusion, we denote the asymptotic competitive ratio in the I.I.D. Max-Prophet Inequality by $\lambda_M$ and in the I.I.D. Min-Prophet Inequality by $\lambda_m$.
\begin{definition}[Asymptotic Competitive Ratio]
Consider an instance of the I.I.D. Max-Prophet Inequality (resp. I.I.D. Min-Prophet Inequality) with $n$ random variables. Then, the \emph{asymptotic competitive ratio} $\lambda$ is
\begin{align*}
\lambda_M &\triangleq \lim_{n \to \infty} {\frac{G_M(n)}{\mu_{n:n}}}, \qquad \text{in the I.I.D. Max-Prophet Inequality}, \\
\lambda_m &\triangleq \lim_{n \to \infty} {\frac{G_m(n)}{\mu_{1:n}}}, \qquad \text{in the I.I.D. Min-Prophet Inequality}.
\end{align*}
\end{definition}

Since we analyze the asymptotic competitive ratio as $n \to \infty$, we use $\approx$ to denote the asymptotic equality of two functions of $n$.
\begin{definition}[Asymptotic Equality]
For every pair of functions $a, b : \N \to \Rp$, we use $a(n) \approx b(n)$ to denote the following:
\[
\lim_{n \to \infty} {\frac{a(n)}{b(n)}} = 1.
\]
\end{definition}

\paragraph{Regularly Varying Functions}
Our analysis relies heavily on the theory of \emph{regularly varying functions}, originally developed by Karamata. Regularly varying functions are, roughly speaking, functions that behave asymptotically like power functions. We present some basic definitions here; for more information on the topic see \cite{resnick-rv} and \cite{bingham-rv}.
\begin{definition}\label{def:reg-var}
Let $f : \Rp \to \R$ be a Lebesgue measurable function that is eventually positive. We say that $f$ is \emph{regularly varying} (at infinity) if, for some $\alpha \in \R$ and every $x > 0$
\[
\lim_{t \to \infty} {\frac{f(tx)}{f(t)}} = x^\alpha.
\]
In this case, we indicate this as $f \in RV_\alpha$.
\end{definition}

In the above definition, $\alpha$ is called the \emph{index} of regular variation and whenever $\alpha = 0$, we say that $f$ is \emph{slowly varying}. Furthermore, we say that $f(x)$ is regularly varying at $0$ if and only if $f(\f{1}{x})$ is regularly varying at infinity. If $f \in RV_\alpha$, then $L(x) = \f{f(x)}{x^\alpha} \in RV_0$. In general, one can represent any $f \in RV_\alpha$ as $f(x) = x^\alpha L(x)$, where $L$ is a slowly-varying function.

\paragraph{Gamma Function}
The Gamma ($\Gamma$) function -- which is an extension of the factorial function over the reals -- and its relatives arise in our closed form of the ACR. For $x > 0$, it is defined as $\Gamma(x) = \int^\infty_0 {t^{x-1} e^{-t} \dif t}$. In particular, $\Gamma(n+1) = n!$ for every $n \in \N$. Like the factorial function, the Gamma function also satisfies the following recurrence $\Gamma(x+1) = x \Gamma(x)$. For a more extensive treatment along with many folklore results about the function, see \cite{gamma-book}.

\paragraph{Monotone Hazard Rate (MHR) Distributions}
Intuitively, for discrete distributions, the hazard rate for maxima (resp. minima) at a point $t$ represents the probability that an event occurs at time $t$, given that the event has not occurred (resp. has occured) up to time $t$. For continuous distributions, the hazard rate instead quantifies the instantaneous rate of the event's occurrence at time $t$.
\begin{definition}[Hazard Rate]\label{def:hazard-rate}
For a distribution $\cD$ with cumulative distribution function $F$ and probability density function $f$, the {\em maxima hazard rate} $h$ and {\em minima hazard rate} $r$ of $\cD$ are defined as
\[
h(x) \triangleq \frac{f(x)}{1 - F(x)} \qquad \text{and} \qquad r(x) \triangleq \frac{f(x)}{F(x)},
\]
for all $x$ in the support of $\cD$.

Furthermore, let $H$ and $R$ denote the antiderivatives of $h$ and $r$, which we call the {\em cumulative hazard rates} of $\cD$ for maxima and minima respectively:
\[
H(x) \triangleq \int^x_0 {h(u) \dif u} \qquad \text{and} \qquad R(x) \triangleq \int^x_0 {r(u) \dif u}.
\]
\end{definition}
Notice that,
\[
H(x) = \int^x_0 {h(u) \dif u} = \int^x_0 {\frac{f(u)}{1 - F(u)} \dif u} = - \int^x_0 {\prn{\ln\prn{1 - F(u)}}' \dif u} = - \ln\prn{1 - F(x)},
\]
which implies that $1 - F(x) = e^{-H(x)}$. Similarly,
\[
R(x) = \int^x_0 {r(u) \dif u} = \int^x_0 {\frac{f(u)}{F(u)} \dif u} = \int^x_0 {\prn{\ln\prn{F(u)}}' \dif u} = \ln\prn{F(x)},
\]
which implies that $F(x) = e^{R(x)}$.

Distributions with monotonically increasing hazard rate (for maxima) have found a special place within mechanism design literature, originally introduced for the study of revenue maximization. They are known as MHR (or IFR for increasing failure rate) distributions.

\begin{definition}[Monotone Hazard Rate Distribution]\label{def:mhr}
A distribution $\cD$ is called a {\em Monotone Hazard Rate (MHR)} distribution if and only if the hazard rate function $h$ of $\cD$ is non-decreasing.
\end{definition}

\section{A Unified Approach}\label{sec:unified}

In this section, we prove our main theorem.

\begin{reptheorem}{thm:unified-acr}
Let $F \in D_\gamma$. Then,
\begin{itemize}
    \item for the I.I.D. Max-Prophet Inequality and every $\gamma \in \R$, the asymptotic competitive ratio of the optimal threshold policy as $n \to \infty$ is
    \[
    \lambda_M= \min\set{\frac{\prn{1 - \gamma}^{-\gamma}}{\Gamma\prn{1 - \gamma}}, 1}.
    \]
    \item for the I.I.D. Min-Prophet Inequality, we have that $\gamma \leq 0$, and the asymptotic competitive ratio of the optimal threshold policy as $n \to \infty$ is
    \[
    \lambda_m = \max\set{\frac{\prn{1 - \gamma}^{-\gamma}}{\Gamma\prn{1 - \gamma}}, 1}.
    \]
\end{itemize}
\end{reptheorem}

We split our analysis into the following cases and show that
\begin{itemize}
\item For I.I.D. Max-Prophet Inequality,
\begin{itemize}
    \item $\lambda_M = 1$, for $\gamma \geq  1$,
    \item $\lambda_M = \frac{\prn{1 - \gamma}^{-\gamma}}{\Gamma(1 - \gamma)}$, for $\gamma \in (0,1)$,
    \item $\lambda_M = 1$, for $\gamma = 0$,
    \item $\lambda_M = 1$, for $\gamma < 0$.
\end{itemize}
\item For I.I.D. Min-Prophet Inequality,
\begin{itemize}
    \item $\lambda_m = 1$, for $\gamma = 0$,
    \item $\lambda_m = \frac{\prn{1 - \gamma}^{-\gamma}}{\Gamma(1 - \gamma)}$, for $\gamma < 0$,
\end{itemize}
\end{itemize}
For the I.I.D. Max-Prophet Inequality setting, the case of $\gamma \geq 1$ is easy to see as it implies that $\E[X] = + \infty$, and thus accepting the first random variable trivially yields a competitive ratio of $1$. For simplicity and ease of presentation, we present the proof for $\gamma \in (0,1)$ and I.I.D. Max-Prophet Inequality and $\gamma < 0$ and I.I.D. Min-Prophet Inequality. In the remaining cases, the proof is very similar, except for minor details, and for this reason the proofs of the other cases can be found in Appendix~\ref{app:proofs}. Recall that the case $\gamma > 0$ for I.I.D. Min-Prophet Inequality is impossible as it would imply that the left-most endpoint of the domain is $-\infty$.

Recall that $G_M(n)$ and $G_m(n)$ denote the expected value of the optimal threshold policy for the I.I.D. Max-Prophet Inequality and the I.I.D. Min-Prophet Inequality settings, respectively. Also, let $\lambda_M(n)$ and $\lambda_m(n)$ denote the competitive ratios for the I.I.D. Max-Prophet Inequality and the I.I.D. Min-Prophet Inequality settings, for every $n \geq 1$. Moreover, assume that $\cD = [0, x^*)$, where $x^* \leq +\infty$.

\begin{theorem}\label{thm:unified-acr-case-1}
Let $F \in D_\gamma$. Then,
\begin{itemize}
    \item for the I.I.D. Max-Prophet Inequality, if $\gamma \in (0,1)$, the asymptotic competitive ratio of the optimal threshold policy as $n \to \infty$ is
    \[
    \lambda_M = \frac{\prn{1 - \gamma}^{-\gamma}}{\Gamma\prn{1 - \gamma}}.
    \]
    \item for the I.I.D. Min-Prophet Inequality, if $\gamma < 0$, the asymptotic competitive ratio of the optimal threshold policy as $n \to \infty$ is
    \[
    \lambda_M = \frac{\prn{1 - \gamma}^{-\gamma}}{\Gamma\prn{1 - \gamma}}.
    \]
\end{itemize}
\end{theorem}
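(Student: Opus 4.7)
The plan is to follow the three-step program sketched in the \emph{Techniques} paragraph of the introduction, which treats the Max-Prophet case ($\gamma\in(0,1)$) and the Min-Prophet case ($\gamma<0$) through the same extreme-value-theoretic lens. The end-goal is to express both the prophet's value and the algorithm's value as quantiles of $F$ of the form $F^{\inv}(1-c/n)$ or $F^{\inv}(c/n)$, and then use the regular variation of $F^{\inv}$ to compare them; the ratio collapses to $(1-\gamma)^{-\gamma}/\Gamma(1-\gamma)$ with one invocation of this scaling.

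For step (i), I would invoke Lemma~\ref{lem:mu-approx} to obtain
\[
\mu_{n:n}\approx \Gamma(1-\gamma)\,F^{\inv}\prn{1-\tfrac1n},\qquad \mu_{1:n}\approx \Gamma(1-\gamma)\,F^{\inv}\prn{\tfrac1n}.
\]
The factor $\Gamma(1-\gamma)$ is the expectation of the limiting distributions $G_\gamma$ and $G^*_\gamma$, computed via a standard Gamma integral, while the quantile appears because the scaling sequence $a_n$ from Theorem~\ref{thm:gnedenko} is asymptotically $F^{\inv}(1-1/n)-b_n$ (resp.\ $F^{\inv}(1/n)-b_n$), as can be read off from the fact that $a_n\in RV_\gamma$.

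Step (ii), expressing $G_M(n)$ or $G_m(n)$ as a single quantile, is the heart of the proof. I would start from the DP for the optimal policy: in the max case $v_i=\E[\max(X,v_{i+1})]$, so the threshold at stage $i$ equals $v_{i+1}$ and satisfies $v_i-v_{i+1}=\int_{v_{i+1}}^{x^*}(1-F(t))\dif t$. Using the identity $1-F(t)=e^{-H(t)}$ and a change of variables to rewrite this integral in terms of the quantile function, one obtains a regularly varying integrand to which Karamata's theorem (Lemma~\ref{lem:karamatas}) applies. Telescoping over $i=1,\dots,n$ and evaluating asymptotically extracts the factor $1-\gamma$ coming from the index of regular variation, yielding
\[
G_M(n)\approx F^{\inv}\prn{1-\tfrac{1-\gamma}{n}},\qquad G_m(n)\approx F^{\inv}\prn{\tfrac{1-\gamma}{n}},
\]
with the min case handled in parallel after replacing $H$ by $R$ and $1-F$ by $F$ in the DP. Corollary~\ref{cor:mu-ratio-approx-gen}, flagged in the introduction, should drop out as a by-product of the same computation.

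Step (iii) is then almost automatic: Lemma~\ref{lem:mult-quantile-approx}, which is essentially the definition of regular variation of $F^{\inv}$ at the appropriate endpoint, yields $F^{\inv}(1-(1-\gamma)/n)\approx (1-\gamma)^{-\gamma}F^{\inv}(1-1/n)$, with a symmetric statement for the min case at $0$. Dividing the step-(ii) approximation of $G_M(n)$ by the step-(i) approximation of $\mu_{n:n}$ (and likewise $G_m(n)$ by $\mu_{1:n}$) produces the stated ACR $(1-\gamma)^{-\gamma}/\Gamma(1-\gamma)$. The main obstacle is step (ii): one must justify passing from the discrete $n$-stage DP recursion to a continuous integral uniformly in the stages and then apply Karamata's theorem with the correct index of regular variation — at the upper endpoint $x^*$ in the max case, and at $0$ in the min case — so the change of variables that turns the hazard-rate integrals into Karamata-ready form must be set up differently in the two regimes. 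Keeping $\gamma\in(0,1)$ in the max case (so that $\E[X]<\infty$ and the telescoping sum actually converges), and carefully handling the boundary condition in the min case where the DP is forced to accept the last realization, are both essential ingredients of this step.
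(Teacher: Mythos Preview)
Your three-step framework matches the paper's approach, and steps (i) and (iii) are precisely Lemmas~\ref{lem:mu-approx} and~\ref{lem:mult-quantile-approx}. The divergence is in step (ii), which you correctly flag as the crux but whose mechanism you do not identify.

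The paper does \emph{not} telescope the DP over all stages. Instead, Karamata is applied once to the one-step recursion, yielding (Lemma~\ref{lem:gn-done})
\[
G_M(n)\approx G_M(n-1)\Bigl(1 - e^{-H(G_M(n-1))}\bigl(1-\tfrac{1}{1-\gamma}\bigr)\Bigr),
\]
and analogously for $G_m$. Dividing by $\mu_{n:n}$ and using $\mu_{n-1:n-1}/\mu_{n:n}\approx 1-\gamma/n$ (Lemma~\ref{lem:mu-ratio-approx}) turns this into a recursion for $\lambda_M(n)$ itself. The decisive step is then a \emph{fixed-point argument}: one shows $\lambda_M(n)-\lambda_M(n-1)\to 0$, which forces the bracketed factor to equal $1$ asymptotically and hence $e^{-H(G_M(n-1))}\approx(1-\gamma)/n$, i.e.\ $G_M(n-1)\approx F^{\inv}\bigl(1-(1-\gamma)/n\bigr)$. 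Establishing $\lambda(n)-\lambda(n-1)\to 0$ requires \emph{a priori} bounds on the competitive ratio that are external to the DP: the paper invokes the classical prophet inequality for the max case and, for the min case, the single-threshold $O((\log n)^{-\gamma})$ bound of Theorem~\ref{thm:single-threshold-evt}. Your telescoping sketch supplies no substitute for this input, and a direct summation over stages runs into a circularity: the increment at stage $i$ depends on $e^{-H(G_M(i))}$, whose scaling in $i$ is exactly the unknown. The paper breaks this circularity via the fixed-point route, not by summing the recursion.
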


The proofs of the following theorems can be found in Appendix~\ref{app:proofs}.

\begin{theorem}\label{thm:unified-acr-case-2}
Let $F \in D_0$. Then, for both the I.I.D. Max-Prophet Inequality and the I.I.D. Min-Prophet Inequality, the asymptotic competitive ratio of the optimal threshold policy as $n \to \infty$ is $1$.
\end{theorem}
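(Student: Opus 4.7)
The plan is to mirror the three-step strategy used in the proof of Theorem~\ref{thm:unified-acr-case-1}, taking advantage of the fact that at $\gamma = 0$ the closed-form value $\Lambda(0) = \frac{(1-0)^{-0}}{\Gamma(1)} = 1$ is attained. So the target ACR of $1$ is exactly what the general formula predicts, and the task reduces to verifying that the three key asymptotic identities from the introduction survive at the boundary $\gamma = 0$.

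\textbf{Step 1 (Prophet's value).} First I would prove the analogue of Lemma~\ref{lem:mu-approx} at $\gamma = 0$: namely that $\mu_{n:n} \approx F^{\inv}(1 - 1/n)$ for the maximum and $\mu_{1:n} \approx F^{\inv}(1/n)$ for the minimum. Since $\Gamma(1) = 1$, the multiplicative factor disappears. The proof is the same as in the $\gamma \in (0,1)$ case: write $\E[\max_i X_i] = \int_0^{x^*} (1 - F^n(x))\,dx$, change variables via $x = a_n y + b_n$, use dominated convergence together with the Fisher--Tippett--Gnedenko convergence to $G_0(y) = \exp(-e^{-y})$, and identify the limit as $\int (1 - G_0(y))\,dy$, which equals the Euler--Mascheroni constant times $a_n$ plus a $b_n$ term, and ultimately matches $F^{\inv}(1 - 1/n)$ asymptotically. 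The symmetric argument handles $G_0^*$ for minima.

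\textbf{Step 2 (Algorithm's value).} Next I would establish that $G_M(n) \approx F^{\inv}(1 - 1/n)$ and $G_m(n) \approx F^{\inv}(1/n)$ when $\gamma = 0$. This is the analogue of \eqref{eq:intro-2} with $\gamma$ substituted in, giving shift factor $(1-\gamma) = 1$. I would rewrite $G_M(n)$ in terms of the cumulative hazard rate $H$ as in the $\gamma > 0$ case, apply Karamata's theorem to the resulting integral, and read off the quantile expression. The only subtlety is that for $\gamma = 0$, $F^{\inv}(1 - 1/\cdot)$ is slowly varying rather than strictly regularly varying with positive index, so Karamata's theorem is being invoked at its boundary; one has to check that the relevant tail integral still converges with the correct leading-order behavior. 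An analogous derivation applies to $G_m(n)$ for minima using the cumulative hazard rate $R$.

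\textbf{Step 3 (Combining).} Finally, Steps~1 and~2 immediately yield
\begin{equation*}
\lambda_M = \lim_{n \to \infty} \frac{G_M(n)}{\mu_{n:n}} = \lim_{n \to \infty} \frac{F^{\inv}(1 - 1/n)}{F^{\inv}(1 - 1/n)} = 1,
\end{equation*}
and similarly $\lambda_m = 1$. No invocation of the quantile-ratio identity \eqref{eq:intro-3} is required here since both numerator and denominator involve the \emph{same} quantile when $\gamma = 0$.

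\textbf{Main obstacle.} The qualitative difficulty lies entirely in Step~2. For $\gamma < 0$ (min) or $\gamma \in (0,1)$ (max), Karamata's theorem provides a clean, strictly-positive index of regular variation, and the auxiliary slowly-varying factor is harmless. At $\gamma = 0$ the relevant regularly-varying function has index exactly $0$, so the Karamata-theorem step now only determines the leading asymptotics up to the slowly-varying multiplicative term, and one must be careful that the slowly-varying corrections in the numerator and denominator cancel. I expect this to work because both expressions arise from the same underlying quantile function, but verifying the cancellation cleanly is where most of the work sits. Once that asymptotic equality is in hand, the conclusion $\lambda_M = \lambda_m = 1$ is immediate.
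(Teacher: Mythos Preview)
Your overall strategy and Steps~1 and~3 match the paper's proof, and the target $\Lambda(0)=1$ is exactly what is wanted. The problem is Step~2. If you literally carry out the $\gamma\in(0,1)$ argument at $\gamma=0$ --- rewrite $G_M(n)$ via the hazard rate and apply Karamata (Lemma~\ref{lem:karamatas}) --- you obtain only
\[
G_M(n)\ \approx\ G_M(n-1)\bigl(1-e^{-H(G_M(n-1))}\bigr)+\frac{G_M(n-1)\,e^{-H(G_M(n-1))}}{1-0}\ =\ G_M(n-1),
\]
which is vacuously true and never produces the quantile $F^{\inv}(1-1/n)$. Equivalently, the recursion-solving step in the proof of Theorem~\ref{thm:unified-acr-case-1} isolates $e^{-H(\cdot)}$ from the relation $e^{-H(\cdot)}\bigl(1-\tfrac{1}{1-\gamma}\bigr)\approx -\gamma/n$; at $\gamma=0$ this becomes $0\approx0$ and carries no information. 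So the mechanism you plan to reuse degenerates completely at the boundary, not merely ``up to a slowly-varying factor.''

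What the paper supplies is a \emph{second}, independent asymptotic for the same integral, specific to the Gumbel class:
\[
\int_0^{\exp(-H(G_M(n-1)))}H^{\inv}(-\log u)\,du\ \approx\ F^{\inv}\Bigl(1-\tfrac{1}{n}\Bigr)\,e^{-H(G_M(n-1))},
\]
coming not from Karamata's theorem but from the de~Haan $\Pi$-class theory (Corollary~1.2.15 of de~Haan--Ferreira). Equating this with the Karamata expression $G_M(n-1)\,e^{-H(G_M(n-1))}$ is what yields $G_M(n-1)\approx F^{\inv}(1-1/n)$, after which Steps~1 and~3 finish the proof (with $\mu_{n:n}\approx F^{\inv}(1-e^{-\gamma^*}/n)$ and slow variation giving the ratio~$1$). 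Your ``Main obstacle'' paragraph senses the difficulty but misdiagnoses it: the missing piece is this extra $\Pi$-class identity, without which Step~2 cannot be completed as you describe.
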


\begin{theorem}\label{thm:unified-acr-case-3}
Let $F \in D_\gamma$, for $\gamma < 0$. Then, for the I.I.D. Max-Prophet Inequality, the asymptotic competitive ratio of the optimal threshold policy as $n \to \infty$ is $1$.
\end{theorem}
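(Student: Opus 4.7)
The plan is to exploit the classical fact from extreme value theory that $F \in D_\gamma$ with $\gamma < 0$ forces the right endpoint of the support, $x^* \triangleq \sup\set{x : F(x) < 1}$, to be \emph{finite}; this characterization of the Reverse Weibull domain of attraction is standard (see \cite{dehaan-ferreira-evt}). Once $x^* < +\infty$, both the prophet and a crude single-threshold rule are squeezed into $[0, x^*]$, so the ratio must collapse to $1$. This formalizes the informal observation already flagged in Section~\ref{sec:results} that a threshold $\tau = x^* - \eps$ yields a $(1-\eps)$-competitive single-threshold algorithm whenever the support is bounded, now made rigorous for the entire Reverse Weibull class.

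Fix any $\eps \in (0, x^*)$ and set $\tau = x^* - \eps$. By definition of $x^*$ we have $F(\tau) < 1$, so the probability that no realization exceeds $\tau$ is $F(\tau)^n$, which decays geometrically to $0$. First, I would sandwich the prophet: trivially $\mu_{n:n} \leq x^*$, while on the event that at least one realization exceeds $\tau$, $\max_i X_i \geq \tau$, so
\[
\mu_{n:n} \geq (x^*-\eps)\prn{1 - F(\tau)^n}.
\]
Next, I would lower-bound $G_M(n)$ by the performance of the single-threshold policy with threshold $\tau$; since the optimal threshold policy dominates every single-threshold rule, and the latter collects a value of at least $\tau$ whenever any realization exceeds $\tau$, we obtain
\[
G_M(n) \geq (x^*-\eps)\prn{1 - F(\tau)^n}.
\]

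Combining these bounds with the trivial inequality $\lambda_M(n) \leq 1$ (the algorithm cannot outperform the prophet), we get
\[
1 \geq \lambda_M(n) = \frac{G_M(n)}{\mu_{n:n}} \geq \frac{(x^*-\eps)\prn{1 - F(\tau)^n}}{x^*}.
\]
Taking $n \to \infty$ and using $F(\tau)^n \to 0$ yields $\liminf_n \lambda_M(n) \geq (x^*-\eps)/x^* = 1 - \eps/x^*$; letting $\eps \downarrow 0$ then forces $\lambda_M = 1$. The only non-trivial ingredient is the EVT characterization of $D_\gamma$ for $\gamma < 0$ as distributions with finite upper endpoint; no regular variation machinery, Karamata-type integral, or quantile-translation argument (as used in Theorem~\ref{thm:unified-acr-case-1}) is needed here, making this the easiest of the four cases in the split.
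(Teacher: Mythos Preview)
Your proof is correct and takes a genuinely different, and far more elementary, route than the paper's.

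The paper's argument in Appendix~\ref{app:proofs} essentially replays the machinery of Theorem~\ref{thm:unified-acr-case-1} but for the shifted quantity $x^* - G_M(n)$: it first proves a Karamata-type asymptotic for $\int_{G_M(n-1)}^{x^*} F(u)\,\dif u$ (Lemma~\ref{lem:max-neg-g-int}), then shows that the \emph{second-order} ratio $\lambda'_M(n) = \dfrac{x^* - G_M(n)}{x^* - \mu_{n:n}}$ converges to $\dfrac{(1-\gamma)^{-\gamma}}{\Gamma(1-\gamma)}$, and only at the very end concludes $\lambda_M = 1$ from the fact that $\mu_{n:n} \to x^*$. In other words, the paper does all the regular-variation work to identify exactly how fast $G_M(n)$ and $\mu_{n:n}$ both approach the finite endpoint $x^*$, and the theorem falls out as a corollary.

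Your argument sidesteps all of this: once $x^* < +\infty$ (the standard characterization of the Reverse Weibull domain), the sandwich $(x^*-\eps)\bigl(1 - F(x^*-\eps)^n\bigr) \leq G_M(n) \leq \mu_{n:n} \leq x^*$ already forces $\lambda_M(n) \to 1$. This is precisely the informal observation the paper flags in Section~\ref{sec:results} but never formalizes. What your approach buys is simplicity and robustness (it works for \emph{any} distribution with bounded support, not only those in some $D_\gamma$); what the paper's approach buys is the extra second-order information that $\lambda'_M = \Lambda(\gamma)$, preserving the unified $\Lambda$-description even in the $\gamma < 0$ maximization case. For the statement of Theorem~\ref{thm:unified-acr-case-3} as written, however, your argument is complete.
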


The proof of Theorem~\ref{thm:unified-acr} now follows from Theorems~\ref{thm:unified-acr-case-1},~\ref{thm:unified-acr-case-2} and~\ref{thm:unified-acr-case-3}.

The remainder of this section is dedicated in showing Theorem~\ref{thm:unified-acr-case-1}. We start by getting a recursive form of $G_M(n)$ and $G_m(n)$ which will be more useful, in terms of the distribution's inverse hazard rate. The next lemma is folklore, but we present its proof in Appendix~\ref{app:proofs} for completeness.

\begin{lemma}\label{lem:opt-dp-g}
For any $n > 1$, we have
\[
G_M(n) = G_M(n-1) + \int^{x^*}_{G_M(n-1)} {\prn{1 - F(u)} \dif u},
\]
and
\[
G_m(n) = \int^{G_m(n-1)}_0 {\prn{1 - F(u)} \dif u}.
\]
\end{lemma}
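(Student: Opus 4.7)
The plan is to derive both recursions by a single standard backward-induction argument on the optimal policy and then clean up the resulting integrals using integration by parts.

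First, I would observe that the optimal threshold policy admits a natural one-step characterization. With $n$ variables remaining, after observing $X_1$ the continuation value if we reject is exactly $G_M(n-1)$ (respectively $G_m(n-1)$), since the remaining $n-1$ variables are I.I.D.\ from $\cD$ and we can run the optimal policy on them. Hence the optimal decision at the first step is to accept $X_1$ iff $X_1 \geq G_M(n-1)$ in the maximization case and iff $X_1 \leq G_m(n-1)$ in the minimization case. Writing $\tau = G_M(n-1)$ (resp.\ $\tau = G_m(n-1)$), the law of total expectation yields
\[
G_M(n) \;=\; \int_\tau^{x^*} u\, f(u)\,\dif u \;+\; \tau\, F(\tau), \qquad
G_m(n) \;=\; \int_0^{\tau} u\, f(u)\,\dif u \;+\; \tau\,\bigl(1 - F(\tau)\bigr).
\]

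Next, I would convert both into the stated integrals of $1-F$ by integration by parts. For the maximization recurrence, writing $u\, f(u)\,\dif u = -u\, \dif(1-F(u))$ and using that $u(1-F(u)) \to 0$ as $u \to x^*$ (which follows from $F \in D_\gamma$ with $\gamma < 1$, so in particular $\E[X] < +\infty$; if $\gamma \geq 1$ the algorithm achieves ratio $1$ trivially and the recursion is of no interest), I get
\[
\int_\tau^{x^*} u\, f(u)\,\dif u \;=\; \tau\bigl(1-F(\tau)\bigr) + \int_\tau^{x^*} \bigl(1-F(u)\bigr)\,\dif u,
\]
so that $G_M(n) = \tau + \int_\tau^{x^*}(1-F(u))\,\dif u$, which is the claim. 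For the minimization recurrence, integrating $u\,f(u)\,\dif u = u\,\dif F(u)$ by parts on $[0,\tau]$ gives
\[
\int_0^{\tau} u\, f(u)\,\dif u \;=\; \tau F(\tau) - \int_0^\tau F(u)\,\dif u,
\]
and plugging this into the expression for $G_m(n)$ cancels the $\tau F(\tau)$ terms, leaving
\[
G_m(n) \;=\; \tau - \int_0^\tau F(u)\,\dif u \;=\; \int_0^\tau \bigl(1-F(u)\bigr)\,\dif u,
\]
as desired.

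The only nontrivial point is justifying the boundary term $\lim_{u\to x^*} u(1-F(u)) = 0$ in the maximization case when $x^* = +\infty$; this is where the assumption $F \in D_\gamma$ with $\gamma<1$ (so that $\E[X]$ is finite and $1-F$ is regularly varying with index $-1/\gamma < -1$) is used. Everything else is a direct calculation from the one-step Bellman equation satisfied by the optimal threshold policy.
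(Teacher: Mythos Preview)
Your proposal is correct and follows essentially the same approach as the paper: both start from the one-step Bellman equation for the optimal threshold policy and reduce to the claimed integrals via integration by parts. The only cosmetic difference is that the paper routes through $\E[X] = \int_0^{x^*}(1-F(u))\,\dif u$ and integrates by parts on $[0,\tau]$, whereas you integrate by parts directly on $[\tau,x^*]$; both implicitly rely on the same boundary condition $u(1-F(u)) \to 0$ as $u \to x^*$, which you make explicit (and which in fact only needs $\E[X]<\infty$, not the full EVT hypothesis).
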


Recall the cumulative hazard rate functions $H(x) = -\log\prn{1 - F(x)}$ and $R(x) = \log{F(x)}$.

\begin{lemma}\label{lem:opt-dp-hazard}
For any $n > 1$, we have
\[
G_M(n) = G_M(n-1) \prn{1 - e^{-H(G_M(n-1))}} + \int^{\exp\prn{-H(G_M(n-1))}}_0 {H^{\inv}\prn{-\log{u}} \dif u},
\]
and
\[
G_m(n) = G_m(n-1) \prn{1 - e^{R(G_m(n-1))}} + \int^{\exp\prn{R(G_m(n-1))}}_0 {R^{\inv}\prn{\log{u}} \dif u}.
\]
\end{lemma}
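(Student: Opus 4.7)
The plan is to derive both identities from Lemma~\ref{lem:opt-dp-g} by performing a single integration by parts followed by a change of variables that converts the CDF into its hazard-rate representation. The key identities to invoke are $1 - F(x) = e^{-H(x)}$ and $F(x) = e^{R(x)}$, proved immediately after Definition~\ref{def:hazard-rate}, which let me rewrite $f(u)\,du$ as the differential of $-e^{-H(u)}$ or $e^{R(u)}$ and thereby turn an integral in $u$ into an integral in a CDF-valued variable.

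For the maximization case, I would begin with $G_M(n) = G_M(n-1) + \int_{G_M(n-1)}^{x^*} (1 - F(u))\, du$ and apply integration by parts with antiderivative $u$ on $1$ and derivative $-f(u)$ on $1-F(u)$. This yields
\[
\int_{G_M(n-1)}^{x^*} (1 - F(u))\, du = -G_M(n-1)\bigl(1 - F(G_M(n-1))\bigr) + \int_{G_M(n-1)}^{x^*} u\, f(u)\, du,
\]
where the boundary term $\lim_{u \to x^*} u(1-F(u))$ vanishes. Substituting $F(G_M(n-1)) = 1 - e^{-H(G_M(n-1))}$ and adding $G_M(n-1)$ back produces the first summand of the claim. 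For the remaining integral I would substitute $v = e^{-H(u)}$, equivalently $u = H^{\inv}(-\log v)$, with $dv = -f(u)\, du$; the limits $u = G_M(n-1)$ and $u = x^*$ map to $v = e^{-H(G_M(n-1))}$ and $v = 0$, so the integral becomes exactly $\int_0^{e^{-H(G_M(n-1))}} H^{\inv}(-\log v)\, dv$, matching the target.

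The minimization case is analogous but slightly cleaner, since the domain $[0, G_m(n-1)]$ is bounded and no boundary behavior at $x^*$ is needed. Integration by parts on $\int_0^{G_m(n-1)} (1 - F(u))\, du$ gives $G_m(n-1)(1 - F(G_m(n-1))) + \int_0^{G_m(n-1)} u\, f(u)\, du$, and the substitution $v = F(u) = e^{R(u)}$, equivalently $u = R^{\inv}(\log v)$, converts the remaining integral into $\int_0^{e^{R(G_m(n-1))}} R^{\inv}(\log v)\, dv$. Replacing $F(G_m(n-1))$ by $e^{R(G_m(n-1))}$ in the boundary term yields the claimed expression.

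The main obstacle is the boundary term in the maximization case when $x^* = +\infty$: the manipulation requires $\lim_{u \to \infty} u(1 - F(u)) = 0$, which fails precisely when $\E[X] = \infty$, i.e.\ when $\gamma \geq 1$. This is not a real problem because that regime is already handled trivially (accepting the first sample gives ACR $1$, as noted in the discussion below Theorem~\ref{thm:unified-acr-case-1}), so the recursion of Lemma~\ref{lem:opt-dp-hazard} is only invoked in the range $\gamma < 1$ (respectively $\gamma \leq 0$ for minimization). A secondary technicality is that the change-of-variables step requires $H$ and $R$ to be genuine inverses on the relevant subintervals, which follows because $f$ is strictly positive on the interior of the support, making $H$ and $R$ strictly increasing there so that $H^{\inv}$ and $R^{\inv}$ coincide with the left-continuous inverse of Definition~\ref{def:ch2-inverse}.
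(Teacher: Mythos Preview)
Your proposal is correct and follows essentially the same approach as the paper: both start from Lemma~\ref{lem:opt-dp-g}, invoke $1-F=e^{-H}$ and $F=e^{R}$, and combine a change of variables with integration by parts; the only difference is that you integrate by parts in $u$ first and then substitute $v=e^{-H(u)}$ (resp.\ $v=e^{R(u)}$), whereas the paper substitutes first and then integrates by parts in the new variable. Your explicit treatment of the boundary term $\lim_{u\to x^*} u(1-F(u))=0$ (and its failure when $\gamma\ge 1$) is a point the paper's proof uses silently, so if anything your write-up is slightly more careful.
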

\begin{proof}
Substituting the definitions of $H$ and $R$ into Lemma~\ref{lem:opt-dp-g}, we have
\[
G_M(n) = G_M(n-1) + \int^{x^*}_{G_M(n-1)} {e^{-H(u)} \dif u},
\]
and
\[
G_m(n) = \int^{G_m(n-1)}_0 {1 - e^{R(u)} \dif u}.
\]
Let $x = e^{-H(u)}, t = e^{R(u)}$, which implies that $\dif u = \prn{H^{\inv}\prn{-\log{x}}}^{'} \dif x = \prn{R^{\inv}\prn{\log{t}}}^{'} \dif t$. Thus,
\[
G_M(n) = G_M(n-1) + \int^0_{\exp\prn{-H(G_M(n-1))}} {x \cdot \prn{H^{\inv}\prn{-\log{x}}}^{'} \dif x},
\]
and
\[
G_m(n) = \int^{\exp\prn{R(G_m(n-1))}}_0 {(1 - t) \prn{R^{\inv}\prn{\log{t}}}^{'} \dif t}.
\]
Integrating by parts, we obtain
\begin{align*}
G_M(n) &= G_M(n-1) + \brk{x \cdot H^{\inv}\prn{-\log{x}} }^0_{\exp\prn{-H(G_M(n-1))}} - \int^0_{\exp\prn{-H(G_M(n-1))}} {H^{\inv}\prn{-\log{x}} \dif x} \\
&= G_M(n-1) - G_M(n-1) \cdot e^{-H(G_M(n-1))} + \int^{\exp\prn{-H(G_M(n-1))}}_0 {H^{\inv}\prn{-\log{x}} \dif x} \\
&= G_M(n-1) \prn{1 - e^{-H(G_M(n-1))}} + \int^{\exp\prn{-H(G_M(n-1))}}_0 {H^{\inv}\prn{-\log{x}} \dif x},
\end{align*}
and
\begin{align*}
G_m(n) &= \brk{(1 - t) R^{\inv}\prn{\log{t}} }^{\exp\prn{R(G_m(n-1))}}_0  + \int^{\exp\prn{R(G_m(n-1))}}_0 {R^{\inv}\prn{\log{t}} \dif t} \\
&= G_m(n-1) \prn{1 - e^{R(G_m(n-1))}}  + \int^{\exp\prn{R(G_m(n-1))}}_0 {R^{\inv}\prn{\log{t}} \dif t}.
\end{align*}
\end{proof}

Next, we need to analyze the integrals of the inverse hazard rates. To do this, we need the following property of $H$ and $R$ from the theory of regular variation of functions.

\begin{lemma}\label{lem:rv-hazard}
Let $U_M(x) = H^{\inv}(-\log{x})$ and $U_m(x) = R^{\inv}(\log{x})$. If $F \in D_\gamma$, then $U_m \in RV_{-\gamma}$ at $0$. Furthermore, for $\gamma \in [0,1)$, we have $U_M \in RV_{-\gamma}$ at $0$, and for $\gamma \leq 0$, we have $x^* < +\infty$ and $x^* - U_M \in R_{-\gamma}$ at $0$.
\end{lemma}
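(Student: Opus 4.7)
The plan is to identify $U_M$ and $U_m$ with the familiar upper and lower tail quantile functions of $F$ and then read the lemma off from the classical characterizations of the extremal domain of attraction. The first step is to unpack the definitions: since $H(x) = -\log\prn{1 - F(x)}$ and $R(x) = \log F(x)$, the left continuous inverses satisfy $H^{\inv}(u) = F^{\inv}(1 - e^{-u})$ and $R^{\inv}(u) = F^{\inv}(e^u)$, which yields
\[
U_M(x) = F^{\inv}(1 - x) \qquad \text{and} \qquad U_m(x) = F^{\inv}(x).
\]
In other words, the lemma is really a statement about the regular variation of the quantile function $F^{\inv}$ near its endpoints.

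The second step is to translate the ``at $0$'' statements into the more standard ``at $\infty$'' language. Setting $U(t) := F^{\inv}(1 - 1/t) = U_M(1/t)$ and $V(t) := F^{\inv}(1/t) = U_m(1/t)$ for large $t$, Definition~\ref{def:reg-var} turns the three assertions of the lemma into: $U \in RV_\gamma$ at $\infty$ for the first maxima statement, $x^* < +\infty$ with $x^* - U \in RV_\gamma$ at $\infty$ for the second, and $V \in RV_\gamma$ at $\infty$ for the minima statement.

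For the maxima statements, I would invoke the classical characterization of the extremal domain of attraction (see de Haan and Ferreira~\cite{dehaan-ferreira-evt}, Theorem~1.1.6): $F \in D_\gamma$ for maxima iff there exists a positive auxiliary function $a(\cdot)$ with
\[
\lim_{t \to \infty} \frac{U(tx) - U(t)}{a(t)} = \frac{x^\gamma - 1}{\gamma}, \qquad x > 0,
\]
(interpreted as $\log x$ when $\gamma = 0$). A routine case split on the sign of $\gamma$ then produces: for $\gamma \in (0,1)$, $x^* = +\infty$ and $U \in RV_\gamma$ (use that $a(t)/U(t) \to \gamma$, then divide the displayed limit by $U(t)$); for $\gamma = 0$ with $x^* = +\infty$, $U$ is slowly varying; for $\gamma = 0$ with $x^* < +\infty$, $U(t) \to x^*$ is trivially slowly varying; and for $\gamma < 0$, $x^* < +\infty$ and $x^* - U \in RV_\gamma$ (take $a(t) = -\gamma\prn{x^* - U(t)}$). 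Combining with the previous step settles the two maxima claims.

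For the minima claim, the cleanest route is the duality $\min_i X_i = -\max_i(-X_i)$. If $\tilde F(y) := 1 - F(-y)$ denotes the cdf of $-X_1$, then $F \in D_\gamma$ for minima iff $\tilde F \in D_\gamma$ for maxima, and a short computation gives $\tilde F^{\inv}(z) = -F^{\inv}(1 - z)$, so the upper tail quantile of $\tilde F$ is exactly $-V(t)$. Since the support of $F$ is $[0, x^*)$, the upper endpoint of $\tilde F$ is $0$, which forces $\gamma \leq 0$. Applying the maxima characterization to $\tilde F$ then gives $0 - (-V(t)) = V(t) \in RV_\gamma$ at $\infty$ when $\gamma < 0$, and $V(t) \to 0$ slowly (so $V \in RV_0$) when $\gamma = 0$. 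The main obstacle throughout is the $\gamma = 0$ case: the extended-regular-variation limit degenerates to $\log x$, which by itself does not imply slow variation of $U$ or $V$, so one must split on whether the relevant endpoint is finite or infinite, leaning on de Haan's $\Pi$-variation theory in the unbounded case and on the triviality that a function with a finite limit is slowly varying in the bounded case.
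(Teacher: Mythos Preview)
Your proposal is correct and follows essentially the same path as the paper: both identify $U_M(x) = F^{\inv}(1-x)$ and $U_m(x) = F^{\inv}(x)$, reduce the lemma to the standard regular-variation characterization of the domain of attraction, and translate from ``at $\infty$'' to ``at $0$''. The only difference is granularity: the paper cites de~Haan--Ferreira Corollary~1.2.10 as a black box that directly delivers $U_M(1/\cdot) \in RV_\gamma$ (resp.\ $x^* - U_M(1/\cdot)$ and $U_m(1/\cdot)$ in $RV_\gamma$), whereas you unpack that corollary from the more primitive Theorem~1.1.6 by case analysis and handle the minima claim through the explicit duality $\min_i X_i = -\max_i(-X_i)$. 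Your route is more self-contained; the paper's is shorter.

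One small caution on your treatment of the $\gamma = 0$ minima case. Your ``triviality that a function with a finite limit is slowly varying'' only works when that limit is a \emph{nonzero} constant. Under the paper's standing assumption the lower endpoint of $F$ is $0$, so the upper endpoint of $\tilde F$ is $0$ and hence $V(t) \to 0$; a positive function tending to $0$ is not automatically in $RV_0$. The conclusion $V \in RV_0$ is still correct, but it genuinely requires the $\Pi$-variation fact you allude to (if $\tilde U \in \Pi(a)$ with $\tilde U(\infty)$ finite, then $\tilde U(\infty) - \tilde U$ is slowly varying), not the triviality. The paper's proof is equally terse at this spot, simply absorbing the point into the Corollary~1.2.10 citation.
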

\begin{proof}
Since $1 - F(x) = e^{-H(x)}$ and $F(x) = e^{R(x)}$, we have that $U_M(x) = F^{\inv}\prn{1 - x}$ and $U_m(x) = F^{\inv}\prn{x}$.

Let $\gamma \in [0,1)$. Since $F \in D_\gamma$, by \cite{dehaan-ferreira-evt} (Corollary 1.2.10), we have that $U_M\prn{\f{1}{x}} \in RV_\gamma$ which implies that
\[
\lim_{t \to \infty} {\frac{U_M\prn{\f{1}{tx}}}{U_M\prn{\f{1}{t}}}} = x^\gamma.
\]
Therefore,
\[
\lim_{t \to 0^+} {\frac{U_M(tx)}{U_M(t)}} = \lim_{t \to 0^+} {\frac{U_m(tx)}{U_m(t)}} = x^{-\gamma},
\]
and $U_M, U_m \in RV_{-\gamma}$ at $0$.

Next, let $\gamma < 0$.  Since $F \in D_\gamma$, by \cite{dehaan-ferreira-evt} (Corollary 1.2.10), we have that $x^* - U_M\prn{\f{1}{x}}$, $U_m\prn{\f{1}{x}} \in RV_\gamma$, which implies that
\[
\lim_{t \to \infty} {\frac{x^* - U_M\prn{\f{1}{tx}}}{x^* - U_M\prn{\f{1}{t}}}} = \lim_{t \to \infty} {\frac{U_m\prn{\f{1}{tx}}}{U_m\prn{\f{1}{t}}}} = x^\gamma.
\]
Therefore,
\[
\lim_{t \to 0^+} {x^* - \frac{U_M(tx)}{x^* - U_M(t)}} = \lim_{t \to 0^+} {\frac{U_m(tx)}{U_m(t)}} = x^{-\gamma},
\]
and $x^* - U_M, U_m \in RV_{-\gamma}$ at $0$.
\end{proof}

Next, we present a famous theorem in the theory of regularly varying functions, due to Karamata, that we use to analyze the asymptotic behaviour of the integral of Lemma~\ref{lem:opt-dp-hazard}. For more details on this see \cite{dehaan-ferreira-evt} (Appendix B) and \cite{resnick-rv}.

\begin{lemma}[Karamata's Theorem]\label{lem:karamatas}
For $F \in D_\gamma$ and large enough $n$, we have
\[
\int^{\exp\prn{-H(G_M(n-1))}}_0 {H^{\inv}\prn{-\log{u}} \dif u} \approx \frac{G_M(n-1) \cdot e^{-H(G_M(n-1))}}{1 - \gamma}, \quad \text{for } \gamma \leq 1,
\]
and
\[
\int^{\exp\prn{R(G_m(n-1))}}_0 {R^{\inv}\prn{\log{u}} \dif u} \approx \frac{G_m(n-1) \cdot e^{R(G_m(n-1))}}{1 - \gamma}, \quad \text{for } \gamma \leq 0.
\]
\end{lemma}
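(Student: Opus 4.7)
The plan is to recognize both asymptotic equivalences as direct instances of the classical \emph{Karamata integration theorem at zero}: if $U$ is locally integrable on $(0,1]$ and $U \in RV_{\rho}$ at $0$ with $\rho > -1$, then
\[
\int_0^{x} U(u)\, \dif u \;\approx\; \frac{x\, U(x)}{\rho + 1}, \qquad x \to 0^+.
\]
This is the zero-version of the standard Karamata result (e.g.\ Theorem~B.1.5 in \cite{dehaan-ferreira-evt}), obtained from the infinity-version by the substitution $u \mapsto 1/u$.

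The first step is to identify the structure of both integrals. Writing $U_M(u) = H^{\inv}(-\log u)$ and $U_m(u) = R^{\inv}(\log u)$ as in Lemma~\ref{lem:rv-hazard}, one checks immediately that $U_M\bigl(e^{-H(G_M(n-1))}\bigr) = G_M(n-1)$ and $U_m\bigl(e^{R(G_m(n-1))}\bigr) = G_m(n-1)$. Hence, setting $x_n = e^{-H(G_M(n-1))}$ and $y_n = e^{R(G_m(n-1))}$, both integrals have the form $\int_0^{x_n} U_M(u)\, \dif u$ and $\int_0^{y_n} U_m(u)\, \dif u$, and the quantities $x_n U_M(x_n)$ and $y_n U_m(y_n)$ are precisely the numerators appearing on the right-hand sides of the claimed asymptotics.

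The second step is to verify that $x_n \to 0^+$ and $y_n \to 0^+$ as $n \to \infty$, which is the content of ``large enough $n$.'' In the max-case, the recurrence of Lemma~\ref{lem:opt-dp-g} shows that $G_M(n)$ is non-decreasing; if $G_M(n) \to L < x^*$, passing to the limit in the recurrence would force $\int_L^{x^*} (1 - F(u))\, \dif u = 0$, which is impossible since $1 - F(u) > 0$ for $u < x^*$. Hence $G_M(n) \to x^*$ and therefore $x_n = 1 - F(G_M(n-1)) \to 0$. The min-case is analogous: $G_m(n)$ is non-increasing and bounded below by $0$, and any putative limit $L > 0$ contradicts the identity $L = \int_0^L (1 - F(u))\, \dif u$ via $\int_0^L F(u)\, \dif u = 0$.

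Finally, classical Karamata at $0$ applies directly. By Lemma~\ref{lem:rv-hazard}, $U_M \in RV_{-\gamma}$ at $0$ for $\gamma \in [0,1)$ and $U_m \in RV_{-\gamma}$ at $0$ for $\gamma \leq 0$, and the required hypothesis $\rho = -\gamma > -1$ becomes exactly the stated range. Substituting $\rho = -\gamma$ and $x = x_n$ (resp.\ $y_n$) yields
\[
\int_0^{x_n} U_M(u)\, \dif u \approx \frac{x_n U_M(x_n)}{1-\gamma} = \frac{G_M(n-1)\, e^{-H(G_M(n-1))}}{1-\gamma},
\]
and the min-case identity follows identically. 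The main obstacle is the (routine but necessary) verification of the limits $G_M(n) \to x^*$ and $G_m(n) \to 0$ from the recurrence of Lemma~\ref{lem:opt-dp-g}; once these are in hand, the lemma reduces to a textbook application of regular variation theory.
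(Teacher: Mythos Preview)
Your proposal is correct and follows essentially the same approach as the paper: invoke Lemma~\ref{lem:rv-hazard} to get regular variation of $U_M$ and $U_m$ at zero, observe that the upper limits of integration tend to $0$, and then apply Karamata's theorem (Theorem~B.1.5 of \cite{dehaan-ferreira-evt}). Your write-up is somewhat more detailed than the paper's, which simply asserts $e^{-H(G_M(n-1))}, e^{R(G_m(n-1))} \to 0$ without the recurrence-based justification you supply from Lemma~\ref{lem:opt-dp-g}.
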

\begin{proof}
Notice that, as $n \to \infty$, $e^{-H(G_M(n-1))}, e^{R(G_m(n-1))} \to 0$. The result then follows directly from Karamata's theorem (\cite{dehaan-ferreira-evt}, Theorem B.1.5), since $H^{\inv}(-\log{x}), R^{\inv}(\log{x}) \in RV_{-\gamma}$ at $0$, by Lemma~\ref{lem:rv-hazard}.
\end{proof}

Now we can combine Lemmas~\ref{lem:opt-dp-hazard} and~\ref{lem:karamatas} to obtain a simplified approximation to $G_M(n)$ and $G_m(n)$ for large $n$.
\begin{lemma}\label{lem:gn-done}
For every $F \in D_\gamma$, where $\gamma < 1$ for the I.I.D. Max-Prophet Inequality, and large enough $n$, we have
\begin{equation}\label{eq:gn-done-1}
G_M(n) \approx G_M(n-1) \prn{1 - e^{-H(G_M(n-1))}\prn{1 - \frac{1}{1-\gamma}}},
\end{equation}
and
\begin{equation}\label{eq:gn-done-2}
G_m(n) \approx G_m(n-1) \prn{1 - e^{R(G_m(n-1))}\prn{1 - \frac{1}{1-\gamma}}}.
\end{equation}
\end{lemma}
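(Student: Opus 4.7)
The proof plan is to simply splice together the exact recursion from Lemma~\ref{lem:opt-dp-hazard} with the asymptotic estimate of the integral provided by Karamata's theorem (Lemma~\ref{lem:karamatas}), and then verify that the asymptotic equality propagates correctly through the resulting sum. Concretely, for the maximization case write $G_M(n) = A_n + B_n$, where
\[
A_n \;\triangleq\; G_M(n-1)\bigl(1 - e^{-H(G_M(n-1))}\bigr), \qquad B_n \;\triangleq\; \int_0^{\exp(-H(G_M(n-1)))} H^{\inv}(-\log u)\, du,
\]
and let $C_n \triangleq \frac{G_M(n-1)\, e^{-H(G_M(n-1))}}{1-\gamma}$. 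Lemma~\ref{lem:karamatas}, applicable because $\gamma < 1$ and $e^{-H(G_M(n-1))} \to 0$ as $n \to \infty$ (monotonicity of $G_M(n)$), gives $B_n \approx C_n$.

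Given this, the main algebraic step is straightforward factoring. Substituting $C_n$ for $B_n$ yields the candidate expression
\[
A_n + C_n \;=\; G_M(n-1)\left(1 - e^{-H(G_M(n-1))}\left(1 - \tfrac{1}{1-\gamma}\right)\right),
\]
which is exactly the right-hand side of \eqref{eq:gn-done-1}. The analogous identification, with $H$ replaced by $R$ throughout and the role of ``$n$ large'' justified by $e^{R(G_m(n-1))} \to 0$, gives \eqref{eq:gn-done-2}; here Karamata is applicable under $\gamma \leq 0$, which is the assumed regime for the minimization side.

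The only genuine care needed is to confirm that substituting an asymptotically equal quantity \emph{inside a sum} preserves the overall $\approx$ relation, since in principle the additive $A_n$ could dwarf the correction. I would handle this with the one-line estimate
\[
\frac{A_n + B_n}{A_n + C_n} \;=\; 1 + \frac{B_n - C_n}{A_n + C_n} \;=\; 1 + \frac{o(C_n)}{A_n + C_n},
\]
and observe that $\frac{C_n}{A_n + C_n} \leq 1$, so the error term vanishes regardless of whether $A_n$ dominates $C_n$ or not; hence $G_M(n) / (A_n + C_n) \to 1$, and symmetrically for $G_m(n)$. I do not expect any real obstacle here: the whole lemma is a bookkeeping consequence of the two preceding lemmas, and the only substantive ingredient, the asymptotics of the inverse-hazard integral, has already been isolated in Lemma~\ref{lem:karamatas}. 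The content of Lemma~\ref{lem:gn-done} is precisely to package this into a clean one-step recursion suited for the subsequent iteration argument used to prove Theorem~\ref{thm:unified-acr-case-1}.
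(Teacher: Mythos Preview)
Your proposal is correct and matches the paper's approach exactly: the paper presents Lemma~\ref{lem:gn-done} without proof, stating only that it is obtained by ``combin[ing] Lemmas~\ref{lem:opt-dp-hazard} and~\ref{lem:karamatas},'' which is precisely the splice you describe. Your explicit check that the $\approx$ relation survives the additive combination (via $\frac{C_n}{A_n+C_n}\le 1$, which holds since $A_n\ge 0$ and $C_n>0$ when $\gamma<1$) is a detail the paper leaves implicit, but it is sound and in the intended spirit.
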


We now turn our attention to the expectation of the first and last order statistic of $F$, i.e. to $\mu_{1:n}$ and $\mu_{n:n}$.

\begin{lemma}\label{lem:mu-approx}
For $F \in D_\gamma$ and large enough $n$, we have
\[
\mu_{n:n} \approx \begin{cases}
\Gamma(1 - \gamma) F^{\inv}\prn{1 - \frac{1}{n}}, & \text{for } \gamma \in (0,1), \\
F^{\inv}\prn{1 - \frac{e^{-\gamma^*}}{n}}, & \text{for } \gamma = 0, \\
x^* - \Gamma(1 - \gamma) \prn{x^* - F^{\inv}\prn{1 - \frac{1}{n}}}, & \text{for } \gamma < 0,
\end{cases}
\]
and
\[
\mu_{1:n} \approx \begin{cases}
\Gamma(1 - \gamma) F^{\inv}\prn{\frac{1}{n}}, & \text{for } \gamma < 0, \\
F^{\inv}\prn{\frac{e^{-\gamma^*}}{n}}, & \text{for } \gamma = 0.
\end{cases}
\]
where $\gamma^* \approx 0.577$ is the Euler-Mascheroni constant.
\end{lemma}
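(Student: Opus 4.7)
The plan is to leverage the Extreme Value Theorem directly: once $F \in D_\gamma$, the properly normalized maximum (resp.\ minimum) converges in distribution to $G_\gamma$ (resp.\ $G^*_\gamma$), and then the task becomes (a) computing the mean of the limiting generalized extreme value distribution, (b) lifting convergence in distribution to convergence in mean, and (c) translating the resulting expression, which involves the normalization sequences $a_n,b_n$, back into the language of quantiles $F^{\inv}$.

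First I would recall, via the substitution $w=1+\gamma z$ in the defining integral of $G_\gamma$, that a random variable $Z\sim G_\gamma$ satisfies
\[
\E[Z]=\frac{\Gamma(1-\gamma)-1}{\gamma}\quad\text{for }\gamma\neq 0,\qquad \E[Z]=\gamma^{*}\ \text{for }\gamma=0,
\]
and analogously for $G^*_\gamma$. Then I would fix the canonical normalizations. Set $U(t)\triangleq F^{\inv}(1-1/t)$ and $b_n=U(n)$ in all three cases; by Lemma~\ref{lem:rv-hazard}, $U\in RV_\gamma$ at infinity (for $\gamma>0$) or $x^*-U\in RV_\gamma$ (for $\gamma<0$). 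The natural scaling is $a_n=\gamma U(n)$ for $\gamma>0$ and $a_n=-\gamma(x^*-U(n))$ for $\gamma<0$, while for $\gamma=0$ the auxiliary function $a(t)$ arising from the $\Pi$-variation characterization of $D_0$ satisfies $U(tx)-U(t)\approx a(t)\log x$.

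With this set-up, the key step is the assertion $\E[Y_n]\approx b_n+a_n\E[Z]$, which requires upgrading convergence in distribution to convergence of first moments. I would establish uniform integrability of $(Y_n-b_n)/a_n$ using the Potter bounds for regularly varying $U$ together with the assumption $\gamma<1$ (in the maximization case), which guarantees $\E[X]<\infty$; a symmetric argument handles the minimum. Once this is done, the formulas fall out by direct algebra: for $\gamma\in(0,1)$,
\[
\E[Y_n]\approx U(n)+\gamma U(n)\cdot\frac{\Gamma(1-\gamma)-1}{\gamma}=\Gamma(1-\gamma)\,F^{\inv}\!\bigl(1-\tfrac{1}{n}\bigr);
\]
for $\gamma<0$, substituting $a_n=-\gamma(x^*-U(n))$ and simplifying gives the stated $x^*-\Gamma(1-\gamma)(x^*-F^{\inv}(1-1/n))$; and for minima one repeats the argument with $G^*_\gamma$ and $b'_n=F^{\inv}(1/n)$.

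The Gumbel case $\gamma=0$ is the delicate one and I expect it to be the main technical obstacle, because the two terms $U(n)$ and $a(n)\gamma^{*}$ are of different orders and the stated form $F^{\inv}(1-e^{-\gamma^{*}}/n)$ hides $a(n)$ inside a quantile. I would recover it by invoking the $\Pi$-variation property of $U$ in $D_0$: writing $F^{\inv}(1-e^{-\gamma^{*}}/n)=U(ne^{\gamma^{*}})$ and using $U(ne^{\gamma^{*}})-U(n)\approx a(n)\log(e^{\gamma^{*}})=\gamma^{*}a(n)$, so that $U(n)+\gamma^{*}a(n)\approx U(ne^{\gamma^{*}})$, which matches the mean formula. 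The analogous identity for minima uses $F^{\inv}(e^{-\gamma^{*}}/n)$ and the corresponding $\Pi$-variation for minima. Beyond this reinterpretation, the remaining work is bookkeeping of the asymptotic equivalences and verifying uniform integrability, both of which are standard once the regular variation of $U$ (Lemma~\ref{lem:rv-hazard}) is in hand.
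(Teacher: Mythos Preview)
Your proposal is correct and follows essentially the same route as the paper: both invoke the Extreme Value Theorem with the canonical normalizations from de Haan--Ferreira, pass to the mean of the limiting $G_\gamma$ (resp.\ $G^*_\gamma$), and handle the Gumbel case via the $\Pi$-variation identity $U(nx)-U(n)\approx a(n)\log x$ evaluated at $x=e^{\gamma^*}$. The only differences are cosmetic: you keep $b_n=U(n)$ throughout and use the unified GEV mean $(\Gamma(1-\gamma)-1)/\gamma$, whereas the paper uses the classical Fr\'echet/Weibull centerings $b_n=0$ and $b_n=x^*$ (with means $\pm\Gamma(1-\gamma)$), and you are more explicit about the uniform-integrability step that the paper simply asserts.
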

\begin{proof}
Since $F \in D_\gamma$, by Theorem~\ref{thm:gnedenko}, we know that there exist sequences $a_n > 0, b_n \in \R$ such that
\begin{equation}\label{eq:gnedenko-mu-approx-max}
\lim_{n \to \infty} F^n(a_n x + b_n) = G_\gamma(x).
\end{equation}
Let $M_n = \max\set{X_1, \dots, X_n}$. If \eqref{eq:gnedenko-mu-approx-max} is satisfied, it has to be satisfied for
\begin{itemize}
    \item $a_n = U_M\prn{\f{1}{n}}$ and $b_n = 0$, if $\gamma \in (0,1)$,
    \item $b_n = U_M\prn{\f{1}{n}}$ and appropriately chosen $a_n$ if $\gamma = 0$, and
    \item $a_n = x^* - U_M\prn{\f{1}{n}}$ and $b_n = x^*$, if $\gamma < 0$,
\end{itemize}
by \cite{dehaan-ferreira-evt} (Corollary 1.2.4). Let $Y_n = \max\set{\f{X_1 - b_n}{a_n}, \dots, \f{X_n - b_n}{a_n}} = \frac{M_n - b_n}{a_n}$. The above imply that $\lim_{n \to \infty} Y_n$ converges in distribution to a random variable $Z$ distributed according to $G_\gamma$. Notice that, for $0 < \gamma < 1$, we have $\E[Z] = \Gamma(1 - \gamma)$, for $\gamma = 0$, we have $\E[Z] = \gamma^*$ and for $\gamma < 0$, we have $\E[Z] = - \Gamma(1 - \gamma)$ (\cite{dehaan-ferreira-evt}, Theorem 5.3.1). This implies that, for large enough $n$ and $\gamma \in (0,1)$
\[
\E[Y_n] \approx \Gamma(1 - \gamma) \iff \mu_{n:n} \approx \Gamma(1 - \gamma) a_n = \Gamma(1 - \gamma) F^{\inv}\prn{1 - \frac{1}{n}}.
\]
For $\gamma = 0$, we have
\begin{equation}\label{eq:mu-approx-max-1}
\E[Y_n] \approx \gamma^* \iff \mu_{n:n} \approx a_n \gamma^* + b_n = a_n \gamma^* + U_M\prn{\f{1}{n}} = a_n \gamma^* + F^{\inv}\prn{1 - \frac{1}{n}}.
\end{equation}
Finally, for $\gamma = 0$, by \cite{dehaan-ferreira-evt} (Theorem 1.1.6), we have that, for any $x > 1$
\begin{equation}\label{eq:mu-approx-max-2}
a_n \approx \frac{U_M\prn{\f{1}{x n}} - U_M\prn{\f{1}{n}}}{\log{x}} = \frac{F^{\inv}\prn{1 - \frac{1}{x n}} - F^{\inv}\prn{1 - \frac{1}{n}}}{\log{x}}.
\end{equation}
Combining \eqref{eq:mu-approx-max-1} and \eqref{eq:mu-approx-max-2}, we get
\[
\mu_{n:n} \approx \frac{\gamma^*}{\log{x}} F^{\inv}\prn{1 - \frac{1}{x n}} + \prn{1 - \frac{\gamma^*}{\log{x}}} F^{\inv}\prn{1 - \frac{1}{n}}.
\]
Setting $x = e^{\gamma^*}$ yields
\[
\mu_{n:n} \approx F^{\inv}\prn{1 - \frac{e^{-\gamma^*}}{n}}.
\]
Finally, for $\gamma < 0$, we have
\[
\E[Y_n] \approx -\Gamma(1 - \gamma) \iff \mu_{n:n} \approx -\Gamma(1 - \gamma) a_n  + b_n = x^* - \Gamma(1 - \gamma) \prn{x^* - F^{\inv}\prn{1 - \frac{1}{n}}}.
\]

Similarly, by Theorem~\ref{thm:gnedenko}, we know that there exist sequences $a'_n > 0, b'_n \in \R$ such that
\begin{equation}\label{eq:gnedenko-mu-approx-min}
\lim_{n \to \infty} \prn{1 - F(a_n x + b_n)}^n = G^*_\gamma(x).
\end{equation}
Let $m_n = \min\set{X_1, \dots, X_n}$. If \eqref{eq:gnedenko-mu-approx-min} is satisfied, it has to be satisfied for $a'_n = U_m\prn{\f{1}{n}}$ and $b'_n = 0$, if $\gamma < 0$ and for $b'_n = U_m\prn{\f{1}{n}}$ and appropriately chosen $a'_n$ if $\gamma = 0$ (\cite{dehaan-ferreira-evt}, Corollary 1.2.4). Let $Y'_n = \min\set{\f{X_1 - b'_n}{a'_n}, \dots, \f{X_n - b'_n}{a'_n}} = \frac{m_n - b'_n}{a'_n}$. The above imply that $\lim_{n \to \infty} Y'_n$ converges in distribution to a random variable $Z'$ distributed according to $G^*_\gamma$. Notice that, for $\gamma < 0$, we have $\E[Z'] = \Gamma(1 - \gamma)$, whereas for $\gamma = 0$, we have $\E[Z'] = -\gamma^*$. This implies that, for large enough $n$,
\[
\E[Y'_n] \approx \Gamma(1 - \gamma) \iff \mu_{1:n} \approx \Gamma(1 - \gamma) a'_n = \Gamma(1 - \gamma) F^{\inv}\prn{\frac{1}{n}},
\]
whereas for $\gamma = 0$
\begin{equation}\label{eq:mu-approx-min-1}
\E[Y'_n] \approx -\gamma^* \iff \mu_{1:n} \approx - a'_n \gamma^* + b'_n = - a'_n \gamma^* + U_m\prn{\f{1}{n}} = - a'_n \gamma^* + F^{\inv}\prn{\frac{1}{n}}.
\end{equation}
Finally, for $\gamma = 0$, by \cite{dehaan-ferreira-evt} (Theorem 1.1.6), we have that, for any $x > 1$
\begin{equation}\label{eq:mu-approx-min-2}
a'_n \approx \frac{U_m\prn{\f{1}{x n}} - U_m\prn{\f{1}{n}}}{- \log{x}} = \frac{F^{\inv}\prn{\frac{1}{x n}} - F^{\inv}\prn{\frac{1}{n}}}{- \log{x}}.
\end{equation}
Combining \eqref{eq:mu-approx-min-1} and \eqref{eq:mu-approx-min-2}, we get
\[
\mu_{1:n} \approx \frac{\gamma^*}{\log{x}} F^{\inv}\prn{\frac{1}{x n}} + \prn{1 - \frac{\gamma^*}{\log{x}}} F^{\inv}\prn{\frac{1}{n}}.
\]
Setting $x = e^{\gamma^*}$ yields
\[
\mu_{1:n} \approx F^{\inv}\prn{\frac{e^{-\gamma^*}}{n}}.
\]
\end{proof}

The asymptotic expression for $\mu_{n:n}$ and $\mu_{1:n}$ leads us to consider alternative representations of the tail quantiles of $F$.

\begin{lemma}\label{lem:mult-quantile-approx}
For every $F \in D_\gamma$, $c > 0$ and large enough $n$, we have
\begin{itemize}
\item For $\gamma \in (0,1)$,
\[
F^{\inv}\prn{1 - \frac{c}{n}} \approx c^{-\gamma} F^{\inv}\prn{1 - \frac{1}{n}}.
\]
\item For $\gamma < 0$,
\[
x^* - F^{\inv}\prn{1 - \frac{c}{n}} \approx c^{-\gamma} \prn{x^* - F^{\inv}\prn{1 - \frac{1}{n}}},
\]
and
\[
F^{\inv}\prn{\frac{c}{n}} \approx c^{-\gamma} F^{\inv}\prn{\frac{1}{n}}.
\]
\end{itemize}
\end{lemma}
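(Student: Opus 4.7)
The plan is to observe that this lemma is essentially a direct translation of Lemma~\ref{lem:rv-hazard} through the definition of regular variation at $0$. Recall that a function $f$ is regularly varying at $0$ with index $\alpha$ means $\lim_{t \to 0^+} f(tc)/f(t) = c^{\alpha}$ for every $c > 0$, and from Lemma~\ref{lem:rv-hazard} we have the identifications $U_M(x) = F^{\inv}(1-x)$ and $U_m(x) = F^{\inv}(x)$.

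For $\gamma \in (0,1)$, Lemma~\ref{lem:rv-hazard} gives $U_M \in RV_{-\gamma}$ at $0$. Specializing the regular variation limit with $t = 1/n$ (which tends to $0^+$ as $n \to \infty$) and the constant $c$, I obtain
\[
\lim_{n \to \infty} \frac{U_M(c/n)}{U_M(1/n)} = c^{-\gamma},
\]
which by the identification $U_M(x) = F^{\inv}(1-x)$ and the definition of $\approx$ is exactly the first claim $F^{\inv}(1 - c/n) \approx c^{-\gamma} F^{\inv}(1 - 1/n)$.

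For $\gamma < 0$, Lemma~\ref{lem:rv-hazard} gives both $x^* - U_M \in RV_{-\gamma}$ and $U_m \in RV_{-\gamma}$ at $0$. Applying the same specialization $t = 1/n$ to each, I get
\[
\lim_{n \to \infty} \frac{x^* - U_M(c/n)}{x^* - U_M(1/n)} = c^{-\gamma}
\quad \text{and} \quad
\lim_{n \to \infty} \frac{U_m(c/n)}{U_m(1/n)} = c^{-\gamma},
\]
which translate via the identifications into the two remaining displays of the lemma. I expect no real obstacle here since both ingredients -- the asymptotic expressions for $U_M, U_m$ in terms of $F^{\inv}$ and the regular variation indices -- are already established; the one point worth stating carefully is that $c$ is fixed while $n \to \infty$, so that the regular variation limit applies directly without needing any uniformity in $c$.
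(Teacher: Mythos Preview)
Your proposal is correct and takes essentially the same approach as the paper: both arguments reduce the lemma to the regular variation of the tail quantile functions $U_M$, $x^*-U_M$, and $U_m$, then specialize the defining limit with the scaling factor $c$. The only cosmetic difference is that the paper cites the de Haan--Ferreira corollary directly and works with regular variation at infinity (substituting $x=1/c$ at the end), whereas you route through Lemma~\ref{lem:rv-hazard} and phrase everything in terms of regular variation at $0$; these are equivalent formulations of the same fact.
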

\begin{proof}
Let $\gamma \in (0,1)$. Since $F \in D_\gamma$, we know that $U_M\prn{\f{1}{x}} \in RV_{\gamma}$ (\cite{dehaan-ferreira-evt}, Corollary 1.2.10). By the definition of regular variation, we have
\[
\lim_{n \to \infty} \frac{U_M\prn{\f{1}{x n}}}{U_M\prn{\f{1}{n}}} = x^{\gamma},
\]
for all $x > 0$. Recall that $U_M\prn{\f{1}{x}} = F^{\inv}\prn{1 - \frac{1}{x}}$, and thus
\[
\lim_{n \to \infty} {\frac{F^{\inv}\prn{1 - \frac{1}{x n}}}{F^{\inv}\prn{1 - \frac{1}{n}}}} = x^{\gamma},
\]
for all $x > 0$. For $x = \frac{1}{c}$, we obtain, for large enough $n$
\[
F^{\inv}\prn{1 - \frac{c}{n}} \approx c^{-\gamma} F^{\inv}\prn{1 - \frac{1}{n}}.
\]

Next, let $\gamma < 0$. Since $F \in D_\gamma$, we know that $x^* - U_M\prn{\f{1}{x}}, U_m\prn{\f{1}{x}} \in RV_{\gamma}$ (\cite{dehaan-ferreira-evt}, Corollary 1.2.10). By the definition of regular variation, we have
\[
\lim_{n \to \infty} \frac{x^* - U_M\prn{\f{1}{x n}}}{x^* - U_M\prn{\f{1}{n}}} = \lim_{n \to \infty} \frac{U_m\prn{\f{1}{x n}}}{U_m\prn{\f{1}{n}}} = x^{\gamma},
\]
for all $x > 0$. Recall that $x^* - U_M\prn{\f{1}{x}} = x^* - F^{\inv}\prn{1 - \frac{1}{x}}$ and $U_m\prn{\f{1}{x}} = F^{\inv}\prn{\frac{1}{x}}$. Therefore,
\[
\lim_{n \to \infty} {\frac{x^* - F^{\inv}\prn{1 - \frac{1}{x n}}}{x^* - F^{\inv}\prn{1 - \frac{1}{n}}}} = \lim_{n \to \infty} {\frac{F^{\inv}\prn{\frac{1}{x n}}}{F^{\inv}\prn{\frac{1}{n}}}} = x^{\gamma},
\]
for all $x > 0$. For $x = \frac{1}{c}$, we obtain, for large enough $n$
\[
x^* - F^{\inv}\prn{1 - \frac{c}{n}} \approx c^{-\gamma} \prn{x^* - F^{\inv}\prn{1 - \frac{1}{n}}},
\]
and
\[
F^{\inv}\prn{\frac{c}{n}} \approx c^{-\gamma} F^{\inv}\prn{\frac{1}{n}}.
\]
\end{proof}

Next, we use Lemmas~\ref{lem:mu-approx} and~\ref{lem:mult-quantile-approx} to characterize exactly how the ratio of the prophet's expected value for $n-1$ and $n$ behaves as $n \to \infty$.

\begin{lemma}\label{lem:mu-ratio-approx}
For $F \in D_\gamma$ and large enough $n$, we have
\begin{itemize}
\item For $\gamma \in (0,1)$,
\[
\frac{\mu_{n-1:n-1}}{\mu_{n:n}} = 1 - \frac{\gamma}{n} + \ltlo{\frac{1}{n}}.
\]
\item For $\gamma < 0$,
\[
\frac{x^* - \mu_{n-1:n-1}}{x^* - \mu_{n:n}} = 1 - \frac{\gamma}{n} + \ltlo{\frac{1}{n}}.
\]
and
\[
\frac{\mu_{1:n-1}}{\mu_{1:n}} = 1 - \frac{\gamma}{n} + \ltlo{\frac{1}{n}}.
\]
\end{itemize}
\end{lemma}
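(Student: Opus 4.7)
The plan is to handle all three asymptotic ratios in Lemma~\ref{lem:mu-ratio-approx} by the same template: I reduce each moment ratio to a ratio of two quantiles via Lemma~\ref{lem:mu-approx}, relate those two quantiles via Lemma~\ref{lem:mult-quantile-approx}, and Taylor-expand at the end.

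Concretely, consider the first case, $\gamma \in (0,1)$ for the I.I.D. Max-Prophet Inequality. Lemma~\ref{lem:mu-approx} gives $\mu_{n:n} \approx \Gamma(1-\gamma) F^{\inv}\prn{1 - \f{1}{n}}$ and the analogous statement with $n$ replaced by $n-1$. The $\Gamma(1-\gamma)$ factors cancel, so
\[
\frac{\mu_{n-1:n-1}}{\mu_{n:n}} \approx \frac{F^{\inv}\prn{1 - \f{1}{n-1}}}{F^{\inv}\prn{1 - \f{1}{n}}}.
\]
Writing $\f{1}{n-1} = \f{c_n}{n}$ with $c_n = \f{n}{n-1} \to 1$ and applying Lemma~\ref{lem:mult-quantile-approx}, the right-hand side becomes $c_n^{-\gamma} = \prn{1 - \f{1}{n}}^\gamma$. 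A first-order Taylor expansion then yields $1 - \f{\gamma}{n} + \bigO{\f{1}{n^2}} = 1 - \f{\gamma}{n} + \ltlo{\f{1}{n}}$, which is exactly the claimed form.

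The other two cases follow the same three-step template. For $\gamma < 0$ in the I.I.D. Max-Prophet Inequality, I use the form $x^* - \mu_{n:n} \approx \Gamma(1-\gamma)\prn{x^* - F^{\inv}\prn{1 - \f{1}{n}}}$ of Lemma~\ref{lem:mu-approx}, together with the second part of Lemma~\ref{lem:mult-quantile-approx}, to reduce the ratio $\f{x^* - \mu_{n-1:n-1}}{x^* - \mu_{n:n}}$ to $c_n^{-\gamma}$. For $\gamma < 0$ in the I.I.D. Min-Prophet Inequality, I use $\mu_{1:n} \approx \Gamma(1-\gamma) F^{\inv}\prn{\f{1}{n}}$ and the third part of Lemma~\ref{lem:mult-quantile-approx}. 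In both cases the same Taylor expansion of $(1 - 1/n)^\gamma$ closes the argument.

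The main technical subtlety is that Lemma~\ref{lem:mult-quantile-approx} is stated pointwise in $c$ (for each fixed $c > 0$, the approximation holds as $n \to \infty$), whereas I need to substitute $c_n = \f{n}{n-1}$ that varies with $n$. The justification is Karamata's uniform convergence theorem for regularly varying functions (\cite{dehaan-ferreira-evt}, Theorem B.1.4), which upgrades the pointwise convergence to uniform convergence for $c$ in any compact subset of $(0,\infty)$ once we know from Lemma~\ref{lem:rv-hazard} that $U_M(1/\cdot)$, $U_m(1/\cdot)$ and $x^* - U_M(1/\cdot)$ are regularly varying. Since $c_n \in [1,2]$ for all $n \geq 2$, this is exactly the uniformity I need to let $c_n \to 1$ inside the approximation.
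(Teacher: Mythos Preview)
Your proof follows essentially the same approach as the paper's: both reduce the moment ratio to a quantile ratio via Lemma~\ref{lem:mu-approx}, apply Lemma~\ref{lem:mult-quantile-approx} with $c = c_n = \frac{n}{n-1} = 1 + \frac{1}{n-1}$, and then Taylor-expand $\prn{1 + \frac{1}{n-1}}^{-\gamma}$. Your explicit invocation of the uniform convergence theorem for regularly varying functions to justify plugging in an $n$-dependent $c_n$ is a point the paper glosses over, so in that respect your write-up is slightly more careful.
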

\begin{proof}
Let $\gamma \in (0,1)$. Notice that, by Lemma~\ref{lem:mu-approx}, for large enough $n$, we have
\[
\mu_{n-1:n-1} \approx \Gamma(1 - \gamma) F^{\inv}\prn{1 - \frac{1}{n-1}} = \Gamma(1 - \gamma) F^{\inv}\prn{1 - \frac{c}{n}},
\]
for $c = 1 + \frac{1}{n-1}$. Using Lemma~\ref{lem:mult-quantile-approx}, we obtain
\[
\mu_{n-1:n-1} \approx \Gamma(1 - \gamma) \prn{1+\frac{1}{n-1}}^{-\gamma} F^{\inv}\prn{1 - \frac{1}{n}}.
\]
Also, again by Lemma~\ref{lem:mu-approx}, we have
\[
\mu_{n:n} \approx \Gamma(1 - \gamma) F^{\inv}\prn{1 - \frac{1}{n}},
\]
and thus, for large enough $n$
\[
\frac{\mu_{n-1:n-1}}{\mu_{n:n}} \approx \prn{1+\frac{1}{n-1}}^{-\gamma} = 1 - \frac{\gamma}{n} + \ltlo{\frac{1}{n}}.
\]

Similarly, for $\gamma < 0$, by Lemma~\ref{lem:mu-approx}, for large enough $n$, we have
\[
x^* - \mu_{n-1:n-1} \approx \Gamma(1 - \gamma) F^{\inv}\prn{1 - \frac{1}{n-1}} = \Gamma(1 - \gamma) F^{\inv}\prn{1 - \frac{c}{n}},
\]
and
\[
\mu_{1:n-1} \approx \Gamma(1 - \gamma) F^{\inv}\prn{\frac{1}{n-1}} = \Gamma(1 - \gamma) F^{\inv}\prn{\frac{c}{n}},
\]
for $c = 1 + \frac{1}{n-1}$. Using Lemma~\ref{lem:mult-quantile-approx}, we obtain
\[
x^* - \mu_{n-1:n-1} \approx \Gamma(1 - \gamma) \prn{1 +\frac{1}{n-1}}^{-\gamma} F^{\inv}\prn{1 - \frac{1}{n}},
\]
and
\[
\mu_{1:n-1} \approx \Gamma(1 - \gamma) \prn{1+\frac{1}{n-1}}^{-\gamma} F^{\inv}\prn{\frac{1}{n}}.
\]
Also, again by Lemma~\ref{lem:mu-approx}, we have
\[
x^* - \mu_{n:n} \approx \Gamma(1 - \gamma) F^{\inv}\prn{1 - \frac{1}{n}},
\]
and
\[
\mu_{1:n} \approx \Gamma(1 - \gamma) F^{\inv}\prn{\frac{1}{n}},
\]
and thus, for large enough $n$
\[
\frac{x^* - \mu_{n-1:n-1}}{x^* - \mu_{n:n}} \approx \prn{1+\frac{1}{n-1}}^{-\gamma} = 1 - \frac{\gamma}{n} + \ltlo{\frac{1}{n}},
\]
and
\[
\frac{\mu_{1:n-1}}{\mu_{1:n}} \approx \prn{1+\frac{1}{n-1}}^{-\gamma} = 1 - \frac{\gamma}{n} + \ltlo{\frac{1}{n}}.
\]
\end{proof}

Successive applications of the lemma above yield the following corollary, which may be of independent interest.
\begin{corollary}\label{cor:mu-ratio-approx-gen}
For $F \in D_\gamma$, large enough $n$ and $m < n$, we have
\begin{itemize}
\item For $\gamma \in (0,1)$,
\[
\frac{\mu_{m:m}}{\mu_{n:n}} = \frac{\Gamma(m+1)}{\Gamma(n+1)} \cdot \frac{\Gamma(n+1 - \gamma)}{\Gamma(m+1-\gamma)} + \ltlo{\frac{1}{n}}.
\]
\item For $\gamma < 0$,
\[
\frac{x^* - \mu_{m:m}}{x^* - \mu_{n:n}} = \frac{\Gamma(m+1)}{\Gamma(n+1)} \cdot \frac{\Gamma(n+1 - \gamma)}{\Gamma(m+1-\gamma)} + \ltlo{\frac{1}{n}}.
\]
and
\[
\frac{\mu_{1:m}}{\mu_{1:n}} = \frac{\Gamma(m+1)}{\Gamma(n+1)} \cdot \frac{\Gamma(n+1 - \gamma)}{\Gamma(m+1-\gamma)} + \ltlo{\frac{1}{n}}.
\]
\end{itemize}
In particular, taking the series expansion of $\Gamma(\cdot)$ around infinity, it is easy to see that
\begin{align*}
\frac{\Gamma(m+1)}{\Gamma(n+1)} \cdot \frac{\Gamma(n+1 - \gamma)}{\Gamma(m+1-\gamma)} &= 1 - \frac{k \gamma}{n} + \ltlo{\frac{1}{n}}, \text{ for } m = n - k \\
\frac{\Gamma(m+1)}{\Gamma(n+1)} \cdot \frac{\Gamma(n+1 - \gamma)}{\Gamma(m+1-\gamma)} &= \frac{1}{k^\gamma} - \frac{2 \gamma (\gamma - 1)}{k^\gamma n} + \ltlo{\frac{1}{n}}, \text{ for } m = \f{n}{k}.
\end{align*}
\end{corollary}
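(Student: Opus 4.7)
The plan is to exploit the one-step recursion from Lemma~\ref{lem:mu-ratio-approx} via a telescoping product. I will focus on the case $\gamma \in (0,1)$ for $\mu_{k:k}$; the two remaining cases are structurally identical because Lemma~\ref{lem:mu-ratio-approx} establishes the \emph{same} asymptotic $1 - \gamma/k + o(1/k)$ for all three ratios $\mu_{k-1:k-1}/\mu_{k:k}$, $(x^* - \mu_{k-1:k-1})/(x^* - \mu_{k:k})$, and $\mu_{1:k-1}/\mu_{1:k}$. So everything will be done once, and then ported verbatim.

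First I would write
\[
\frac{\mu_{m:m}}{\mu_{n:n}} = \prod_{k=m+1}^{n} \frac{\mu_{k-1:k-1}}{\mu_{k:k}},
\]
and substitute the expression from Lemma~\ref{lem:mu-ratio-approx}. The key algebraic observation is that the leading factor $1-\gamma/k = (k-\gamma)/k$ is exactly a ratio of consecutive values of the Gamma function, via $\Gamma(z+1) = z\,\Gamma(z)$:
\[
\frac{k-\gamma}{k} = \frac{\Gamma(k+1-\gamma)\,\Gamma(k)}{\Gamma(k-\gamma)\,\Gamma(k+1)}.
\]
Then the product telescopes to $\Gamma(n+1-\gamma)\Gamma(m+1)\big/\bigl[\Gamma(m+1-\gamma)\Gamma(n+1)\bigr]$, which is the closed form in the corollary. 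For the two series expansions, I would apply the standard asymptotic $\Gamma(n+a)/\Gamma(n+b) = n^{a-b}\bigl(1 + (a-b)(a+b-1)/(2n) + o(1/n)\bigr)$ to the Gamma ratio: for $m=n-k$ with $k$ fixed, a direct Taylor expansion of $\prod_{j=0}^{k-1}(n-j-\gamma)/(n-j)$ yields $1 - k\gamma/n + o(1/n)$, while for $m=n/k$ the asymptotic applied to both numerator and denominator gives $1/k^{\gamma} - 2\gamma(\gamma-1)/(k^{\gamma} n) + o(1/n)$.

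The hard part will be controlling the accumulated multiplicative error from the $o(1/k)$ correction in each factor of the product. Taking logarithms, this correction contributes $\sum_{k=m+1}^n \log(1+o(1/k))$, which if interpreted naively is only $o(\log(n/m))$ and is too crude in the regime $m=n/k$. The resolution is to reopen the proof of Lemma~\ref{lem:mu-ratio-approx}: the error there comes from the Taylor expansion $(1+1/(k-1))^{-\gamma} = 1 - \gamma/k + O(1/k^2)$, so the per-step error is in fact $O(1/k^2)$, and $\sum_{k=m+1}^n O(1/k^2) = O(1/m)$. Combined with the telescoped Gamma factor this produces the advertised $o(1/n)$ error in both regimes $m=n-k$ and $m=n/k$. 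Once this quantitative refinement of Lemma~\ref{lem:mu-ratio-approx} is in hand, the argument for $(x^*-\mu_{m:m})/(x^*-\mu_{n:n})$ and $\mu_{1:m}/\mu_{1:n}$ is identical, since Lemma~\ref{lem:mu-ratio-approx} furnishes the very same one-step recursion for each.
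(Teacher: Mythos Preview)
Your approach is essentially the paper's: the corollary is introduced as ``successive applications of the lemma above,'' i.e.\ exactly the telescoping product $\prod_{k=m+1}^{n}(k-\gamma)/k$ collapsed via $\Gamma(z+1)=z\Gamma(z)$, and the paper gives no further detail. Your discussion of error accumulation goes beyond what the paper spells out; one caveat is that the $o(1/k)$ in Lemma~\ref{lem:mu-ratio-approx} comes not only from the Taylor expansion of $(1+1/(k-1))^{-\gamma}$ but also from the asymptotic equalities in Lemmas~\ref{lem:mu-approx} and~\ref{lem:mult-quantile-approx}, which carry no explicit rate, so the $O(1/k^2)$ refinement you propose is not fully justified by the paper's statements alone---but the paper itself does not address this either.
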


We are finally ready to prove Theorem~\ref{thm:unified-acr-case-1}.

\begin{proof}[Proof of Theorem~\ref{thm:unified-acr-case-1}]
Recall that $\lambda_M(n)$ and $\lambda_m(n)$ denote the competitive ratio of the I.I.D. Max-Prophet Inequality and I.I.D. Min-Prophet Inequality settings respectively.

For large enough $n$, we have
\begin{align*}
\lambda_M(n) &= \frac{G_M(n)}{\mu_{n:n}} \approx \frac{G_M(n-1)}{\mu_{n:n}} \prn{1 - e^{-H(G_M(n-1))}\prn{1 - \frac{1}{1-\gamma}}} \\
&= \lambda_M(n-1) \frac{\mu_{n-1:n-1}}{\mu_{n:n}} \prn{1 - e^{-H(\lambda_M(n-1) \mu_{n-1:n-1})}\prn{1 - \frac{1}{1-\gamma}}},
\end{align*}
and
\begin{align*}
\lambda_m(n) &= \frac{G_m(n)}{\mu_{1:n}} \approx \frac{G_m(n-1)}{\mu_{1:n}} \prn{1 - e^{R(G_m(n-1))}\prn{1 - \frac{1}{1-\gamma}}} \\
&= \lambda_m(n-1) \frac{\mu_{1:n-1}}{\mu_{1:n}} \prn{1 - e^{R(\lambda_m(n-1) \mu_{1:n-1})}\prn{1 - \frac{1}{1-\gamma}}},
\end{align*}
by Lemma~\ref{lem:gn-done}. Using Lemma~\ref{lem:mu-ratio-approx}, we obtain
\begin{equation}\label{eq:new-lemma-sub-01}
\lambda_M(n) = \lambda_M(n-1) \prn{1 - \frac{\gamma}{n} + \ltlo{\frac{1}{n}}} \prn{1 - e^{-H(\lambda_M(n-1) \mu_{n-1:n-1})}\prn{1 - \frac{1}{1-\gamma}}},
\end{equation}
and
\begin{equation}\label{eq:new-lemma-sub-02}
\lambda_m(n) = \lambda_m(n-1) \prn{1 - \frac{\gamma}{n} + \ltlo{\frac{1}{n}}} \prn{1 - e^{R(\lambda_m(n-1) \mu_{1:n-1})}\prn{1 - \frac{1}{1-\gamma}}}.
\end{equation}
Next, by \cite{klein-wein} and Theorem~\ref{thm:single-threshold-evt}, we have that
\begin{equation}\label{eq:new-lemma-sub-1}
\lambda_M(n) = \bigO{1} \text{ and } \lambda_M(n) = \bigO{\prn{\log{n}}^{-\gamma}},
\end{equation}
and thus
\begin{equation}\label{eq:new-lemma-sub-2}
\lambda_M(n) - \lambda_M(n-1) \approx 0 \text{ and }
\lambda_m(n) - \lambda_m(n-1) \approx 0.
\end{equation}
To see why the asymptotic equality for $\lambda_m$ is true, notice that there exists a constant $c > 0$ such that
\begin{align*}
\prn{\log(n+1)}^{-\gamma} - \prn{\log(n)}^{-\gamma} &\leq \prn{\log(n+1)}^{\ceil{-\gamma}} - \prn{\log(n)}^{\ceil{-\gamma}} \\
&\leq \prn{\log(n+1) - \log(n)} \sum_{j = 1}^{\ceil{-\gamma}} \prn{\log(n+1)}^{\ceil{-\gamma} - j} \prn{\log(n)}^{j-1} \\
&\leq \log\prn{\frac{n+1}{n}} \cdot c \ceil{-\gamma} \prn{\log{n}}^{\ceil{-\gamma}} \\
&\leq \log\prn{1 + \frac{1}{n}} \cdot c \ceil{-\gamma} \prn{\log{n}}^{\ceil{-\gamma}} \\
&\leq \prn{\frac{1}{n} + \ltlo{\frac{1}{n}}} \cdot c \ceil{-\gamma} \prn{\log{n}}^{\ceil{-\gamma}} \\
&\approx 0,
\end{align*}
where the first inequality follows from the monotonicity of $\log(\cdot)$ and the last inequality follows from the series expansion of $\log(1+z)$ around $z = 0$. Therefore, by \eqref{eq:new-lemma-sub-01}, \eqref{eq:new-lemma-sub-02}, \eqref{eq:new-lemma-sub-1} and \eqref{eq:new-lemma-sub-2}, we have
\begin{align*}
\lambda_M(n-1) \prn{1 - \prn{1 - \frac{\gamma}{n} + \ltlo{\frac{1}{n}}} \prn{1 - e^{-H(\lambda_M(n-1) \mu_{n-1:n-1})}\prn{1 - \frac{1}{1-\gamma}}}} \approx 0,
\end{align*}
and
\begin{align*}
\lambda_m(n-1) \prn{1 - \prn{1 - \frac{\gamma}{n} + \ltlo{\frac{1}{n}}} \prn{1 - e^{R(\lambda_m(n-1) \mu_{1:n-1})}\prn{1 - \frac{1}{1-\gamma}}}} \approx 0.
\end{align*}
Since $\lambda_M, \lambda_m \neq 0$,
\begin{align*}
\prn{1 - \frac{\gamma}{n} + \ltlo{\frac{1}{n}}} \prn{1 - e^{-H(\lambda_M(n-1) \mu_{n-1:n-1})}\prn{1 - \frac{1}{1-\gamma}}}\approx 1,
\end{align*}
and
\begin{align*}
\prn{1 - \frac{\gamma}{n} + \ltlo{\frac{1}{n}}} \prn{1 - e^{R(\lambda_m(n-1) \mu_{1:n-1})}\prn{1 - \frac{1}{1-\gamma}}} \approx 1,
\end{align*}
and by rearranging terms and ignoring lower-order terms, we get
\[
e^{-H(\lambda_M(n-1) \mu_{n-1:n-1})}\prn{1 - \frac{1}{1-\gamma}} \approx \frac{-\gamma}{n},
\]
and
\[
e^{R(\lambda_m(n-1) \mu_{1:n-1})}\prn{1 - \frac{1}{1-\gamma}} \approx \frac{-\gamma}{n}.
\]
Therefore,
\[
e^{-H(\lambda_M(n-1) \mu_{n-1:n-1})} \approx \frac{1-\gamma}{n} \iff H(\lambda_M(n-1) \mu_{n-1:n-1}) \approx -\log\prn{\frac{1-\gamma}{n}},
\]
and
\[
e^{R(\lambda_m(n-1) \mu_{1:n-1})} \approx \frac{1-\gamma}{n} \iff R(\lambda_m(n-1) \mu_{1:n-1}) \approx \log\prn{\frac{1-\gamma}{n}}.
\]
Taking the inverses of $H$ and $R$, we obtain
\[
\lambda_M(n-1) \approx \frac{H^{\inv\prn{-\log\prn{\frac{1-\gamma}{n}}}}}{\mu_{n-1:n-1}},
\]
and
\[
\lambda_m(n-1) \approx \frac{R^{\inv\prn{\log\prn{\frac{1-\gamma}{n}}}}}{\mu_{1:n-1}}.
\]
Since $H^{\inv}\prn{-\log{x}} = F^{\inv}(1 - x)$ and $R^{\inv}\prn{\log{x}} = F^{\inv}(x)$, we obtain
\[
\lambda_M(n-1) \approx \frac{F^{\inv}\prn{1 - \frac{1-\gamma}{n}}}{\mu_{n-1:n-1}},
\]
and
\[
\lambda_m(n-1) \approx \frac{F^{\inv}\prn{\frac{1-\gamma}{n}}}{\mu_{1:n-1}}.
\]
By Lemma~\ref{lem:mu-approx}, we get
\[
\lambda_M(n-1) \approx \frac{F^{\inv}\prn{1 - \frac{1-\gamma}{n}}}{\Gamma(1 - \gamma) \prn{1+\frac{1}{n-1}}^{-\gamma} F^{\inv}\prn{1 - \frac{1}{n}}},
\]
and
\[
\lambda_m(n-1) \approx \frac{F^{\inv}\prn{\frac{1-\gamma}{n}}}{\Gamma(1 - \gamma) \prn{1+\frac{1}{n-1}}^{-\gamma} F^{\inv}\prn{\frac{1}{n}}}.
\]
Finally, using Lemma~\ref{lem:mult-quantile-approx}, we get
\[
\lambda_M(n-1) \approx \frac{\prn{1 - \gamma}^{-\gamma}}{\Gamma(1 - \gamma) \prn{1+\frac{1}{n-1}}^{-\gamma}} \cdot \frac{F^{\inv}\prn{1 - \frac{1}{n}}}{F^{\inv}\prn{1 - \frac{1}{n}}} = \frac{\prn{1 - \gamma}^{-\gamma}}{\Gamma(1 - \gamma) \prn{1+\frac{1}{n-1}}^{-\gamma}},
\]
and
\[
\lambda_m(n-1) \approx \frac{\prn{1 - \gamma}^{-\gamma}}{\Gamma(1 - \gamma) \prn{1+\frac{1}{n-1}}^{-\gamma}} \cdot \frac{F^{\inv}\prn{\frac{1}{n}}}{F^{\inv}\prn{\frac{1}{n}}} = \frac{\prn{1 - \gamma}^{-\gamma}}{\Gamma(1 - \gamma) \prn{1+\frac{1}{n-1}}^{-\gamma}}.
\]
Since $\lambda_M(n-1) \to \lambda_M$ and $\lambda_m(n-1) \to \lambda_m$ as $n$ goes to infinity, we have
\[
\lambda_M = \frac{\prn{1 - \gamma}^{-\gamma}}{\Gamma(1 - \gamma)}, \text{ for } \gamma \in (0,1)
\qquad \text{and} \qquad
\lambda_m = \frac{\prn{1 - \gamma}^{-\gamma}}{\Gamma(1 - \gamma)}, \text{ for } \gamma < 0.
\]
\end{proof}

\subsection{MHR Distributions}\label{sec:mhr}

In this section, we show that for the special case of MHR distributions, $\gamma \leq 0$ for maxima and $\gamma \geq -1$ for minima. Combining this with Theorem~\ref{thm:unified-acr}, for the I.I.D. Max-Prophet Inequality with an MHR distribution, we recover that the asymptotic competitive ratio is $1$, a result of \cite{kesselheim-mhr-ppm} and for the I.I.D. Min-Prophet Inequality, since $\Lambda(-1) = 2$, we recover that the asymptotic competitive ratio is at most $2$, a result of \cite{liv-mehta-cpi}.

\begin{reptheorem}{thm:mhr-acr}
Let $\cD$ be an MHR distribution in the domain of attraction of $D_\gamma$. Then, 
for the I.I.D. Max-Prophet Inequality we have $\gamma \leq 0$ and for the I.I.D. Min-Prophet Inequality we have $\gamma \geq -1$.
\end{reptheorem}
\begin{proof}
First, let $F \in D_\gamma$ for maxima. Then, we have that the hazard rate $h(x)$ is monotonically non-decreasing, which implies that $H(x) = \bigOm{x}$. Therefore, $H^{\inv}(x) = \bigO{x}$ and in particular, $U_M(\f{1}{x}) = H^{\inv}(\log{x}) = \bigO{\log{x}}$.

Assume towards contradiction that $\gamma > 0$. Then, by Lemma~\ref{lem:rv-hazard}, it must be that
\begin{equation}\label{eq:mhr-maxima}
\lim_{t \to \infty} \frac{U_M(\f{1}{tx})}{U_M(\f{1}{t})} = x^\gamma,
\end{equation}
for all $x > 0$. However, we have that
\[
\lim_{t \to \infty} \frac{U_M(\f{1}{tx})}{U_M(\f{1}{t})} \leq \lim_{t \to \infty} \frac{\log{xt}}{\log{t}} = 1,
\]
and thus \eqref{eq:mhr-maxima} does not hold and we arrive at a contradiction.

Next, let $F \in D_\gamma$ for minima. Then, we have that $H(x) = \bigOm{x}$. Notice, however, that $e^{-H(x)} + e^{R(x)} = 1 - F(x) + F(x) = 1$, which implies that
\[
R(x) = \log\prn{1 - e^{-H(x)}},
\]
and thus $R(x) = \log\prn{1 - e^{-\bigOm{x}}}$ for every MHR distribution. Therefore, we have
\[
R^{\inv}\prn{x} = \bigO{-\log{1 - e^x}} \implies U_m(x) = R^{\inv}\prn{\log{x}} = \bigO{-\log{1 - x}}.
\]
Assume towards contradiction that $\gamma < -1 \implies -\gamma > 1$. Then, by Lemma~\ref{lem:rv-hazard}, it must be that
\begin{equation}\label{eq:mhr-minima}
\lim_{t \to 0^+} \frac{U_m(tx)}{U_m(t)} = x^{-\gamma},
\end{equation}
for all $x > 0$. However, we have that
\[
\lim_{t \to 0^+} \frac{U_m(tx)}{U_m(t)} \leq \lim_{t \to 0^+} \frac{-\log{1-xt}}{-\log{1-t}} = x,
\]
and thus \eqref{eq:mhr-minima} does not hold and we arrive at a contradiction.
\end{proof}

\section{Single-Threshold Algorithms for the I.I.D. Min-Prophet Inequality}\label{sec:single-threshold}

In this section, we investigate single-threshold algorithms for the I.I.D. Min-Prophet Inequality and prove Theorem~\ref{thm:single-threshold-evt}. Using Extreme Value Theory, we generalize the results of \cite{liv-mehta-cpi} by designing, for any distribution $F \in D_\gamma$, a single-threshold that achieves a poly-logarithmic competitive ratio, where the exponent is exactly $-\gamma$. We also show that this dependence is optimal up to constants.

\begin{reptheorem}{thm:single-threshold-evt}
For the I.I.D. Min-Prophet Inequality and any distribution $F \in D_\gamma$, where $\gamma < 0$, there exists a single-threshold algorithm that achieves a competitive ratio of $\bigO{\prn{\log{n}}^{-\gamma}}$. Furthermore, this factor is tight; no single-threshold algorithm can achieve a $\ltlo{\prn{\log{n}}^{-\gamma}}$ competitive ratio for all instances.
\end{reptheorem}

Our algorithm sets a fixed threshold $T$ and selects the first realization that is below $T$. If our algorithm ever reaches $X_n$ and has not selected any value, it is forced to pick the realization of $X_n$ regardless of its cost. Recall that $F \in D_\gamma$ for some $\gamma < 0$. Our choice of $T$ is
\[
T = \bigTh{\prn{\frac{\log{n}}{n}}^{-\gamma}}.
\]

For clarity of presentation, we split the analysis of the upper and lower bounds on the competitive ratio into different sections, namely Section~\ref{sec:single-threshold-upper} and Section~\ref{sec:single-threshold-lower}. Theorem~\ref{thm:single-threshold-evt} follows by Theorems~\ref{thm:single-threshold-evt-upper} and~\ref{thm:single-threshold-evt-lower}.

\begin{remark}
Theorem~\ref{thm:single-threshold-evt} holds for $\gamma < 0$. This is because for $\gamma = 0$, as we will see in the proof of Theorem~\ref{thm:single-threshold-evt-upper}, we have that $\mu_{1:1} \leq c \mu_{1:n}$ for some constant $c > 1$ and all $n$. Thus, an algorithm that sets a single threshold $T \in (\mu_{1:n}, \mu_{1:1}]$ achieves a constant competitive ratio.
\end{remark}

Finally, we consider the multiple-selection prophet inequality setting, where the algorithm has to select at least $k$ values, and we show that there exists a single-threshold algorithm that is constant-competitive with respect to $OPT$ for any $k = \bigOm{\log{n}}$. The following theorem is shown in Section~\ref{sec:single-threshold-multi}.

\begin{reptheorem}{thm:single-threshold-evt-multi}
For the $k$-multiple-selection I.I.D. Min-Prophet Inequality with $k = \bigOm{\log{n}}$, and any distribution $F \in D_\gamma$, where $\gamma \leq 0$, there exists a single-threshold algorithm that achieves a constant competitive ratio.
\end{reptheorem}

\subsection{Upper Bound}\label{sec:single-threshold-upper}

\begin{theorem}\label{thm:single-threshold-evt-upper}
For the I.I.D. Min-Prophet Inequality and any distribution $F \in D_\gamma$, there exists a single-threshold $T = T(n, \gamma, F)$ such that the algorithm that selects the first value $X_i \leq T$ for $i < n$ and $X_n$ otherwise, achieves a $\bigO{\prn{\log{n}}^{-\gamma}}$-competitive ratio, for large enough $n$.
\end{theorem}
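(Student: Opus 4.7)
The plan is to take the single threshold $T = F^{\inv}(c \log n / n)$ for a constant $c = c(\gamma)$ chosen large enough. Since $F^{\inv} \in RV_{-\gamma}$ at $0$ by Lemma~\ref{lem:rv-hazard}, this realizes the claimed magnitude $T = \bigTh{\prn{\log n / n}^{-\gamma}}$, up to the slowly varying factor inherited from $F^{\inv}$. I would decompose $\E\brk{\mathrm{ALG}}$ into the \emph{hit} contribution, in which the algorithm accepts some $X_i \leq T$ with $i < n$, and the \emph{forced} contribution, in which none of $X_1,\dots,X_{n-1}$ falls below $T$ and the algorithm takes $X_n$. Bounding $X_i$ by $T$ pointwise in the hit contribution and summing the geometric probabilities gives
\begin{align*}
\sum_{i=1}^{n-1} \E\brk{X_i \mathbf{1}\brk{X_i \leq T,\, X_1,\dots,X_{i-1} > T}} \leq T \sum_{i=1}^{n-1} F(T)\prn{1-F(T)}^{i-1} \leq T,
\end{align*}
while the independence of $X_n$ from $\set{X_1,\dots,X_{n-1} > T}$ makes the forced contribution exactly $\E\brk{X}\cdot\prn{1-F(T)}^{n-1}$. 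Since $F(T) = c\log n/n$, we have $\prn{1-F(T)}^{n-1} \approx n^{-c}$, so $\E\brk{\mathrm{ALG}} \leq T + \E\brk{X}\,n^{-c}$ asymptotically.

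To convert this into a competitive ratio, I would apply Lemma~\ref{lem:mult-quantile-approx} to write $T \approx \prn{c\log n}^{-\gamma} F^{\inv}\prn{1/n}$ and Lemma~\ref{lem:mu-approx} to write $\mu_{1:n} \approx \Gamma(1-\gamma) F^{\inv}\prn{1/n}$. Dividing, the $F^{\inv}\prn{1/n}$ factor cancels and yields $T/\mu_{1:n} \approx c^{-\gamma}\prn{\log n}^{-\gamma}/\Gamma(1-\gamma) = \bigO{\prn{\log n}^{-\gamma}}$. For the forced term, since $F^{\inv}\prn{1/n}$ is regularly varying in $n$ with index $\gamma$, we have $\mu_{1:n} = \bigTh{n^{\gamma}}$ up to slowly varying, so $\E\brk{X}\,n^{-c}/\mu_{1:n}$ has order $n^{-c-\gamma}$ (again up to slowly varying), which is $\ltlo{1}$ once $c > -\gamma$. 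Picking $c = \lceil-\gamma\rceil + 1$ therefore suppresses the forced contribution and leaves the desired $\bigO{\prn{\log n}^{-\gamma}}$ bound; one then checks that with this choice $T$ is indeed $\bigTh{\prn{\log n/n}^{-\gamma}}$ as stated.

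The main obstacle I anticipate is controlling the forced term $\E\brk{X}\prn{1-F(T)}^{n-1}$: the regular-variation hypothesis constrains only the behavior of $F^{\inv}$ near $0$ and says nothing about the upper tail of $\cD$, so $\E\brk{X}$ can in principle be very large or even infinite for $F \in D_\gamma$ with $\gamma < 0$. Under a finite-mean assumption (standard in this setting) the argument above closes directly; otherwise one must sharpen the analysis by truncating $X_n$ at an auxiliary level $T' \gg T$ and splitting $\E\brk{X_n \mathbf{1}\brk{\text{forced}}}$ into a capped piece bounded by $T' \cdot n^{-c}$ and a tail piece $\E\brk{X_n \mathbf{1}\brk{X_n > T'}}\cdot n^{-c}$, tuning $T'$ so that both contributions are $\bigO{\prn{\log n}^{-\gamma}\mu_{1:n}}$. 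A smaller subtlety is that the chain of $\approx$'s are asymptotic equalities, so turning them into a uniform $\bigO{\cdot}$ bound valid for all sufficiently large $n$ requires a standard $\eps$-tightening of the regular-variation estimates, e.g.\ via Potter-type bounds on slowly varying functions.
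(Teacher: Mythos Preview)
Your approach is correct and follows the same overall skeleton as the paper: the same threshold form $T = F^{\inv}(\Theta(\log n)/n)$, the same hit/forced decomposition, and the same use of Lemmas~\ref{lem:mu-approx} and~\ref{lem:mult-quantile-approx} to turn $T/\mu_{1:n}$ into $(\log n)^{-\gamma}$ up to constants. Two small differences are worth noting. First, you bound the hit term crudely by $T$, whereas the paper integrates by parts and invokes Karamata (Lemma~\ref{lem:karamatas}) to get the sharper $\E[X\mid X\le T]\approx T/(1-\gamma)$; your shortcut costs only a constant and makes the argument more elementary. Second, the paper takes $g(n,\gamma)=-\gamma\log(n/\log n)$, which \emph{balances} the hit and forced contributions so that both are $\Theta((\log n)^{-\gamma})$, while you take $c>-\gamma$ so that the forced term is $o(1)$ and the hit term alone carries the bound; again this affects only the implied constant. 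Your flagged concern about $\E[X]$ possibly being infinite is legitimate and applies equally to the paper's proof, which writes $\mu_{1:1}\approx \frac{n^{-\gamma}}{(1-\gamma)\Gamma(1-\gamma)}\mu_{1:n}$ and so also implicitly assumes $\E[X]<\infty$; under that standing assumption (necessary anyway for the competitive ratio to be finite, since the algorithm takes $X_n$ with positive probability), your argument closes as written.
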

\begin{proof}
We start by analyzing the algorithm's performance for an arbitrary choice of $T$. We have
\begin{align*}
\E[\alg] &= \prn{1 - \prn{1 - F(T)}^{n-1}} \E\brk{X \midd X \leq T} + \prn{1 - F(T)}^{n-1} \E[X] \\
&= \frac{1 - \prn{1 - F(T)}^{n-1}}{F(T)} \int^T_0 {\prn{F(T) - F(x)} \dif x} +
\prn{1 - F(T)}^{n-1} \mu_{1:1} \\
&= \frac{1 - \prn{1 - F(T)}^{n-1}}{F(T)} \prn{T F(T) - \int^T_0 {F(x) \dif x}} +
\prn{1 - F(T)}^{n-1} \mu_{1:1}.
\end{align*}
Let $t = F(x) \implies \dif x = \prn{F^{\inv}(t)}' \dif t$. Thus,
\begin{align*}
\E[\alg] &= \frac{1 - \prn{1 - F(T)}^{n-1}}{F(T)} \prn{T F(T) - \int^{F(T)}_0 {t \prn{F^{\inv}(t)}' \dif t}} + \prn{1 - F(T)}^{n-1} \mu_{1:1} \\
&= \frac{1 - \prn{1 - F(T)}^{n-1}}{F(T)} \prn{T F(T) - T F(T) + \int^{F(T)}_0 {F^{\inv}(t) \dif t}} + \prn{1 - F(T)}^{n-1} \mu_{1:1} \\
&= \frac{1 - \prn{1 - F(T)}^{n-1}}{F(T)} \cdot \int^{F(T)}_0 {F^{\inv}(t) \dif t} + \prn{1 - F(T)}^{n-1} \mu_{1:1}.
\end{align*}
However, notice that, since $F \in D_\gamma$, by \cite{dehaan-ferreira-evt} (Corollary 1.2.10), we have that $F^{\inv} \in RV_{-\gamma}$. Thus, by Lemma~\ref{lem:karamatas}, we have
\[
\int^{F(T)}_0 {F^{\inv}(t)} \approx \frac{T F(T)}{1 - \gamma},
\]
as $T \to 0^+$, which for our choice of $T$ corresponds to $n \to +\infty$. 

Also, using Lemma~\ref{lem:mu-ratio-approx} and ignoring lower order terms, we have
\[
\mu_{1:1} \approx \prod_{j = 2}^n {\prn{1 - \frac{\gamma}{j}}} \mu_{1:n} = \frac{\Gamma(n+1-\gamma)}{(1-\gamma)\Gamma(1-\gamma) \Gamma(n+1)} \mu_{1:n} = \prn{\frac{n^{-\gamma}}{(1-\gamma)\Gamma(1-\gamma)} + \ltlo{n^{-\gamma}}} \mu_{1:n}.
\]
Therefore, for large enough $n$,
\begin{align*}
\E[\alg] &\approx \frac{1 - \prn{1 - F(T)}^{n-1}}{F(T)} \cdot \frac{T F(T)}{1-\gamma} + \prn{1 - F(T)}^{n-1} \prn{\frac{n^{-\gamma}}{(1-\gamma)\Gamma(1-\gamma)} + \ltlo{n^{-\gamma}}} \mu_{1:n} \\
&= \prn{1 - \prn{1 - F(T)}^{n-1}} \frac{T}{1-\gamma} + \prn{1 - F(T)}^{n-1} \prn{\frac{n^{-\gamma}}{(1-\gamma)\Gamma(1-\gamma)} + \ltlo{n^{-\gamma}}} \mu_{1:n}. \\
\end{align*}
Thus, the competitive ratio is
\begin{equation}\label{eq:single-thr-ratio}
\lambda_m(n) \approx \prn{1 - \prn{1 - F(T)}^{n-1}} \frac{T}{(1-\gamma) \mu_{1:n}} + \prn{1 - F(T)}^{n-1} \prn{\frac{n^{-\gamma}}{(1-\gamma)\Gamma(1-\gamma)} + \ltlo{n^{-\gamma}}}.
\end{equation}
Next, let $T = F^{\inv}\prn{\frac{g(n,\gamma)}{n}}$, for some appropriate function $g$ to be defined later. Then, we obtain
\[
\lambda_m(n) \approx \prn{1 - \prn{1 - \frac{g(n,\gamma)}{n}}^{n-1}} \frac{F^{\inv}\prn{\frac{g(n,\gamma)}{n}}}{(1-\gamma) \mu_{1:n}} + \prn{1 - \frac{g(n,\gamma)}{n}}^{n-1} \prn{\frac{n^{-\gamma}}{(1-\gamma)\Gamma(1-\gamma)} + \ltlo{n^{-\gamma}}}.
\]
Using Lemmas~\ref{lem:mu-approx} and~\ref{lem:mult-quantile-approx}, along with the fact that $\Gamma(2-\gamma) = (1-\gamma) \Gamma(1-\gamma)$, for large enough $n$ we get
\begin{align}
\lambda_m(n) &\approx \prn{1 - \prn{1 - \frac{g(n,\gamma)}{n}}^{n-1}} \frac{F^{\inv}\prn{\frac{g(n,\gamma)}{n}}}{(1-\gamma) \Gamma(1 - \gamma) F^{\inv}\prn{\frac{1}{n}}} + \prn{1 - \frac{g(n,\gamma)}{n}}^{n-1} \prn{\frac{n^{-\gamma}}{(1-\gamma)\Gamma(1-\gamma)} + \ltlo{n^{-\gamma}}} \nonumber \\
&\approx \prn{1 - e^{-g(n,\gamma)}} \frac{\prn{g(n,\gamma)}^{-\gamma} F^{\inv}\prn{\frac{1}{n}}}{(1-\gamma) \Gamma(1 - \gamma) F^{\inv}\prn{\frac{1}{n}}} + e^{-g(n,\gamma)} \prn{\frac{n^{-\gamma}}{(1-\gamma)\Gamma(1-\gamma)} + \ltlo{n^{-\gamma}}} \nonumber \\
\label{eq:single-thr-choice-g} &\approx \frac{1}{\Gamma(2-\gamma)} \prn{\prn{1 - e^{-g(n,\gamma)}} \prn{g(n,\gamma)}^{-\gamma} + e^{-g(n,\gamma)} \prn{n^{-\gamma} + \ltlo{n^{-\gamma}}}}.
\end{align}
Finally, let $g(n,\gamma) = -\gamma \log\prn{\frac{n}{\log{n}}}$. This implies that $e^{-g(n,\gamma)} = \prn{\frac{\log{n}}{n}}^{-\gamma}$ and that $\prn{g(n,\gamma)}^{-\gamma} = \prn{-\gamma}^{-\gamma} \prn{\log\prn{\frac{n}{\log{n}}}}^{-\gamma}$. Therefore, \eqref{eq:single-thr-choice-g} becomes
\begin{align*}
\lambda_m(n) &\approx \frac{1}{\Gamma(2-\gamma)} \prn{\prn{1 - \prn{\frac{\log{n}}{n}}^{-\gamma}} \prn{-\gamma}^{-\gamma} \prn{\log\prn{\frac{n}{\log{n}}}}^{-\gamma} + \prn{\frac{\log{n}}{n}}^{-\gamma} \prn{n^{-\gamma} + \ltlo{n^{-\gamma}}}} \\
&\approx \frac{1}{\Gamma(2-\gamma)} \prn{\prn{1 - \prn{\frac{\log{n}}{n}}^{-\gamma}} \prn{-\gamma}^{-\gamma} \prn{\log\prn{\frac{n}{\log{n}}}}^{-\gamma} + \prn{\log{n}}^{-\gamma} + \ltlo{\prn{\log{n}}^{-\gamma}} } \\
&\approx \frac{\prn{\prn{-\gamma}^{-\gamma} + 1}}{\Gamma(2-\gamma)} \prn{\log{n}}^{-\gamma} + \ltlo{\prn{\log{n}}^{-\gamma}}.
\end{align*}
Thus, our choice of $T$ achieves a competitive ratio of $\bigO{\prn{\log{n}}^{-\gamma}}$.
\end{proof}

\subsection{Lower Bound}\label{sec:single-threshold-lower}

\begin{theorem}\label{thm:single-threshold-evt-lower}
For the I.I.D. Min-Prophet Inequality and any $\gamma \leq 0$, consider the distribution $\cD$ for which $F(x) = 1 - e^{-x^{-\f{1}{\gamma}}}$. For $\cD$ and large enough $n$, no single-threshold algorithm is $\ltlo{\prn{\log{n}}^{-\gamma}}$-competitive.
\end{theorem}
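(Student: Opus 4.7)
The plan is to show that for the specific distribution $F(x) = 1 - e^{-x^{-1/\gamma}}$, no choice of threshold $T$ yields a competitive ratio better than $\bigOm{\prn{\log n}^{-\gamma}}$. I would first derive closed forms for the relevant order statistics. Direct integration via the substitution $t = x^{-1/\gamma}$ gives $\mu_{1:1} = \Gamma(1-\gamma)$ and $\mu_{1:n} = n^\gamma \Gamma(1-\gamma)$, while inverting $F$ yields $F^{\inv}(y) = \prn{-\log(1-y)}^{-\gamma} \approx y^{-\gamma}$ for small $y$. Having these explicit forms lets me avoid relying on the general asymptotic identities of Section~\ref{sec:unified}.

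Mirroring the decomposition from Theorem~\ref{thm:single-threshold-evt-upper}, the expected value of the threshold-$T$ algorithm is
\[
\E[\alg] \approx \prn{1-\prn{1-F(T)}^{n-1}} \frac{T}{1-\gamma} + \prn{1-F(T)}^{n-1} \Gamma(1-\gamma).
\]
Reparameterizing via $p = F(T)$ and writing $p = g/n$, the approximations $T \approx \prn{g/n}^{-\gamma}$ and $\prn{1-p}^{n-1} \approx e^{-g}$ then give, after dividing by $\mu_{1:n}$,
\[
\lambda_m(n) \approx \frac{\prn{1-e^{-g}} g^{-\gamma}}{\Gamma(2-\gamma)} + e^{-g} n^{-\gamma}.
\]
The task reduces to showing that this quantity is $\bigOm{\prn{\log n}^{-\gamma}}$ for every positive function $g = g(n)$.

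The core argument is a dichotomy around $g^* = (-\gamma)\log\prn{n/\log n}$. If $g \leq g^*$, then $e^{-g} \geq \prn{n/\log n}^\gamma$ and the second term satisfies $e^{-g} n^{-\gamma} \geq \prn{\log n}^{-\gamma}$. If instead $g > g^*$, then $g^{-\gamma} \geq (-\gamma)^{-\gamma} \prn{\log\prn{n/\log n}}^{-\gamma} = \bigOm{\prn{\log n}^{-\gamma}}$ while $1 - e^{-g} \to 1$, so the first term is $\bigOm{\prn{\log n}^{-\gamma}}$. Either way, $\lambda_m(n) = \bigOm{\prn{\log n}^{-\gamma}}$.

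Finally, I would rule out degenerate thresholds outside the regime $p = \Theta(1/n)$. If $p$ is bounded below by a constant, then $\E[\alg] = \bigOm{T}$ while $\mu_{1:n} = \Theta(n^\gamma)$, giving a ratio of $\bigOm{n^{-\gamma}}$, which far exceeds $\prn{\log n}^{-\gamma}$. If $p = o(1/n)$, the algorithm almost surely reaches $X_n$ and pays $\Gamma(1-\gamma)$ in expectation, again giving a ratio of $\bigOm{n^{-\gamma}}$. The main obstacle I anticipate is justifying the asymptotic approximations uniformly across all $g$ in the critical regime, rather than just for the specific $g$ used in the upper bound; however, the explicit form of $F$ makes the error terms in $F^{\inv}\prn{g/n} \approx \prn{g/n}^{-\gamma}$ and $\prn{1-g/n}^{n-1} \approx e^{-g}$ controllable by direct estimates.
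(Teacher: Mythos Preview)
Your proposal is correct and rests on the same core idea as the paper's proof: decompose the competitive ratio into the ``threshold hit'' term and the ``fallback to $X_n$'' term, then show that no choice of threshold can make both terms $o\bigl((\log n)^{-\gamma}\bigr)$ simultaneously. The paper carries this out via contradiction: it assumes both terms are $o\bigl((\log n)^{-\gamma}\bigr)$, translates each into a bound on $T$ (a lower bound from the fallback term, an upper bound from the hit term), and then shows for a specific $\eps = (-\gamma/2)^{-\gamma}$ that the two bounds are incompatible for large $n$. Your dichotomy around $g^* = (-\gamma)\log(n/\log n)$ is a cleaner packaging of exactly the same trade-off, and your explicit treatment of the degenerate regimes $p = \Omega(1)$ and $p = o(1/n)$ is something the paper leaves implicit. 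One small slip: the second term in your displayed formula for $\lambda_m(n)$ should also carry the $1/\Gamma(2-\gamma)$ factor (cf.\ \eqref{eq:single-thr-choice-g}), though this does not affect the asymptotic argument.
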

\begin{proof}
Recall by \eqref{eq:single-thr-ratio} that for large $n$
\begin{align*}
\lambda_m(n) &\approx \prn{1 - \prn{1 - F(T)}^{n-1}} \frac{T}{(1-\gamma) \mu_{1:n}} + \prn{1 - F(T)}^{n-1} \prn{\frac{n^{-\gamma}}{(1-\gamma)\Gamma(1-\gamma)} + \ltlo{n^{-\gamma}}} \\
&\approx \frac{1}{\Gamma(2-\gamma)} \prn{\prn{1 - e^{-(n-1)T^{-\f{1}{\gamma}}}} \frac{T \Gamma(1-\gamma)}{\mu_{1:n}} + e^{-(n-1)T^{-\f{1}{\gamma}}} \prn{n^{-\gamma} + \ltlo{n^{-\gamma}}}}.
\end{align*}
Next, notice that
\[
\mu_{1:n} = \int^\infty_0 {e^{-n x^{-\f{1}{\gamma}}} \dif x} = \frac{\Gamma(1-\gamma)}{n^{-\gamma}}.
\]
Thus, ignoring lower order terms, we have that
\begin{align*}
\lambda_m(n) &\approx \frac{n^{-\gamma}}{\Gamma(2-\gamma)} \prn{\prn{1 - e^{-(n-1)T^{-\f{1}{\gamma}}}} T + e^{-(n-1)T^{-\f{1}{\gamma}}}}.
\end{align*}
Assume, towards contradiction, that $\lambda_m(n) = \ltlo{\prn{\log{n}}^{-\gamma}}$. For this to be the case, it must be that
\begin{equation}\label{eq:single-lower-bound-1}
\prn{1 - e^{-(n-1)T^{-\f{1}{\gamma}}}} T = \ltlo{\prn{\frac{\log{n}}{n}}^{-\gamma}},
\end{equation}
and also that
\begin{equation}\label{eq:single-lower-bound-2}
e^{-(n-1)T^{-\f{1}{\gamma}}} = \ltlo{\prn{\frac{\log{n}}{n}}^{-\gamma}}.
\end{equation}
By \eqref{eq:single-lower-bound-2} and the definition of $\ltlo{\cdot}$, we have that for every $\eps > 0$, there must exist a $n_0 \geq 1$ such that for all $n \geq n_0$, we have
\begin{align}\label{eq:single-lower-bound-3}
e^{-(n-1)T^{-\f{1}{\gamma}}} \leq \eps \prn{\frac{\log{n}}{n}}^{-\gamma} \iff T \geq \prn{-\frac{\log\prn{\eps \prn{\frac{\log{n}}{n}}^{-\gamma}}}{n-1}}^{-\gamma}.
\end{align}

However, by \eqref{eq:single-lower-bound-1}, we have that for every $\eps' > 0$, there must exist a $n_1 \geq 1$ such that for all $n \geq n_1$, we have
\begin{equation}\label{eq:single-lower-bound-4}
\prn{1 - e^{-(n-1)T^{-\f{1}{\gamma}}}} T \leq \eps' \: \prn{\frac{\log{n}}{n}}^{-\gamma}.
\end{equation}
From \eqref{eq:single-lower-bound-3}, we know that
\[
1 - e^{-(n-1)T^{-\f{1}{\gamma}}} \geq 1 - \eps \prn{\frac{\log{n}}{n}}^{-\gamma}
\]
and thus, by \eqref{eq:single-lower-bound-4} for $\eps' = \eps$, it must be the case that
\begin{align}
\prn{1 - \eps \prn{\frac{\log{n}}{n}}^{-\gamma}} T &\leq \eps \prn{\frac{\log{n}}{n}}^{-\gamma} \iff \nonumber \\
\label{eq:single-lower-bound-5} T &\leq \frac{\eps \prn{\frac{\log{n}}{n}}^{-\gamma}}{1 - \eps \prn{\frac{\log{n}}{n}}^{-\gamma}}.
\end{align}
Thus, using \eqref{eq:single-lower-bound-3} and \eqref{eq:single-lower-bound-5}, we have
\begin{equation}\label{eq:single-lower-bound-6}
\prn{-\frac{\log\prn{\eps \prn{\frac{\log{n}}{n}}^{-\gamma}}}{n-1}}^{-\gamma} \leq \frac{\eps \prn{\frac{\log{n}}{n}}^{-\gamma}}{1 - \eps \prn{\frac{\log{n}}{n}}^{-\gamma} }.
\end{equation}
Let $\eps = \prn{\frac{-\gamma}{2}}^{-\gamma} > 0$ for $\gamma > 0$. Also let $B_\gamma(n) = \prn{\frac{-\gamma}{2} \frac{\log{n}}{n}}^{-\gamma}$. Then, \eqref{eq:single-lower-bound-6} becomes
\begin{align}
-\log{B_\gamma(n)} &\leq (n-1) \prn{\frac{B_\gamma(n)}{1 - B_\gamma(n)}}^{-\f{1}{\gamma}} \iff \nonumber \\
\label{eq:single-lower-bound-7} -\gamma \log\prn{\frac{2n}{(-\gamma)\log{n}}} &\leq (n-1) \frac{\frac{(-\gamma) \log{n}}{2n}}{\prn{1 - \prn{\frac{(-\gamma) \log{n}}{2n}}^{-\gamma} }^{-\f{1}{\gamma}}}.
\end{align}
Taking the Taylor series of $\frac{\frac{(-\gamma) \log{n}}{2n}}{\prn{1 - \prn{\frac{(-\gamma) \log{n}}{2n}}^{-\gamma} }^{-\f{1}{\gamma}}}$ around infinity, we have that
\[
\frac{\frac{(-\gamma) \log{n}}{2n}}{\prn{1 - \prn{\frac{(-\gamma) \log{n}}{2n}}^{-\gamma} }^{-\f{1}{\gamma}}} \approx \frac{-\gamma \log{n}}{2n}, 
\]
and thus \eqref{eq:single-lower-bound-7} becomes
\begin{equation}\label{eq:single-lower-bound-8}
-\gamma \log\prn{\frac{2n}{(-\gamma)\log{n}}} \leq \frac{-\gamma}{2} \cdot \frac{n-1}{n} \log{n} \iff \log\prn{\frac{2n}{(-\gamma)\log{n}}} \leq \frac{1}{2} \cdot \frac{n-1}{n} \log{n}.
\end{equation}
Since
\[
\lim_{n\to\infty} {\frac{\log\prn{\frac{2n}{(-\gamma)\log{n}}}}{\log{n}}} = 1,
\]
we have that, for large enough $n$, \eqref{eq:single-lower-bound-8} does not hold. Therefore, there exists $\eps > 0$ such that for all $n_0 \in \N$ and $n \geq n_0$, \eqref{eq:single-lower-bound-1} and \eqref{eq:single-lower-bound-2} cannot simultaneously hold, and we arrive at a contradiction.
\end{proof}

\subsection{Multiple Selection}\label{sec:single-threshold-multi}

In this section, we show Theorem~\ref{thm:single-threshold-evt-multi}. We assume without loss of generality that $k = \log{n}$; the existence of a constant-competitive single-threshold algorithm for larger values of $k$ follows by monotonicity, since the problem only becomes easier\footnote{Consider that, if $k = n$, then every algorithm is $1$-competitive.}. To show Theorem~\ref{thm:single-threshold-evt-multi}, we first need an asymptotic expression of $OPT$. Notice that, by linearity of expectation, $OPT$ is always the sum of the $k$-th smallest order statistics, and thus
\begin{equation}\label{eq:multi-opt}
OPT = \sum_{i = 1}^k {\mu_{i:n}}.    
\end{equation}
To simplify the expression above, we need the following lemma.

\begin{lemma}\label{lem:kth-order-statistic}
For $F \in D_\gamma$, large enough $n$ and $k < n$, we have
\[
\frac{\mu_{k:n}}{\mu_{1:n}} \approx \frac{k^{-\gamma}}{\Gamma(1-\gamma)}.
\]
\end{lemma}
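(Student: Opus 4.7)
The plan is to combine the classical quantile coupling of order statistics with the regular variation of $F^{\inv}$ at $0$. By the probability integral transform,
\[
X_{k:n} \stackrel{d}{=} F^{\inv}(U_{(k)}),
\]
where $U_{(k)}$ is the $k$-th order statistic of $n$ i.i.d.\ Uniform$[0,1]$ variables. R\'{e}nyi's representation gives $U_{(k)} \stackrel{d}{=} S_k/S_{n+1}$ with $S_j = E_1 + \cdots + E_j$ for i.i.d.\ Exponential$(1)$ variables $E_i$. Since $S_{n+1}/n \to 1$ almost surely, $n U_{(k)}$ converges in distribution to $S_k \sim \mathrm{Gamma}(k,1)$; when $k = k_n \to \infty$, one further has $n U_{(k)}/k \to 1$ in probability by the law of large numbers.

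By Lemma~\ref{lem:rv-hazard}, $F^{\inv} \in RV_{-\gamma}$ at $0$, so for every $c > 0$, $F^{\inv}(c/n)/F^{\inv}(1/n) \to c^{-\gamma}$ as $n \to \infty$. Potter's inequality for regularly varying functions dominates this ratio uniformly by $C \max(c^{-\gamma+\delta}, c^{-\gamma-\delta})$ for any fixed $\delta > 0$ and sufficiently large $n$. Since $S_k$ has finite moments of orders $-\gamma \pm \delta$ for small $\delta$, dominated convergence yields
\[
\mu_{k:n} = \E\!\left[F^{\inv}(U_{(k)})\right] \approx \E[S_k^{-\gamma}] \cdot F^{\inv}(1/n).
\]
A direct computation gives the Gamma moment $\E[S_k^{-\gamma}] = \Gamma(k-\gamma)/\Gamma(k)$, which by Stirling's formula satisfies $\Gamma(k-\gamma)/\Gamma(k) \sim k^{-\gamma}$ as $k \to \infty$. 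Hence $\mu_{k:n} \approx k^{-\gamma} F^{\inv}(1/n)$, and combining with Lemma~\ref{lem:mu-approx}, which gives $\mu_{1:n} \approx \Gamma(1-\gamma) F^{\inv}(1/n)$, one obtains
\[
\frac{\mu_{k:n}}{\mu_{1:n}} \approx \frac{k^{-\gamma}}{\Gamma(1-\gamma)},
\]
as desired. As a sanity check, the case $k = 1$ recovers $\E[S_1^{-\gamma}] = \Gamma(1-\gamma)$, consistent with Lemma~\ref{lem:mu-approx}.

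The main technical obstacle is justifying the passage of the limit inside the expectation when applying the regular variation approximation to the random argument $U_{(k)}$. Potter's bounds supply the needed uniform domination, but one must verify that the dominating function has finite $\mathrm{Gamma}(k,1)$-expectation, which requires choosing $\delta$ small relative to $|\gamma|$. A secondary subtlety, once $k = k_n \to \infty$ (the regime relevant to Section~\ref{sec:single-threshold-multi}, where $k = \log n$), is that the distributional convergence $n U_{(k)} \Rightarrow S_k$ must be replaced by the concentration statement $n U_{(k)}/k \to 1$, coupled with the uniform form of regular variation on compact intervals bounded away from $0$ and $\infty$; this allows one to conclude $F^{\inv}(U_{(k)}) \approx F^{\inv}(k/n) \approx k^{-\gamma} F^{\inv}(1/n)$ even when $k$ grows, as long as $k/n \to 0$.
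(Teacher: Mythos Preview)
Your argument is correct and in fact somewhat more careful than the paper's. The paper proceeds more quickly: it invokes the David--Nagaraja first-order approximation $\mu_{k:n} \to F^{\inv}\bigl(k/(n+1)\bigr)$ directly (citing \cite{first-course-order-statistics}, (5.5.3)), then applies Lemma~\ref{lem:mult-quantile-approx} with $c \approx k$ to obtain $\mu_{k:n} \approx k^{-\gamma} F^{\inv}(1/n)$, and finally divides by $\mu_{1:n} \approx \Gamma(1-\gamma) F^{\inv}(1/n)$. Your route instead couples via R\'enyi's representation, passes to the distributional limit $n U_{(k)} \Rightarrow \mathrm{Gamma}(k,1)$, and uses Potter's bounds to justify taking expectations through the regularly varying $F^{\inv}$.

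What your approach buys is an explicit uniform-integrability argument and the sharper intermediate constant $\E[S_k^{-\gamma}] = \Gamma(k-\gamma)/\Gamma(k)$ for fixed $k$; as you observe in your sanity check, this recovers $\Gamma(1-\gamma)$ at $k=1$, matching Lemma~\ref{lem:mu-approx} exactly, whereas the paper's quantile heuristic formally yields $1^{-\gamma}=1$ at $k=1$ and so is really only accurate once $k$ is large (which is the regime of Section~\ref{sec:single-threshold-multi}, where $k=\log n$). The paper's argument is shorter but leans on the cited series approximation without spelling out why passing to expectations is legitimate; your use of Potter bounds and the finiteness of Gamma moments of order $-\gamma \pm \delta$ closes exactly that gap. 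Your handling of the growing-$k$ regime via the concentration $n U_{(k)}/k \to 1$ together with uniform convergence of regular variation on compacta is also a sound replacement for the fixed-$k$ distributional limit, and is what is actually needed downstream.
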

\begin{proof}
Let $U$ be a random variable following the uniform distribution on $[0,1]$. It is well-known that $F^\inv\prn{U}$ follows $F$\footnote{ One can easily see this by thinking of $U$ as the random quantile of $F$ -- see also (\cite{first-course-order-statistics}, (1.1.3)).}. Thus, $X_{k:n}$ is distributed as $F^\inv\prn{U_{k:n}}$, where $U_{k:n}$ is the $k$-th order statistic of $n$ I.I.D. draws from the uniform $[0,1]$ distribution. By the series approximation of $\mu_{k:n}$ (\cite{first-course-order-statistics}, (5.5.3)), we have
\[
\mu_{k:n} \longrightarrow F^\inv\prn{\frac{k}{n+1}},
\]
as $n \to \infty$. Thus, by Lemma~\ref{lem:mult-quantile-approx},
\[
\mu_{k:n} \approx \prn{k\prn{1 - \frac{1}{n+1}}}^{-\gamma} F^\inv\prn{\frac{1}{n}} \approx \frac{\prn{k\prn{1 - \frac{1}{n+1}}}^{-\gamma}}{\Gamma(1-\gamma)} \mu_{1:n},
\]
and thus
\[
\frac{\mu_{k:n}}{\mu_{1:n}} \approx \frac{k^{-\gamma}}{\Gamma(1-\gamma)}.
\]
\end{proof}

Using Lemmas~\ref{lem:mu-approx} and~\ref{lem:kth-order-statistic}, \eqref{eq:multi-opt} becomes
\begin{equation}\label{eq:multi-opt-2}
OPT = \sum_{i = 1}^k {\frac{k^{-\gamma}}{\Gamma(1-\gamma)} \mu_{1:n}} = k^{1-\gamma} F^\inv\prn{\frac{1}{n}}.
\end{equation}

Next, we compute the expected value of a single threshold algorithm with threshold $T$. Let $S = \set{i \midd X_i \leq T}$. We have
\[
\E[ALG] \leq \sum_{i = 0}^k \Pr[|S| = i] \prn{\sum_{j = 1}^n \mu_{j:n} + (k-i) \mu_{1:1}} + \Pr[|S| > k] k T.
\]
The inequality above follows from the following two facts: $(i)$ if $|S| = i$ for some $i < k$, the algorithm needs to select the last $k-i$ random variables to satisfy the constraint, and $(ii)$ if $|S| > k$, we can upper bound every random variable selected by the algorithm by $T$.

Next, let $T = F^\inv\prn{\frac{c \log{n}}{n}}$ for some appropriate constant $c > 1$ to be chosen later, and notice that, since the $X_i$'s are drawn I.I.D. from $\cD$, we have that $\1\brk{X_i \leq T}$ is a Bernoulli random variable with probability $\frac{c\log{n}}{n}$. Thus
\[
\Pr[|S| = i] = \binom{n}{i} \prn{\frac{c\log{n}}{n}}^i \prn{1 - \frac{c\log{n}}{n}}^{n-i}.
\]
Thus, as $n \to \infty$, $|S|$ converges in distribution to a Poisson random variable $Y$ with rate $c \log{n}$, and, asymptotically,
\begin{align*}
\E[ALG] &\leq e^{-c \log{n}} \sum_{i = 0}^k \frac{\prn{c\log{n}}^i}{i!} \prn{\sum_{j = 1}^i \mu_{j:n} + (k-i) \mu_{1:1}} + e^{-c\log{n}} \sum_{i = k+1}^\infty \frac{\prn{c\log{n}}^i}{i!} k T \\
&= e^{-c \log{n}} \sum_{i = 0}^k \frac{\prn{c\log{n}}^i}{i!} \prn{\sum_{j = 1}^i \mu_{j:n} + \sum_{j = i+1}^k \mu_{1:1}} + e^{-c\log{n}} \sum_{i = k+1}^\infty \frac{\prn{c\log{n}}^i}{i!} k T \\
&\approx e^{-c \log{n}} \sum_{i = 0}^k \frac{\prn{c\log{n}}^i}{i!} \prn{\sum_{j = 1}^i \mu_{j:n} + \sum_{j = i+1}^k \frac{\Gamma(2)\Gamma(n+1-\gamma)}{\Gamma(n+1)\Gamma(2-\gamma)} \mu_{1:n}} \\
&\qquad + e^{-c\log{n}} \sum_{i = k+1}^\infty \frac{\prn{c\log{n}}^i}{i!} k F^\inv\prn{\frac{c\log{n}}{n}} \\
&\approx e^{-c \log{n}} \sum_{i = 0}^k \frac{\prn{c\log{n}}^i}{i!} \prn{\sum_{j = 1}^i \mu_{j:n} + \sum_{j = i+1}^k \frac{\Gamma(n+1-\gamma)}{\Gamma(n+1)\Gamma(2-\gamma)} \cdot \frac{\Gamma(1-\gamma)}{j^{-\gamma}} \mu_{j:n}} \\
&\qquad + e^{-c\log{n}} \sum_{i = k+1}^\infty \frac{\prn{c\log{n}}^i}{i!} k \prn{c\log{n}}^{-\gamma} F^\inv\prn{\frac{1}{n}},
\end{align*}
where the second-to-last asymptotic equality follows from Corollary~\ref{cor:mu-ratio-approx-gen} and the definition of $T$, while the last asymptotic equality follows from Lemmas~\ref{lem:kth-order-statistic} and~\ref{lem:mult-quantile-approx}. Using Corollary~\ref{cor:mu-ratio-approx-gen} again to simplify the above expression, we get
\begin{align*}
\E[ALG] &\leq e^{-c \log{n}} \sum_{i = 0}^k \frac{\prn{c\log{n}}^i}{i!} \prn{\sum_{j = 1}^i \mu_{j:n} + \sum_{j = i+1}^k \frac{\Gamma(n+1-\gamma)}{(1-\gamma) \Gamma(n+1)} \cdot \frac{1}{j^{-\gamma}} \mu_{j:n}} \\
&\qquad + e^{-c\log{n}} \sum_{i = k+1}^\infty \frac{\prn{c\log{n}}^i}{i!} k \prn{c\log{n}}^{-\gamma} F^\inv\prn{\frac{1}{n}} \\
&= e^{-c \log{n}} \sum_{i = 0}^k \frac{\prn{c\log{n}}^i}{i!} \prn{\sum_{j = 1}^i \mu_{j:n} + \frac{1}{1-\gamma} \sum_{j = i+1}^k \prn{\frac{n}{j}}^{-\gamma} \mu_{j:n}} \\
&\qquad + e^{-c\log{n}} \sum_{i = k+1}^\infty \frac{\prn{c\log{n}}^i}{i!} k \prn{c\log{n}}^{-\gamma} F^\inv\prn{\frac{1}{n}} \\
&\leq e^{-c \log{n}} \sum_{i = 0}^k \frac{\prn{c\log{n}}^i}{i!} k \prn{\frac{n}{k}}^{-\gamma} \mu_{k:n} \\
&\qquad + e^{-c\log{n}} \sum_{i = k+1}^\infty \frac{\prn{c\log{n}}^i}{i!} k \prn{c\log{n}}^{-\gamma} F^\inv\prn{\frac{1}{n}} \\
&= k \prn{\frac{n}{k}}^{-\gamma} F^\inv\prn{\frac{k}{n}} e^{-c \log{n}} \sum_{i = 0}^k \frac{\prn{c\log{n}}^i}{i!} \\
&\qquad + k \prn{c\log{n}}^{-\gamma} F^\inv\prn{\frac{1}{n}} e^{-c\log{n}} \sum_{i = k+1}^\infty \frac{\prn{c\log{n}}^i}{i!} \\
&\approx k n^{-\gamma} F^\inv\prn{\frac{1}{n}} e^{-c \log{n}} \sum_{i = 0}^k \frac{\prn{c\log{n}}^i}{i!} \\
&\qquad + k \prn{c\log{n}}^{-\gamma} F^\inv\prn{\frac{1}{n}} e^{-c\log{n}} \sum_{i = k+1}^\infty \frac{\prn{c\log{n}}^i}{i!},
\end{align*}
where the inequality follows from the fact that $\mu_{1:n} \leq \mu_{2:n} \leq \dots \leq \mu_{k:n}$, by definition, and the last asymptotic equality follows again from Lemma~\ref{lem:mult-quantile-approx}.

Next, notice that $e^{-c \log{n}} \sum_{i = 0}^k \frac{\prn{c\log{n}}^i}{i!}$ is equal to $\Pr[Y \leq k]$, and $e^{-c \log{n}} \sum_{i = k+1}^\infty \frac{\prn{c\log{n}}^i}{i!}$ is equal to $\Pr[Y > k]$. Since the CDF of the Poisson distribution with rate $r$ is $\Gamma(k+1, r) / \Gamma(k+1)$, the above inequality becomes
\begin{align*}
\E[ALG] &\leq k n^{-\gamma} F^\inv\prn{\frac{1}{n}} \frac{\Gamma(k+1, c \log{n})}{\Gamma(k+1)} + k \prn{c\log{n}}^{-\gamma} F^\inv\prn{\frac{1}{n}} \prn{1 - \frac{\Gamma(k+1, c \log{n})}{\Gamma(k+1)}} \\
&= k F^\inv\prn{\frac{1}{n}} \prn{n^{-\gamma} \frac{\Gamma(k+1, c \log{n})}{\Gamma(k+1)} + \prn{c\log{n}}^{-\gamma} \prn{1 - \frac{\Gamma(k+1, c \log{n})}{\Gamma(k+1)}}}.
\end{align*}
Combining this with \eqref{eq:multi-opt-2}, we obtain an upper bound on the competitive ratio of
\[
\frac{\E[ALG]}{OPT} \leq \prn{\frac{n}{k}}^{-\gamma} \frac{\Gamma(k+1, c \log{n})}{\Gamma(k+1)} + \prn{\frac{c \log{n}}{k}}^{-\gamma} \prn{1 - \frac{\Gamma(k+1, c \log{n})}{\Gamma(k+1)}}.
\]
The idea of the analysis is to choose $T$ in order to balance these two terms. Notice that, since $k = \log{n}$ and $0 < \Gamma(k+1, c \log{n}) /\Gamma(k+1) < 1$, the second term is upper bounded by a constant, namely $c^{-\gamma}$. In other words,
\begin{equation}\label{eq:multi-unit-cr}
\frac{\E[ALG]}{OPT} \leq \prn{\frac{n}{k}}^{-\gamma} \frac{\Gamma(k+1, c \log{n})}{\Gamma(k+1)} + c^{-\gamma}.
\end{equation}
Therefore, it suffices to show that the first term is also bounded by a constant. By \cite{gamma-facts} (eq. 8.11.7), we know that, as $n \to \infty$,
\[
\Gamma(k+1, c \log{n}) \approx \frac{\prn{c\log{n}}^{k+1} e^{-c \log{n}}}{c\log{n} - k-1}.
\]
Furthermore, by Stirling's approximation to the Gamma function,
\[
\Gamma(k+1) = k\Gamma(k) \approx k \sqrt{\frac{2\pi}{k}} \prn{\frac{k}{e}}^k = \sqrt{2\pi k} \prn{\frac{k}{e}}^k.
\]
Thus, we have
\begin{align*}
\prn{\frac{n}{k}}^{-\gamma} \frac{\Gamma(k+1, c \log{n})}{\Gamma(k+1)} &\approx \prn{\frac{n}{k}}^{-\gamma} \frac{e^k \prn{c\log{n}}^{k+1} e^{-c \log{n}}}{\sqrt{2\pi} \: \: k^{k+1/2} \: \: \prn{c\log{n} - k-1}} \\
&= \frac{c}{\sqrt{2\pi}(c-1)} n^{-\gamma+1-c} \cdot \prn{\frac{1}{\log{n}}}^{-\gamma} \frac{\prn{c\log{n}}^{\log{n}} \cdot \log{n}}{\prn{\log{n}}^{\log{n}+1/2} \cdot \log{n}} \\
&= \frac{c}{\sqrt{2\pi}(c-1)} n^{-\gamma+1-c+\log{c}} \cdot \prn{\log{n}}^{\gamma-1/2}.
\end{align*}
Finally, notice that, for $-\gamma+1-c+\log{c} \leq 0$, the expression asymptotically goes to $0$, as the exponent $\gamma-1/2$ of $\log{n}$ is also negative. Therefore, any $c$ satisfying $c - \log{c} \geq 1 - \gamma$ suffices. Such a $c$ always exists for every $\gamma$, since one can take, for example, $c = e^{1 - \gamma}$ and obtain $e^{1-\gamma} \geq 2(1 - \gamma)$, which is true for all $\gamma \leq 0$.

We conclude that \eqref{eq:multi-unit-cr} is always upper bounded by a constant that depends only on $\gamma$.

\section{Conclusion}\label{sec:conclusion}

In this paper, we study the I.I.D. prophet inequality both for rewards maximization and cost minimization in the large market setting where the given distribution is independent of the number $n$ of random variables, and $n \to +\infty$. We provide a unified analysis of both settings via the lens of extreme value theory that generalizes the results of Kennedy and Kertz \cite{kennedy-kertz}, and show that the optimal asymptotic competitive ratio of both can be described in \emph{closed form} by a single function that depends only on the extreme value index. Our work naturally recovers other prior results by \cite{kesselheim-mhr-ppm,liv-mehta-cpi} on MHR distributions. We also generalize the guarantees of \cite{liv-mehta-cpi} for single-threshold algorithms to the extreme value theory setting, and study the multi-unit version of the setting.

Our results provide a definitive answer to the problem of I.I.D. prophet inequalities for the single-item setting, since a closed form generalization is unlikely outside the realm of extreme value theory. The fact that the optimal competitive ratios can be described by a single function in both the maximization and minimization settings is quite surprising, given the qualitative difference of the guarantees. Further examining more general settings, such as the case of selecting $k$ realizations, from the perspective of extreme value theory is an interesting future direction, and unifying their analysis for both maximization and minimization would bring further clarity to the area of prophet inequalities.

\paragraph{Acknowledgements:} The authors would like to thank the anonymous reviewers for their helpful feedback on an earlier draft of this paper.

\bibliographystyle{alpha}
\bibliography{references}

\appendix

\section{Appendix}\label{app:appendix}

\subsection{Omitted Proofs}\label{app:proofs}

\subsubsection{Proof of Lemma~\ref{lem:opt-dp-g}}

\begin{replemma}{lem:opt-dp-g}
For any $n > 1$, we have
\[
G_M(n) = G_M(n-1) + \int^{x^*}_{G_M(n-1)} {\prn{1 - F(u)} \dif u},
\]
and
\[
G_m(n) = \int^{G_m(n-1)}_0 {\prn{1 - F(u)} \dif u}.
\]
\end{replemma}
\begin{proof}
The optimal threshold policy sets a threshold $\tau_i$ when observing $X_i$ equal to $G_M(n-i)$ or $G_m(n-i)$ for the max and min settings, respectively. Therefore, we have
\begin{align*}
G_M(n) &= (1 - F(G_M(n-1))) \E\brk{X | X \geq G_M(n-1)} + F(G_M(n-1)) G_M(n-1) \\
&= (1 - F(G_M(n-1))) \frac{\int_{G_M(n-1)}^{x^*} {u f(u) \dif u}}{(1 - F(G_M(n-1)))} + F(G_M(n-1)) G_M(n-1) \\
&= \E[X] - \int_0^{G_M(n-1)} {u f(u) \dif u} + F(G_M(n-1)) G_M(n-1) \\
&= \int_0^{x^*} {\prn{1 - F(u)} \dif u} - \int_0^{G_M(n-1)} {u \prn{F(u)}' \dif u} + F(G_M(n-1)) G_M(n-1) \\
&= \int_0^{x^*} {\prn{1 - F(u)} \dif u} - \brk{u F(u)}_0^{G_M(n-1)} + \int_0^{G_M(n-1)} {F(u) \dif u} + F(G_M(n-1)) G_M(n-1) \\
&= G_M(n-1) + \int_{G_M(n-1)}^{x^*} {\prn{1 - F(u)} \dif u} - F(G_M(n-1)) G_M(n-1) + F(G_M(n-1)) G_M(n-1) \\
&= G_M(n-1) + \int_{G_M(n-1)}^{x^*} {\prn{1 - F(u)} \dif u}.
\end{align*}
Similarly, for $G_m(n)$, we obtain
\begin{align*}
G_m(n) &= F(G_m(n-1)) \E\brk{X | X \leq G_m(n-1)} + (1 - F(G_m(n-1))) G_m(n-1) \\
&= F(G_m(n-1)) \frac{\int_0^{G_M(n-1)} {u f(u) \dif u}}{F(G_m(n-1))} + (1 - F(G_m(n-1))) G_m(n-1) \\
&= \int_0^{G_M(n-1)} {u \prn{F(u)}' \dif u} + G_m(n-1) - G_m(n-1) F(G_m(n-1)) \\
&= \brk{u F(u)}_0^{G_m(n-1)} - \int_0^{G_M(n-1)} {F(u) \dif u} + G_m(n-1) - G_m(n-1) F(G_m(n-1)) \\
&= G_m(n-1) F(G_m(n-1)) + \int_0^{G_m(n-1)} {1 \dif u} - \int_0^{G_M(n-1)} {F(u) \dif u} - G_m(n-1) F(G_m(n-1)) \\
&= \int_0^{G_m(n-1)} {\prn{1 - F(u)} \dif u}.
\end{align*}
\end{proof}

\subsubsection{Proof of Theorem~\ref{thm:unified-acr-case-2}}

\begin{reptheorem}{thm:unified-acr-case-2}
Let $F \in D_0$. Then, for both the I.I.D. Max-Prophet Inequality and the I.I.D. Min-Prophet Inequality, the asymptotic competitive ratio of the optimal threshold policy as $n \to \infty$ is $1$.
\end{reptheorem}
\begin{proof}
In this case, we will also need the following variant of Karamata's theorem.

\begin{lemma}\label{lem:g-approx-for-zero}
For $F \in D_0$ and large enough $n$, we have
\[
\int^{\exp\prn{-H(G_M(n-1))}}_0 {H^{\inv}\prn{-\log{u}} \dif u} \approx F^{\inv}\prn{1 - \frac{1}{n}} e^{-H(G_M(n-1))},
\]
and
\[
\int^{\exp\prn{R(G_m(n-1))}}_0 {R^{\inv}\prn{\log{u}} \dif u} \approx F^{\inv}\prn{\frac{1}{n}} e^{R(G_m(n-1))},
\]
\end{lemma}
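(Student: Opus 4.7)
The plan is to first apply Karamata's theorem to reduce the integral to $v_n \cdot G_M(n-1)$, where $v_n := \exp(-H(G_M(n-1)))$, and then upgrade this to $v_n \cdot F^{\inv}(1-1/n)$ via the de Haan $\Pi$-class representation of $U_M$ that is available precisely in the Gumbel case $\gamma = 0$.

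First, by Lemma~\ref{lem:rv-hazard} at $\gamma = 0$, the function $U_M(u) = H^{\inv}(-\log u) = F^{\inv}(1-u)$ is slowly varying at $0$. Since $G_M(\cdot)$ is monotonically increasing with $G_M(n-1) \to x^*$, we have $v_n \to 0^+$, so Karamata's theorem (Lemma~\ref{lem:karamatas}) gives
\[
\int_0^{v_n} H^{\inv}(-\log u) \, du \approx v_n \cdot U_M(v_n) = v_n \cdot G_M(n-1).
\]
The minimum case is analogous, with $U_m(u) = R^{\inv}(\log u) = F^{\inv}(u)$ in place of $U_M$ and $v_n' := \exp(R(G_m(n-1)))$ in place of $v_n$.

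The substantive step is to replace $G_M(n-1)$ by $F^{\inv}(1-1/n)$ on the right-hand side, i.e.\ to show $U_M(v_n)/U_M(1/n)\to 1$. Naive slow variation of $U_M$ is \emph{not} sufficient: it only gives $U_M(v_n)/U_M(1/n) \to 1$ when $n v_n$ converges to some value in $(0,\infty)$, whereas a priori we only know $v_n \in [1/n,\, n^{-1/2}]$ (by the Kleinberg-Weinberg prophet bound $G_M(n) \geq \mu_{n:n}/2$ combined with Lemma~\ref{lem:mu-approx}). The plan is instead to invoke the $\Pi$-class refinement $U_M(tx) - U_M(t) \approx a(t)\log(1/x)$ for an auxiliary function $a(t) = \ltlo{U_M(t)}$, which yields $U_M(1/n) - U_M(v_n) \approx a(v_n) \log(n v_n)$; the desired replacement then holds provided $a(v_n)\log(n v_n) = \ltlo{U_M(v_n)}$.

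The main obstacle is establishing the comparability $\log(n v_n) = \ltlo{U_M(v_n)/a(v_n)}$. This can be approached by a bootstrapping argument: the $\Pi$-refined recurrence $G_M(n) - G_M(n-1) \approx v_n \cdot a(v_n)$ (itself obtained from the integral in question together with Lemma~\ref{lem:opt-dp-hazard}) can be iteratively used to sharpen the Kleinberg-Weinberg estimate on $G_M(n-1)$, progressively pulling $v_n$ closer to $1/n$. Each pass improves the ratio $G_M(n-1)/U_M(1/n)$, and combined with the $\Pi$-class expansion this ratio can be shown to converge to $1$, closing the argument. An entirely analogous bootstrap using the recurrence for $G_m$ handles the minimum case.
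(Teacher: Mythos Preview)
Your approach is genuinely different from the paper's, and the crucial step is only sketched.

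The paper's proof is a single citation: it applies de Haan--Ferreira, Corollary~1.2.15, with the parameter choice $t = \exp\{1+H(G_M(n-1))\}$ (respectively $t=\exp\{1-R(G_m(n-1))\}$), obtaining the stated asymptotic for the integral directly. In the paper's logical flow, the relation $G_M(n-1)\approx F^{\inv}(1-1/n)$ is then \emph{deduced} by equating the present lemma with Lemma~\ref{lem:karamatas}; it is a consequence of the lemma, not an input to its proof.

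You invert this order. Karamata already yields $\int_0^{v_n}U_M\approx v_n\,U_M(v_n)=v_n\,G_M(n-1)$, so you reduce the lemma to establishing $G_M(n-1)\approx F^{\inv}(1-1/n)$ from scratch, via a $\Pi$-class bootstrap on the recurrence for $G_M$. You correctly observe that slow variation alone cannot bridge $U_M(v_n)$ and $U_M(1/n)$ when one only knows a priori that $v_n$ lives in a window like $[c/n,\,n^{-\alpha}]$ (the specific exponent $\alpha=1/2$ you quote is not distribution-free, incidentally: it depends on how $1-F$ behaves at $U_M(1/n)/2$). The gap is that the bootstrap is asserted, not executed. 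The comparability you need, $a(1/v_n)\log(nv_n)=o\bigl(U_M(v_n)\bigr)$, amounts for the exponential distribution (where $a\equiv 1$, $U_M(1/n)=\log n$) to $v_n=n^{-1+o(1)}$; one pass of the recurrence plus monotonicity of $v_n$ does give $v_n=O((\log n)/n)$ there, so the idea is sound in that case. But in general the interaction between the slowly varying auxiliary function $a$ and the slowly varying $U_M$ must be controlled uniformly, and the sentence ``each pass improves the ratio \dots\ can be shown to converge to $1$'' is exactly the content that is missing. Even if completable, this route is considerably more work than the paper's direct appeal to a standard EVT corollary.
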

\begin{proof}
The proof follows directly from \cite{dehaan-ferreira-evt} (Corollary 1.2.15), for $t = \exp\set{1+H\prn{G_M(n-1)}}$ and $t = \exp\set{1-R\prn{G_m(n-1)}}$.
\end{proof}

By Lemmas~\ref{lem:g-approx-for-zero} and~\ref{lem:karamatas}, we have that, for large $n$
\[
G_M(n-1) \approx F^{\inv}\prn{1 - \frac{1}{n}} \qquad \text{and} \qquad G_m(n-1) \approx F^{\inv}\prn{\frac{1}{n}}
\]
Therefore, by Lemmas~\ref{lem:gn-done} and \ref{lem:mu-approx}, we have
\[
\lambda_M(n) = \frac{G_M(n)}{\mu_{n:n}} \approx \frac{F^{\inv}\prn{1 - \frac{1}{n}}}{F^{\inv}\prn{1 - \frac{e^{-\gamma^*}}{n}}},
\]
and
\[
\lambda_m(n) = \frac{G_m(n)}{\mu_{1:n}} \approx \frac{F^{\inv}\prn{\frac{1}{n}}}{F^{\inv}\prn{\frac{e^{-\gamma^*}}{n}}},
\]
both of which go to $1$ as $n$ goes to infinity, by Lemma~\ref{lem:rv-hazard}.
\end{proof}

\subsubsection{Proof of Theorem~\ref{thm:unified-acr-case-3}}

\begin{reptheorem}{thm:unified-acr-case-3}
Let $F \in D_\gamma$, for $\gamma < 0$. Then, for the I.I.D. Max-Prophet Inequality, the asymptotic competitive ratio of the optimal threshold policy as $n \to \infty$ is $1$.
\end{reptheorem}
\begin{proof}
In this case, we necessarily have $x^* < +\infty$. The analysis follows in a similar manner to the $\gamma \in (0,1)$ case, but for the value $x^* - G_M(n)$ instead of $G_M(n)$.

Notice that, by Lemma~\ref{lem:opt-dp-g}, we have
\begin{align*}
G_M(n) &= G_M(n-1) + \int^{x^*}_{G_M(n-1)} \prn{1 - F(u) \dif u} \\
&= G_M(n-1) + x^* - G_M(n-1) - \int^{x^*}_{G_M(n-1)} \prn{F(u) \dif u} \iff \\
x^* - G_M(n) &= \int^{x^*}_{G_M(n-1)} {F(u) \dif u}.
\end{align*}
Next, we analyze the integral on the right-hand side.

\begin{lemma}\label{lem:max-neg-g-int}
For $\gamma < 0$ and large $n$, we have
\[
\int^{x^*}_{G_M(n-1)} {F(u) \dif u} \approx \prn{1 - e^{-H(G_M(n-1))} \prn{1 - \frac{1}{1-\gamma}}} \prn{x^* - G_M(n-1)}.
\]
\end{lemma}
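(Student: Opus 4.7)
The plan is to first write $F(u) = 1 - e^{-H(u)}$ so that
\[
\int^{x^*}_{G_M(n-1)} F(u) \dif u = \prn{x^* - G_M(n-1)} - \int^{x^*}_{G_M(n-1)} e^{-H(u)} \dif u,
\]
reducing the problem to analyzing the second integral. I then apply the change of variables $x = e^{-H(u)}$, which gives $u = H^{\inv}(-\log x)$ and sends $u = G_M(n-1)$ to $x = e^{-H(G_M(n-1))}$ and $u = x^*$ to $x = 0$, mirroring the substitution used in the proof of Lemma~\ref{lem:opt-dp-hazard}. After integration by parts, the second integral becomes
\[
\int^{x^*}_{G_M(n-1)} e^{-H(u)} \dif u = -e^{-H(G_M(n-1))} G_M(n-1) + \int^{e^{-H(G_M(n-1))}}_0 H^{\inv}(-\log x) \dif x,
\]
where the boundary term at $x = 0$ vanishes because $x^* < +\infty$ in the $\gamma < 0$ regime, so $\lim_{x \to 0^+} x \cdot H^{\inv}(-\log x) = 0$.

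The key technical step is handling the remaining integral via Karamata's theorem. Unlike the $\gamma \in (0,1)$ case treated in the main proof, for $\gamma < 0$ the function $H^{\inv}(-\log x)$ itself is not regularly varying at $0$ in any useful way---it converges to the finite limit $x^*$. However, by Lemma~\ref{lem:rv-hazard}, $x^* - H^{\inv}(-\log x) \in RV_{-\gamma}$ at $0$. The plan is therefore to decompose $H^{\inv}(-\log x) = x^* - \prn{x^* - H^{\inv}(-\log x)}$, integrate the constant piece directly to get $x^* e^{-H(G_M(n-1))}$, and apply Karamata (via Lemma~\ref{lem:karamatas}, whose hypothesis $-\gamma > -1$ is satisfied here since $\gamma < 0$) to the remainder to obtain
\[
\int^{e^{-H(G_M(n-1))}}_0 \prn{x^* - H^{\inv}(-\log x)} \dif x \approx \frac{e^{-H(G_M(n-1))} \prn{x^* - G_M(n-1)}}{1 - \gamma}.
\]

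Combining everything, the $-e^{-H(G_M(n-1))} G_M(n-1)$ boundary contribution and the $x^* e^{-H(G_M(n-1))}$ constant contribution telescope into $e^{-H(G_M(n-1))} \prn{x^* - G_M(n-1)}$, and subtracting the Karamata term yields
\[
\int^{x^*}_{G_M(n-1)} e^{-H(u)} \dif u \approx e^{-H(G_M(n-1))} \prn{x^* - G_M(n-1)} \prn{1 - \frac{1}{1 - \gamma}},
\]
from which the stated claim follows by subtracting from $x^* - G_M(n-1)$. The main obstacle is identifying the correct regularly varying quantity: one must \emph{center at} $x^*$ before invoking Karamata, reflecting the structural difference between the bounded-support regime $\gamma < 0$ and the unbounded regime $\gamma \in (0,1)$ treated earlier in the paper. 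Once this centering is in place and the vanishing of the boundary term at $0$ is verified, the remaining algebra is routine.
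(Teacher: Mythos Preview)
Your proposal is correct and follows essentially the same route as the paper: write $F = 1 - e^{-H}$, substitute $x = e^{-H(u)}$, integrate by parts, center the integrand at $x^*$ so that $x^* - H^{\inv}(-\log x) \in RV_{-\gamma}$ at $0$, and apply Karamata. The only cosmetic differences are that the paper integrates by parts on $(1-y)\,(H^{\inv}(-\log y))'$ before splitting, and then makes an extra substitution $z=1/y$ to invoke Karamata at infinity (citing \cite{bingham-rv}, Theorem~1.5.11) rather than at $0$; note also that Lemma~\ref{lem:karamatas} as literally stated concerns $H^{\inv}(-\log x)$ itself, so for the centered quantity you should appeal to the underlying Karamata theorem directly.
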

\begin{proof}
We have
\[
\int^{x^*}_{G_M(n-1)} {F(u) \dif u} = \int^{x^*}_{G_M(n-1)} {1 - e^{-H(u)} \dif u}.
\]
Let $y = e^{-H(x)}$, which implies that $\dif u = \prn{H^{\inv}\prn{-log{y}}}' \dif y$. Then,
\begin{align*}
\int^{x^*}_{G_M(n-1)} & {1 - e^{-H(u)} \dif u} = \int^0_{e^{-H(G_M(n-1))}} {(1 - y) \prn{H^{\inv}\prn{-log{y}}}' \dif y} \\
&= \brk{(1 - y) H^{\inv}\prn{-log{y}}}^0_{e^{-H(G_M(n-1))}} + \int^0_{e^{-H(G_M(n-1))}} {H^{\inv}\prn{-log{y}} \dif y} \\
&= x^* - G_M(n-1) \prn{1 - e^{-H(G_M(n-1))}} - \int^{e^{-H(G_M(n-1))}}_0 {H^{\inv}\prn{-log{y}} \dif y} \\
&= x^* - G_M(n-1) \prn{1 - e^{-H(G_M(n-1))}} + \int^{e^{-H(G_M(n-1)}}_0 {x^* - x^* \dif y} - \\
&\qquad - \int^{e^{-H(G_M(n-1))}}_0 {H^{\inv}\prn{-log{y}} \dif y} \\
&= \prn{1 - e^{-H(G_M(n-1))}} \prn{x^* - G_M(n-1)} + \int^{e^{-H(G_M(n-1))}}_0 {\prn{x^* - H^{\inv}\prn{-log{y}}} \dif y}.
\end{align*}
Next, let $z = \f{1}{y}$, which implies $\dif y = -z^{-2} \dif z$, and thus
\begin{align*}
\int^{x^*}_{G_M(n-1)} {1 - e^{-H(u)} \dif u} &= \prn{1 - e^{-H(G_M(n-1))}} \prn{x^* - G_M(n-1)} - \\
&\qquad - \int^\infty_{e^{H(G_M(n-1))}} {z^{-2} \prn{x^* - H^{\inv}\prn{log{z}}} \dif z}.
\end{align*}
Finally, notice that, for $\gamma < 0$, $g(z) = x^* - H^{\inv}\prn{log{z}} \in RV_\gamma$ by \cite{dehaan-ferreira-evt} (Corollary 1.2.10) and also that $1+\gamma < 1$ which, by the general form of Karamata's theorem (\cite{bingham-rv}, Theorem 1.5.11) for $\sigma = -2$, implies that
\begin{align*}
\int^\infty_{e^{H(G_M(n-1))}} {z^{-2} \prn{x^* - H^{\inv}\prn{log{z}}} \dif z} &\approx \frac{e^{-H(G_M(n-1)}\prn{x^* - H^{\inv}\prn{\log{e^{H(G_M(n-1))}}}}}{1 - \gamma} \\
&= \frac{e^{-H(G_M(n-1)}}{1 - \gamma} \prn{x^* - G_M(n-1)}.
\end{align*}
Therefore,
\[
\int^{x^*}_{G_M(n-1)} {F(u) \dif u} = \prn{1 - e^{-H(G_M(n-1)}\prn{1 - \frac{1}{1 - \gamma}}} \prn{x^* - G_M(n-1)}.
\]
\end{proof}

Now, let $\lambda'_M(n) = \frac{x^* - G_M(n)}{x^* - \mu_{n:n}}$. We have
\begin{align*}
\lambda'_M(n) &= \frac{x^* - G_M(n)}{x^* - \mu_{n:n}} \\
&\approx \frac{x^* - G_M(n-1)}{x^* - \mu_{n-1:n-1}} \frac{x^* - \mu_{n-1:n-1}}{x^* - \mu_{n:n}} \prn{1 - e^{-H(G_M(n-1))}\prn{1 - \frac{1}{1 - \gamma}}} \\
&\approx \lambda'_M(n-1) \prn{1 - \frac{\gamma}{n} + \ltlo{\frac{1}{n}}} \prn{1 - e^{-H(G_M(n-1))}\prn{1 - \frac{1}{1 - \gamma}}} \\
&\approx \lambda'_M(n-1) \prn{1 - \frac{\gamma}{n} + \ltlo{\frac{1}{n}}} \prn{1 - e^{-H(\lambda_M(n-1) \mu_{n-1:n-1})}\prn{1 - \frac{1}{1 - \gamma}}},
\end{align*}
where the second equality follows from Lemma~\ref{lem:max-neg-g-int} and the third equality from Lemma~\ref{lem:mu-ratio-approx}. Next, notice that, by \cite{correa-iid}, $G_M(n) \geq 0.745 \mu_{n:n}$, and thus
\[
\lambda'_M(n) - \lambda'_M(n-1) \leq \frac{x^* - 0.745 \mu_{n:n}}{x^* - \mu_{n:n}} - 1 = \frac{0.255 \mu_{n:n}}{x^* - \mu_{n:n}} \leq \frac{0.255 x^*}{x^* - \mu_{1:1}} \prod_{j = 2}^n \prn{1 - \frac{\gamma}{j}} \approx 0,
\]
where the second inequality follows from Lemma~\ref{lem:mu-ratio-approx} and the fact that $\mu_{n:n} \leq x^*$, and the asymptotic equality at the end follows from the fact that $\lim_{n \to \infty} \prod_{j = 2}^n \prn{1 - \frac{\gamma}{j}} = 0$, for any $\gamma \in (0,1)$. Thus,
\begin{align*}
\lambda'_M(n-1) \prn{1 - \prn{1 - \frac{\gamma}{n} + \ltlo{\frac{1}{n}}} \prn{1 - e^{-H(\lambda_M(n-1) \mu_{n-1:n-1})}\prn{1 - \frac{1}{1 - \gamma}}}} \approx 0,
\end{align*}
and thus
\[
e^{-H(\lambda_M(n-1) \mu_{n-1:n-1})}\prn{1 - \frac{1}{1 - \gamma}} \approx \frac{-\gamma}{n},
\]
Therefore,
\[
e^{-H(\lambda_M(n-1) \mu_{n-1:n-1})} \approx \frac{1-\gamma}{n} \iff H(\lambda_M(n-1) \mu_{n-1:n-1}) \approx -\log\prn{\frac{1-\gamma}{n}}.
\]
Taking the inverse of $H$, we obtain
\[
\lambda_M(n-1) \mu_{n-1:n-1} \approx H^{\inv\prn{-\log\prn{\frac{1-\gamma}{n}}}}.
\]
Since $H^{\inv}\prn{-\log{x}} = F^{\inv}(1 - x)$, we obtain
\[
G_M(n-1) \approx F^{\inv}\prn{1 - \frac{1-\gamma}{n}}.
\]
Therefore, by Lemma~\ref{lem:mu-approx}, we get
\[
\frac{x^* - G_M(n)}{x^* - \mu_{n:n}} \approx \frac{x^* - F^{\inv}\prn{1 - \frac{1-\gamma}{n+1}}}{x^* - F^{\inv}\prn{1 - \frac{1}{n}}}.
\]
Finally, using Lemma~\ref{lem:mult-quantile-approx}, we get
\[
\lambda'_M(n) \approx \frac{\prn{1 - \gamma}^{-\gamma}}{\Gamma(1 - \gamma) \prn{1+\frac{1}{n}}^{-\gamma}} \cdot \frac{x^* - F^{\inv}\prn{1 - \frac{1}{n}}}{x^* - F^{\inv}\prn{1 - \frac{1}{n}}} = \frac{\prn{1 - \gamma}^{-\gamma}}{\Gamma(1 - \gamma) \prn{1+\frac{1}{n-1}}^{-\gamma}}.
\]
Since $\lambda'_M(n) \to \lambda'_M$ as $n$ goes to infinity, we have
\[
\lambda'_M = \frac{\prn{1 - \gamma}^{-\gamma}}{\Gamma(1 - \gamma)}.
\]
However, this implies that
\[
\frac{x^*}{\mu_{n:n}} - \frac{G_M(n)}{\mu_{n:n}} \approx \frac{\prn{1-\gamma}^{-\gamma}}{\Gamma(1-\gamma)} \prn{\frac{x^*}{\mu_{n:n}} - 1}.
\]
Notice that for large $n$, the right-hand side goes to $0$, and thus it must be that
\[
\lim_{n \to \infty} {\frac{G_M(n)}{\mu_{n:n}}} = 1.
\]
Therefore, for $\gamma < 0$, we have $\lambda_M = 1$.
\end{proof}

\end{document}